\DeclareMathOperator{\Res}{Res}
\newcommand{\ud}{\,\mathrm{d}}
\renewcommand\Re{\operatorname{Re}}
\renewcommand\Im{\operatorname{Im}}
\def\XXint#1#2#3{{\setbox0=\hbox{$#1{#2#3}{\int}$}
\vcenter{\hbox{$#2#3$}}\kern-.5\wd0}}
\definecolor{green}{RGB}{0,128,0}
\newcommand{\CC}{\mathbb{C}}
\newcommand{\QQ}{\mathbb{Q}}
\newcommand{\RR}{\mathbb{R}}
\newcommand{\ZZ}{\mathbb{Z}}
\newtheorem{thm}{Theorem}[section]
\theoremstyle{definition}
\theoremstyle{remark}
\newtheorem*{rmk}{Remark}
\newtheorem{eg}[thm]{Example}
\newcommand{\BE}{\begin{equation}}                                 
\newcommand{\EE}{\end{equation}}                                   
\newcommand{\BES}{\begin{equation*}}                               
\newcommand{\EES}{\end{equation*}}                                 
\newcommand{\D}{\ensuremath{\,\mathrm{d}}}						   
\def\Eq#1$$#2$${\BE\label#1#2\EE}
\def\eq#1{\eqref{#1}}
\def\ostrut#1#2{\hbox{\vrule height #1pt depth #2pt width 0pt}}
 \def\ustrut#1{\ostrut{#1}0}
\newdimen\eqjot \eqjot = 1\jot
\def\openupeq{\openup \the\eqjot}
\def\addtab#1={#1\;&=}
\def\ezeq#1#2#3{{\def\\{\cr#1}\vcenter{\openupeq \halign{$\displaystyle 
  \hfil##$&$\displaystyle##\hfil$&&\hskip#2pt$\displaystyle##\hfil
    $\cr#1#3\cr}}}}
\def\eaeq{\ezeq\addtab}
\def\eeq{\eaeq{20}}  
\def\cthn#1#2{{}\cr&\hskip#1pt{}#2}
\def\qaeq#1#2{{\def\\{&}\vcenter{\openupeq\halign{$\displaystyle
  ##\hfil$&&\hskip#1pt$\displaystyle##\hfil$\cr #2\cr}}}}
\def\req{\qaeq{30}} \def\qxeq{\qaeq{30}} 
\def\qeq{\qaeq{20}} 
\def\xeq{\qaeq{10}}  
\def\pa#1{(#1)}
\def\Pa#1{\left (\,{#1}\,\right )}
\def\bbk#1{\bigl [\,{#1}\,\bigr ]}
\def\fracz#1#2{#1/#2} \def\fracp#1#2{#1/(#2)}
\def\:{\mskip2mu}
\def\pii{\:\pi}
\def\i{\:{\rm i}\:}
\def\ip#1#2{\langle \,{#1}\,\ipsep{#2}\,\rangle}
\def\ipsep{;}
\let\udsty\displaystyle
\def\rbox#1{\hbox{\rm #1}}
\def\roq#1{\qquad \rbox{#1}\qquad }
\def\ZD{Z_ \Delta } \def\ZDp{\ZD^+} \def\ZDm{\ZD^-}
\def\L{{\rm L}}
\def\kap{\kappa}  
\def\alp{\alpha}
\def\hexnumber#1{\ifcase#1 0\or1\or2\or3\or4\or5\or6\or7\or8\or9\or
 A\or B\or C\or D\or E\or F\fi}
\edef\msbhx{\hexnumber\symAMSb}   
\mathchardef\emptyset="0\msbhx3F
\begin{document}

\title{Revivals and Fractalisation\\ in the Linear Free Space Schr\"odinger Equation}

\date{\today}
\author{Peter J. Olver$^1$, Natalie E. Sheils$^1$, David A. Smith$^2$ \\
\footnotesize 1. School of Mathematics, University of Minnesota, Minneapolis, MN 55455\\
\footnotesize 2. Division of Science, Yale-NUS College, 16 College Avenue West, \#01-220 138527, Singapore
}
\maketitle

\begin{abstract}
We consider the one-dimensional linear free space Schr\"odinger equation on a bounded interval subject to homogeneous linear boundary conditions.
We prove that, in the case of pseudoperiodic boundary conditions, the solution of the initial-boundary value problem exhibits the phenomenon of revival at specific (``rational'') times, meaning that it is a linear combination of a certain number of copies of the initial datum. 
Equivalently, the fundamental solution at these times is a finite linear combination of delta functions. At other (``irrational'') times, for suitably rough initial data, e.g., a step or more general piecewise constant function, the solution exhibits a continuous but fractal-like profile. 
Further, we express the solution for general homogenous linear boundary conditions in terms of numerically computable eigenfunctions.
Alternative solution formulas are derived using the Uniform Transform Method (UTM), that can prove useful in more general situations.
{We then investigate the effects of general linear boundary conditions, including Robin, and find novel ``dissipative'' revivals in the case of energy decreasing conditions.} 
\end{abstract}


\section{Introduction}


{The one-dimensional linear free space Schr\"odinger equation
\Eq{se} 
$$\i u_t = u_{xx} $$
is arguably the simplest dispersive (complex-valued) partial differential equation, possessing a quadratic dispersion relation $\omega(\kappa)=\kappa^2$ that relates the wave number (spatial frequency) $\kappa $ to the temporal frequency $\omega $ \cite{Whitham}. 
It arises as the most basic equation in Schr\"odinger's mathematical formalism of quantum mechanics~\cite{Schrodinger}. 
The term ``free space'' refers to the fact that it models quantum mechanical effects in a one-dimensional flat, empty space, i.e., with zero potential. 
For simplicity, from hereon we will usually refer to \eq{se} as the ``linear Schr\"odinger equation.'' 
It also arises as the linearisation of a variety of important nonlinear equations, most notably the (generalised)} nonlinear Schr\"odinger (NLS) equation
\Eq{NLS} 
$$\i u_t(t,x)+u_{xx}(t,x) \pm |u(t,x)|^p u(t,x)=0,$$ 
in which the particular case $p=2$ is a well-known integrable partial differential equation supporting soliton solutions \cite{DrazinJohnson}. 
As such, the NLS equation is important in a large variety of application areas --- in particular, whenever the modulation of nonlinear wave trains is considered. 
Indeed, it has been derived in such diverse fields as waves in deep water~\cite{Zak}, plasma physics~\cite{Zak2}, nonlinear fibre optics~\cite{HasegawaTappert1, HasegawaTappert2}, magneto-static spin waves~\cite{ZvezdinPopkov}, and many other settings. 
The linear Schr\"odinger equation \eq{se} describes the behaviour of solutions of the NLS equation \eq{NLS} in the small amplitude limit, and hence a complete understanding of its dynamics is essential to the analysis of the more complicated nonlinear problem.

In the early 1990's, Michael Berry and collaborators~\cite{Berry,BerryKlein,BerryMarzoliSchleich} discovered the remarkable dynamics exhibited by solutions to the free space linear Schr\"odinger equation on circular domains, which they named after a 1835 optical experiment of the Victorian scientist William Henry Fox Talbot~\cite{Talbot}. 
The \emph{Talbot effect} is manifested in the quantum mechanical setting through the behaviour of rough solutions when subject to periodic boundary conditions. 
The evolution of a {piecewise constant} initial profile, e.g., a step function, ``fractalises'' into a continuous profile having a fractal form and dimension at irrational times (relative to the circumference of the circle) but ``quantises'' into other piecewise constant profiles at rational times; see also the introductory quantum mechanics text by Thaller, \cite{Thaller}. 
As shown in \cite{Olver}, similar phenomena appear in periodic solutions to linear evolution equations with ``integral'' polynomial dispersion relations, including the linearised Korteweg--deVries equation, as well as (integro-)differential equations with asymptotically polynomial dispersion relation, including the linearised Benjamin-Ono and integrable Boussinesq equations, although in the latter cases the quantised solutions are piecewise smooth but non-constant between cusps.
 The precise mathematical characterization of these cusped profiles remains mysterious. In \cite{ChenOlver}, these phenomena were shown to extend to nonlinear dispersive wave equations on periodic domains, both integrable and non-integrable, including the generalised NLS and Korteweg--deVries equations.

Another important manifestation of the Talbot effect is the phenomenon of \emph{revival}, which means that, at a rational time, the fundamental solution, i.e., that induced by a delta function initial condition, localises into a finite linear combination of delta functions.
This has the striking consequence that the solution to \emph{any} initial value problem at rational times is a finite linear combination of translates of the initial data and hence its value at any point on the periodic domain depends only upon finitely many of the initial values!
The term ``revival'' is based on the experimentally observed phenomenon of \emph{quantum revival}~\cite{BerryMarzoliSchleich, VrakkingVilleneuveStolow, YeazellStroud} in which an electron initially concentrated at a single location of its orbital shell is, at rational times, re-concentrated at a finite number of orbital locations. 

To date, these investigations have concentrated on periodic boundary conditions, and the main goal of the present paper is to move beyond the periodic cases and ascertain to what extent these phenomena depend upon the underlying boundary conditions.
We will fix our attention on the free space linear Schr\"odinger equation, leaving investigation of other models, including the linear Korteweg--deVries equation and nonlinear counterparts of both, to future research.

Note first that, for the second order Schr\"odinger equation,  Dirichlet, Neumann, or mixed boundary conditions can be recast as periodic boundary conditions by suitably doubling or quadrupling the length of the interval, and thus will exhibit similar revivals and fractalization. 
In such cases, the resulting periodic initial data is constructed from both shifts (translates) and reflections, under $x \mapsto -\:x$, of the original initial datum; hence the revival in such cases will also involve shifts and reflections. 
To move beyond such situations, we extend the analysis to more general types of boundary conditions. 
We are mostly concerned with pseudoperiodic boundary conditions, and show that the solutions to such problems exhibit the revival/fractalization  rational/irrational dichotomy, the rationality of the time depending on the length of the interval; see equation \eq{rattime}. 

The penultimate section will further extend the analysis to general homogeneous linear boundary conditions, which include Robin conditions.
Here we observe a range of interesting phenomena that depend upon the precise form of the boundary conditions. In particular, conservation of the energy ($L^2$ norm) appears to be required for revival and fractalization. 
Energy-dissipating boundary conditions lead to a novel form of revival, in which the translated and, possibly, reflected versions of the initial conditions are subjected to a dissipative-like decay in magnitude before being combined to form the solution at rational times. 
In particular, {by studying numerical experiments, it appears} that piecewise constant and piecewise linear initial data produces piecewise linear solutions at rational times of decreasing overall magnitude.
On the other hand, unstable and ill-posed systems produce exponentially growing modes that rapidly dominate the solution dynamics.

In the future, we plan to extend our analysis to the linear Schr\"odinger equation with a nonzero potential, which has not been investigated at all so far, even in the periodic case.
Furthermore, in accordance with the numerical results in \cite{ChenOlver}, we expect a similar range of phenomena to extend to the generalised NLS equation, both integrable and nonintegrable. 
Further investigations of the effect of boundary conditions on the solutions to the linearized and nonlinear Korteweg--deVries equation, and other dispersive equations with non-polynomial dispersion relations will also be of great interest.

\begin{rmk} This paper includes still shots of a variety of solutions at selected times. Julia code for generating the movies, which are even more enlightening, can be found on the web page:
\begin{center}\url{http://www.math.umn.edu/~olver/lseq.html}\end{center}
\end{rmk}

\section{The Linear Schr\"odinger Equation}

Most of the paper will be concerned with the linear Schr\"odinger equation subject to what we call \emph{pseudoperiodic boundary conditions} on an interval $0 \leq x \leq L$: 
\begin{equation}\label{eqn:ls_pp}
\eeq{  \i u_t+u_{xx}=0, &\ \,\beta_0 u(t,0)= u(t,L),
   \\
 u(0,x)=u_0(x), &
  \beta_1 u_x(t,0)=u_x(t,L),} 
  \qquad (t,x) \in[0,T]\times [0,L].
\end{equation}
Here $\beta_0,\beta_1$ are (possibly complex) constants. 
The previously studied case of periodic boundary conditions corresponds to $\beta_0=\beta_1=1$.

We apply generalised eigenfunction analysis to construct solution formulae and understand their resulting dynamical behaviour. 
As shown below, the eigenvalues $\lambda _j = -\i\kappa_j^2$
are determined by the vanishing of the \emph{discriminant} of the quasiperiodic problem, which will be seen to take the following form:
\Eq{discriminant}
$$\eeq{\Delta(\kappa )=e^{-\i\kappa L}(\beta_1+\beta_0)+e^{\i\kappa L}(\beta_1+\beta_0)-2(1+\beta_0\beta_1),\\
=2(\beta_0+\beta_1)\cos(\kappa L)-2(1+\beta_0\beta_1).}$$
Observe that the zeros of the discriminant are given by
\Eq{ZD} 
$$\ZD = \{\kappa\in\CC:\Delta(\kappa)=0\} = \{\kappa_j,-\kappa_j:j\in\ZZ\},$$
 where
\BE \label{eqn:Lambdaj.gen}
	\kappa_j = \kappa_0 + \frac{2j\pi}{L}, \roq{with} \kappa_0 = \frac{1}{L}\arccos\left( \frac{1+\beta_0\beta_1}{\beta_0+\beta_1} \right).
\EE
For simplicity of presentation, we will impose the further restriction that $\kappa_0 \neq j\pii/L$ for $j \in \ZZ$, which implies that the zeros of $\Delta$ are all simple and that $0\not \in \ZD$.
In this case, we decompose the discriminant locus into
\Eq{ZDpm}
$$\ZD = \ZDp \cup \ZDm $$
where
$$
\ZDp=\{\kappa\in\ZD:\Re(\kappa)>0\} ,\qquad \ZDm=\{\kappa\in\ZD:\Re(\kappa)<0\}.
$$

Given the discriminant roots \eq{eqn:Lambdaj.gen}, define
\Eq{gamma}
$$	\gamma  =e^{\i \kappa_jL} = e^{\i\kappa_0L}
=\frac{1+\beta_0\beta_1}{\beta_0+\beta_1} +\i\, \sqrt{1-\left(\frac{1+\beta_0\beta_1}{\beta_0+\beta_1}\right)^2} \ =\frac{1+\beta_0\beta_1+\i\, \sqrt{(\beta _1^2 -1) (1 - \beta_0^2)}}{\beta_0+\beta_1}\,,
$$
which is a complex number that is independent of $j\in\ZZ$.
Formula \eq{gamma} is valid  for all $\beta_0,\beta_1\in\CC$ such that 
$$\frac{1+\beta_0\beta_1}{\beta_0+\beta_1}\in\CC \setminus (-\infty,-1) \cup (1,\infty).$$
For values of $\beta_0,\beta_1$ on the branch cut, similar formulas for $\gamma$ are provided by (4.23.24) and (4.23.25) in~\cite{NIST}. 
Note that the definition of $\gamma$ implies
\Eq{gammaeq}
$$\frac{\gamma-\beta_0}{\beta_0-\gamma^{-1}}=\frac{\gamma-\beta_1}{\gamma^{-1}-\beta_1}$$ 
which we will use extensively below. 
Finally, for later use, we let
\Eq{deltatau}
$$\req{\delta =\frac{(\beta_0-\beta_1)}{\gamma((\beta_0+\beta_1)\gamma-2)},\\
\tau = \frac{(1+\beta_0\beta_1)(\gamma^2+1)-2\gamma(\beta_0+\beta_1)}{(\beta_0\gamma-1)(\beta_1\gamma-1)} .}$$


The initial-boundary value problem \eqref{eqn:ls_pp} is well-posed if and only if the zeros of $\Delta$ are all real, i.e., provided $\kappa_0 \in \RR$ is a real constant, and this will be assumed throughout.
(Indeed, if $\kappa_0$ is complex, then the system has an infinite number of unstable modes, and is hence ill-posed.)
If $\beta_0,\beta_1\in\RR$, the reality requirement reduces to $\lvert1+\beta_0\beta_1\rvert\leq \lvert\beta_0+\beta_1\rvert$.
In particular, the zeros are all real when the problem is self-adjoint with respect to the $\L^2$ inner product
\Eq{ip}
$$\ip fg = \frac{1}{L}\int_0^L f(x)\,g(x)\ud x,$$
which occurs if and only if
\Eq{ecbeta} 
$$\overline{\beta_0}\beta_1=1. $$
If this requirement is satisfied, the $L^2$ norm of the solution is constant, and we call the boundary conditions \emph{energy conserving}.


Given an ${\rm L}^2$ function $\phi \colon \RR \to \CC$, we denote its Fourier transform by
\Eq{Ftr}
$$	\widehat{\phi}(\kappa ) = \mathcal{F}[\phi](\kappa) = \int_{-\: \infty } ^\infty e^{-\i \kappa \: x}\phi(x)\ud x .$$
For a function $f \colon [0,L] \to \CC$ defined on a bounded interval, one extends it to vanish outside the interval, leading to its (bounded) Fourier transform
\Eq{FtrL}
$$	\widehat{f}(\kappa )=\mathcal{F}[f](\kappa) = \int_0^L e^{-\i \kappa \: x}f(x)\ud x .$$


\subsection{Revival at rational times}

For an initial-boundary value problem on an interval of length $L$, a time $t>0$  will be designated as \emph{rational} if 
\Eq{rattime}
$$t = \frac{L^2}{4\pii}\, r, \roq{where} r = \frac pq \in \QQ^+.$$
Note that one can also treat negative times by the simple discrete symmetry $ t \mapsto -t, \ u \mapsto \bar u$.
We always take the positive integers $p,q \in \ZZ^+$ to have no common factors. 
The extra factor of $4$ is introduced for later convenience, and is explained in detail in the following section.
Our main result, which follows from Theorem~\ref{thm:ShiftRep.uGenvGen} below, establishes the existence of \emph{revivals} for solutions $u(t,x)$ to the initial-boundary value problem \eqref{eqn:ls_pp} in which the solution profile at rational times is a finite linear combination of certain shifts (translations) and reflections (under $x \mapsto -\:x$) of the piecewise smooth initial datum $u_0(x) = u(0,x)$.

Given a rational time as in \eqref{rattime}, let
\Eq{alpha}
$$\alp = \exp(\pii\i /q)$$ 
be the primitive $2\:q$\textsuperscript{th} root of unity. 
Define the piecewise smooth functions $\phi, \psi \colon [0,L] \to \CC$ to be the following linear combinations of translates of the initial datum $u_0(x)$ on smaller subintervals. 
Given $x$ such that 
$$\Pa{1-\frac{\ell}{q}}L \leq x \leq \Pa{1-\frac{\ell-1}{q}}L, \roq{for some} \ell\in\{1,2,\ldots,q\},$$
we set 
\begin{subequations}\label{eqn:phipsi}
\begin{equation}
\begin{split}
		\phi(x)
			=\frac 1q\sum_{m=0}^{\ell-1} \gamma^{ -\frac{m}{q}} u_0\left(x+\frac{Lm}{q} \right) \sum_{n=0}^{q-1} \alpha^{-2nm-pn^2}
			+  \frac 1q\sum_{m=\ell}^{q-1} \gamma^{1-\frac{m}{q}} u_0\left(x+\frac{Lm}{q}-L\right) \sum_{n=0}^{q-1} \alpha^{-2nm-pn^2}
\end{split}
\end{equation}
\begin{equation}
\begin{split}
		\psi(x)
			=\frac 1q\sum_{m=0}^{\ell-1} \gamma^{\frac{m}{q} } u_0\left(x+\frac{Lm}{q} \right) \sum_{n=0}^{q-1} \alpha^{2nm-pn^2}+ \frac 1q\sum_{m=\ell}^{q-1} \gamma^{\frac{m}{q}-1} u_0\left(x+\frac{Lm}{q}-L\right) \sum_{n=0}^{q-1} \alpha^{2nm-pn^2}.
\end{split}
\end{equation}
\end{subequations}
When $\ell=q$, the above sums over $m$ from $\ell$ to $q-1$ are empty. 
Observe that the resulting functions $\phi(x), \psi (x)$ are well defined at each point $x \in [0,L]$, since their values at the common endpoints of these intervals match.
Let $\ustrut{12}\widehat{\phi}(\kappa ), \widehat{\psi}(\kappa )$, denote their bounded Fourier transforms, as in \eq{FtrL}.
With these definitions in hand, we are now able to state the main theorem characterizing revivals in the pseudoperiodic initial-boundary value problem for the linear Schr\"odinger equation.

\begin{thm} \label{thm:ShiftRep.uGenvGen}
	Given piecewise smooth initial data $u_0\colon [0,L]\to\CC$, the solution $u(t,x)$ to the initial-boundary value problem~\eqref{eqn:ls_pp} at rational time $t=L^2\:p/(4\pii\:q)$ is given by
\begin{equation}\label{eqn:F=q.withsc}
    u\left(\frac{L^2\: p}{4\pii\: q},x \right) = e^{\i {\kappa_0^2\:L^2\:p}/\pa{4\pii\: q}}F\left(x;-\,\frac{\kappa_0^2\:L^2\:p}{2\pii\: q}\right),
  \end{equation}
where
\BE \label{eqn:Analytic.F.looks.like.q.gen}
	F(x;s)= \frac{1}{L} \sum_{j\in\ZZ}e^{\i\kappa_js} \left[\frac{ \left((\beta_0+\beta_1)\gamma-2\right)e^{\i \kappa_j x} + (\beta_1-\beta_0)\gamma \,e^{-\i \kappa_j x}}{(\beta_0+\beta_1)(\gamma-\gamma^{-1})} \right]\left[ \widehat{\phi}(\kappa_j) +\delta \,\widehat{\psi}(-\kappa_j)\right],
\EE
with $\gamma, \delta, \phi$, and $\psi$ defined in \eq{gamma}, \eq{deltatau}, and \eqref{eqn:phipsi}, respectively.
\end{thm}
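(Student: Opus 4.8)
The plan is to take the generalised eigenfunction expansion of the solution of \eqref{eqn:ls_pp} established above and specialise it to a rational time, the revival structure then emerging from a finite Fourier (quadratic Gauss sum) identity. Under the standing assumptions the spatial operator $\partial_x^2$ with the pseudoperiodic boundary conditions has simple eigenvalues $\la_j=-\i\kappa_j^2$, the one-dimensional eigenspace for $\kappa_j$ being spanned by the null vector of the $2\times2$ boundary-condition system whose determinant is $\Delta$; after the normalisation forced by the pairing with the adjoint eigenfunctions this eigenfunction reads
\[
  E_j(x)=\frac{\bigl((\beta_0+\beta_1)\gamma-2\bigr)e^{\i\kappa_j x}+(\beta_1-\beta_0)\gamma\,e^{-\i\kappa_j x}}{(\beta_0+\beta_1)(\gamma-\gamma^{-1})}.
\]
Relation \eq{gammaeq} together with the constants \eq{deltatau} is what forces the adjoint pairing to collapse to $\widehat{u_0}(\kappa_j)+\delta\,\widehat{u_0}(-\kappa_j)$. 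The starting point is then the representation
\[
  u(t,x)=\frac1L\sum_{j\in\ZZ}e^{-\i\kappa_j^2 t}\,E_j(x)\,\bigl[\,\widehat{u_0}(\kappa_j)+\delta\,\widehat{u_0}(-\kappa_j)\,\bigr],
\]
valid for every $t$.

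First I substitute the rational time $t=L^2p/(4\pii q)$ and expand $\kappa_j=\kappa_0+2j\pii/L$, so that
\[
  \kappa_j^2\,t=\frac{\kappa_0^2L^2p}{4\pii q}+\frac{\kappa_0Lp}{q}\,j+\frac{\pii p}{q}\,j^2 .
\]
Hence $e^{-\i\kappa_j^2 t}$ factors as a $j$-independent scalar --- which, combined with the constant produced on re-expressing the linear term through $\kappa_j$, is the scalar prefactor in \eqref{eqn:F=q.withsc} --- times a phase linear in $\kappa_j$, which is exactly the factor $e^{\i\kappa_j s}$ occurring in the definition \eqref{eqn:Analytic.F.looks.like.q.gen} of $F$ with $s$ the value displayed in \eqref{eqn:F=q.withsc}, times the quadratic Gauss factor $e^{-\i\pii pj^2/q}=\alpha^{-pj^2}$, where $\alpha=\exp(\pii\i/q)$. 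The first two factors are inert; everything is in what $\alpha^{-pj^2}$ does to the Fourier data.

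The heart of the proof is therefore the identity $\alpha^{-pj^2}\widehat{u_0}(\kappa_j)=\widehat{\phi}(\kappa_j)$ and $\alpha^{-pj^2}\widehat{u_0}(-\kappa_j)=\widehat{\psi}(-\kappa_j)$ for all $j\in\ZZ$, with $\phi,\psi$ the functions \eqref{eqn:phipsi}. I establish it by computing $\widehat{\phi}(\kappa_j)$ directly from \eqref{eqn:phipsi}: split $[0,L]$ into the $q$ subintervals $I_\ell=\bigl[(1-\ell/q)L,(1-(\ell-1)/q)L\bigr]$, write $\widehat{u_0}(\kappa_j)=\sum_{\ell=1}^q\int_{I_\ell}u_0(y)e^{-\i\kappa_j y}\ud y$, and change variables in each piece so that the translated argument of $u_0$ in \eqref{eqn:phipsi} becomes the integration variable. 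Since $\gamma=e^{\i\kappa_0L}$ has modulus one (because $\kappa_0\in\RR$), the fractional-translation phases collapse via $\gamma^{\mp m/q}e^{\pm\i\kappa_j Lm/q}=\alpha^{\pm2jm}$ --- which is the origin of the weights $\gamma^{\pm m/q}$ and $\gamma^{\pm m/q-1}$ in \eqref{eqn:phipsi} --- and, after reindexing so that for each target subinterval the two $m$-sums recombine into a complete residue system modulo $q$, the orthogonality $\tfrac1q\sum_{m=0}^{q-1}e^{2\pii\i m(j-n)/q}$, equal to $1$ when $q\mid(j-n)$ and $0$ otherwise, collapses the sum (on reducing the Gauss factor modulo $2q$) to $\alpha^{-pj^2}\widehat{u_0}(\kappa_j)$. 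The identity for $\psi$ is identical except for the sign on the $2nm$ term, which corresponds to replacing $\kappa_j$ by $-\kappa_j$. The same computation shows that the piecewise formulas \eqref{eqn:phipsi} for $\phi$ and $\psi$ match at the shared subinterval endpoints, so these functions are indeed well defined. Feeding $\widehat{\phi},\widehat{\psi}$ back into the expansion and pulling out the prefactor produces exactly \eqref{eqn:F=q.withsc}--\eqref{eqn:Analytic.F.looks.like.q.gen}.

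I expect the bookkeeping of this last step to be the main obstacle: the reindexing of the $m$-sum over the $q$ subintervals must be arranged so that the two ranges glue to a full residue system for \emph{every} target subinterval, and the arithmetic of the $2q$\textsuperscript{th} roots of unity has to be tracked with care --- in particular the consistent choice of the branch $\gamma^{1/q}=e^{\i\kappa_0L/q}$, and the role of the parity of $pq$ in identifying $\alpha^{-pj^2}$ with a function of $j\bmod q$. A second, more routine point --- which I would settle by invoking the well-posedness and completeness already in hand together with standard Fourier estimates --- is to check that the term-by-term rearrangements above are legitimate and that the eigenfunction series genuinely represents the solution when $u_0$ is only piecewise smooth.
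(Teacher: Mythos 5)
Your proposal is correct and follows essentially the same route as the paper: specialise the generalised eigenfunction series \eqref{eqn:seriessoln} to the rational time, split $\kappa_j^2 t$ into constant, linear and quadratic parts in $j$, and absorb the Gauss factor $\alpha^{-pj^2}$ via the identities $\widehat{\phi}(\kappa_j)=\alpha^{-pj^2}\widehat{u_0}(\kappa_j)$, $\widehat{\psi}(-\kappa_j)=\alpha^{-pj^2}\widehat{u_0}(-\kappa_j)$, proved exactly as you describe by subdividing $[0,L]$ into $q$ subintervals, changing variables, and invoking the orthogonality of the $q$\textsuperscript{th} roots of unity. The parity issue you flag is real and is handled in the paper by the modified functions \eqref{phipsiodd} when $p$ and $q$ are both odd.
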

 
As we will see, the terms in \eq{eqn:F=q.withsc} involving $\widehat{\phi}(\kappa_j)$ will correspond to shifts of the initial data, while those involving $\widehat{\psi}(-\kappa_j)$ correspond to reflections thereof. 
Note if the linear Schr\"{o}dinger equation is supplied with $C^2$ initial datum which satisfies the pseudoperiodic boundary conditions, then it may be expected that the solution remains $C^2$ for all time.
  However, our characterisation in Theorem~\ref{thm:ShiftRep.uGenvGen} in terms of shifts of the initial datum may initially appear to contradict this.
  In fact, the precise shift operator employed, including using $\gamma$ to shift the phase at the interface, ensures that $\phi,\psi,F$ are all $C^2$.
Further interpretation of this result, both analytical and numerical, appears below.

\subsection{Solution representation --- generalised Fourier series} \label{sec:DeriveSeriesRep}

In this subsection, we solve our initial-boundary value problem~\eqref{eqn:ls_pp} by use of a generalised eigenfunction series~\cite{OlverPDEs}
\begin{equation}\label{eqn:series}
u(t,x)=\sum_{j\in\ZZ}c_je^{-\i\kappa_j^2t}X_j(x).
\end{equation}
Here, $-\i\kappa_j^2$ are the eigenvalues whose corresponding eigenfunctions $X_j(x)$ solve the pseudoperiodic boundary value problem
\Eq{Xjeq}
$$\req{X_j''(x)+\kappa_j^2X_j(x)=0,\\
\beta_0X_j(0)=X_j(L),\\
\beta_1X'_j(0)=X'_j(L).}$$
A straightforward calculation proves that \eq{Xjeq} admits a nonzero eigensolution of the form 
\Eq{Xj}
$$\eeq{X_j(x)=e^{\i\kappa_j x}+\frac{\gamma-\beta_0}{\beta_0-\gamma^{-1}}\,e^{-\i\kappa_jx}
=e^{\i\kappa_j x}+\frac{\gamma-\beta_1}{\gamma^{-1}-\beta_1}\,e^{-\i\kappa_jx},}$$
with $\kappa_j > 0$ if and only if $\kappa_j \in \ZDp$ belongs to the zero locus of the discriminant. 

Since the boundary value problem \eq{Xjeq} is not, in general, self-adjoint, the eigenfunctions $X_j$ on their own do not form an orthogonal system under the $\L^2$ inner product. 
However, when paired with the \emph{dual eigenfunctions} 
\Eq{Yk}
$$\eeq{Y_k(x)=e^{\i\overline{\kappa_k} x}+\frac{\gamma-1/\overline{\beta_1}}{1/\overline{\beta_1}-\gamma^{-1}}\,e^{-\i\overline{\kappa_k}x}=e^{\i\kappa_k x}+\frac{\gamma-1/\overline{\beta_0}}{\gamma^{-1}-1/\overline{\beta_0}}\,e^{-\i \kappa_kx}}$$
of the adjoint problem 
\Eq{Ykeq}
$${\req{Y_k''(x)+\overline{\kappa}_k^2Y_k(x)=0,\\
Y_k(0)=\overline{\beta_1}Y_k(L),\\
Y_k'(0)=\overline{\beta_0}Y_k'(L),}}$$
 they form a biorthogonal system.
 That is,
$$\langle X_j,Y_k\rangle=\left\{\begin{array}{lll}0,&&j\neq k,\\ 
\tau,&&j=k,\end{array}\right.$$
where $\tau $ is defined in \eq{deltatau}.

Evaluating the eigenfunction series for the solution~\eqref{eqn:series} at $t=0$ and taking the inner product of both sides with $Y_k$ we find the formulas for the Fourier coefficients:
\begin{align}
c_j={}&\frac{\langle u_0(x),Y_j(x)\rangle}{\langle X_j(x),Y_j(x)\rangle} = \frac{1}{\tau\:L}
\left[\>\widehat{u_0}(\kappa_j)+\frac{(\beta_1-\gamma)}{\gamma(1-\gamma\beta_1)}\,\widehat{u_0}(-\kappa_j)\>\right].
\end{align}
We conclude that the solution to the initial-boundary value problem is given by 
\Eq{eqn:seriessolnx}
$$u(t,x)=\frac{1}{\tau\:L}\sum_{j\in\ZZ}e^{-\i\kappa_j^2t}\left[\,e^{\i\kappa_jx}+\frac{(\gamma-\beta_0)e^{-\i\kappa_jx}}{\beta_0-\gamma^{-1}}\,\right]\left[\>\widehat{u_0}(\kappa_j)+\frac{(\beta_1-\gamma)}{\gamma(1-\gamma\beta_1)}\widehat{u_0}(-\kappa_j)\>\right].$$

\subsection{Solution representation --- Unified Transform Method} \label{sec:DeriveUTMRep}

A powerful alternative for deriving useful representations for solutions to a broad range of linear (and integrable nonlinear) initial-boundary value problems is the Unified Transform Method (often abbreviated UTM), which is also known as the Fokas Method~\cite{DeconinckTrogdonVasan, FokasBook, FokasPelloni4}.
In this subsection, we apply the UTM to derive alternative explicit solution formulae for the problem in hand.

Assuming existence of a solution, the first step in the UTM is to multiply the differential equation by an exponential solution to the adjoint equation and expressing the result as a divergence, leading to the so-called \emph{local relations}. 
In our case these take the form
\begin{equation}\label{ls_lr}
  \bbk{e^{-\i\kappa\:x+\omega\:t}u}_t-\bbk{e^{-\i\kappa\:x+\omega(\kappa)\:t}(\i u_x-\kappa \:u)}_x=0,
\end{equation}
with $\omega(\kappa )=\i\kappa ^2$.
Integrating the local relation~\eqref{ls_lr} over the strip $[0,L]\times [0,T]$ and applying Green's Theorem produces what are known as the \emph{global relations}:
\begin{equation}\label{ls_gr}
  0=\widehat{u_0}(\kappa )-e^{\omega T}\widehat{u}(\kappa ,T)+e^{-\i\kappa L}\bbk{\i h_1(T,\omega )-\kappa h_0(T,\omega )}-\bbk{\i g_1(T,\omega )-\kappa\: g_0(T,\omega )},
\end{equation}
where
$$\req{\widehat{u}(t,\kappa )=\int_0^L e^{-\i\kappa\:x}u(t,x)\ud x,\\
\widehat{u_0}(\kappa )=\int_0^L e^{-\i\kappa\:x}u_0(x)\ud x,}$$
are the Fourier transforms \eq{FtrL} of the solution and the initial data, respectively, while
\Eq{gh}
$$\xeq{g_j(t,\omega )=\int_0^t e^{\omega s}\,\partial_x^j u(s,x)\Big|_{x=0} \ud s,\\
h_j(t,\omega )=\int_0^t e^{\omega s}\,\partial_x^j u(s,x)\Big|_{x=L} \ud s,\\ t\in[0,T], \\ j=0,1,}$$
are time transforms of the solution evaluated at the boundary.
Formula~\eqref{ls_gr} is valid for all $\kappa\in\CC$; moreover, each term is entire in $\kap$.
Because the functions \eq{gh}
depend on $\kap$ only through the dispersion relation $\omega=\i\kappa^2$, we can create a second equation also valid for all $\kappa\in\CC$ using the {symmetry} transformation $\kappa\longmapsto-\kap$, namely,
\begin{equation}\label{ls_gr_neg}
  0=\widehat{u_0}(-\kappa )-e^{\omega T}\widehat{u}(-\kappa ,T)+e^{\i\kappa L}\bbk{\i h_1(T,\omega )+\kappa h_0(T,\omega )}-\bbk{\i g_1(T,\omega )+\kappa g_0(T,\omega )}.
\end{equation}
Rearranging, evaluating at $T=t$, and inverting the Fourier transform in~\eqref{ls_gr}, we find
\begin{equation}\label{ls_badsoln}
\eeq{  u(t,x)= \frac{1}{2\pii}\int_{-\infty}^\infty e^{\i\kappa\:x-\i\kappa ^2 t}\,\widehat{u_0}(\kappa )\ud \kappa +\frac{1}{2\pii}\int_{-\infty}^\infty e^{\i\kappa (x-L)-\i\kappa ^2 t}\,\bbk{\i h_1(t,\i\kappa ^2)-\kappa h_0(t,\i\kappa ^2)}\ud \kappa \cthn{150} -\frac{1}{2\pii}\int_{-\infty}^\infty e^{\i\kappa\:x-\i\kappa ^2 t}\,\bbk{\i g_1(t,\i\kappa ^2)-\kappa g_0(t,\i\kappa ^2)}\ud \kap.}
\end{equation}
Here, the sum of all three integrals should be understood in the sense of a single Cauchy principal value integral.
So far, equation \eqref{ls_badsoln} is not an effective solution formula because it involves all the boundary data, and hence is not well determined by the imposed boundary conditions. 

\begin{figure}[h!]
  \centering
  \def\svgwidth{.45\textwidth}
  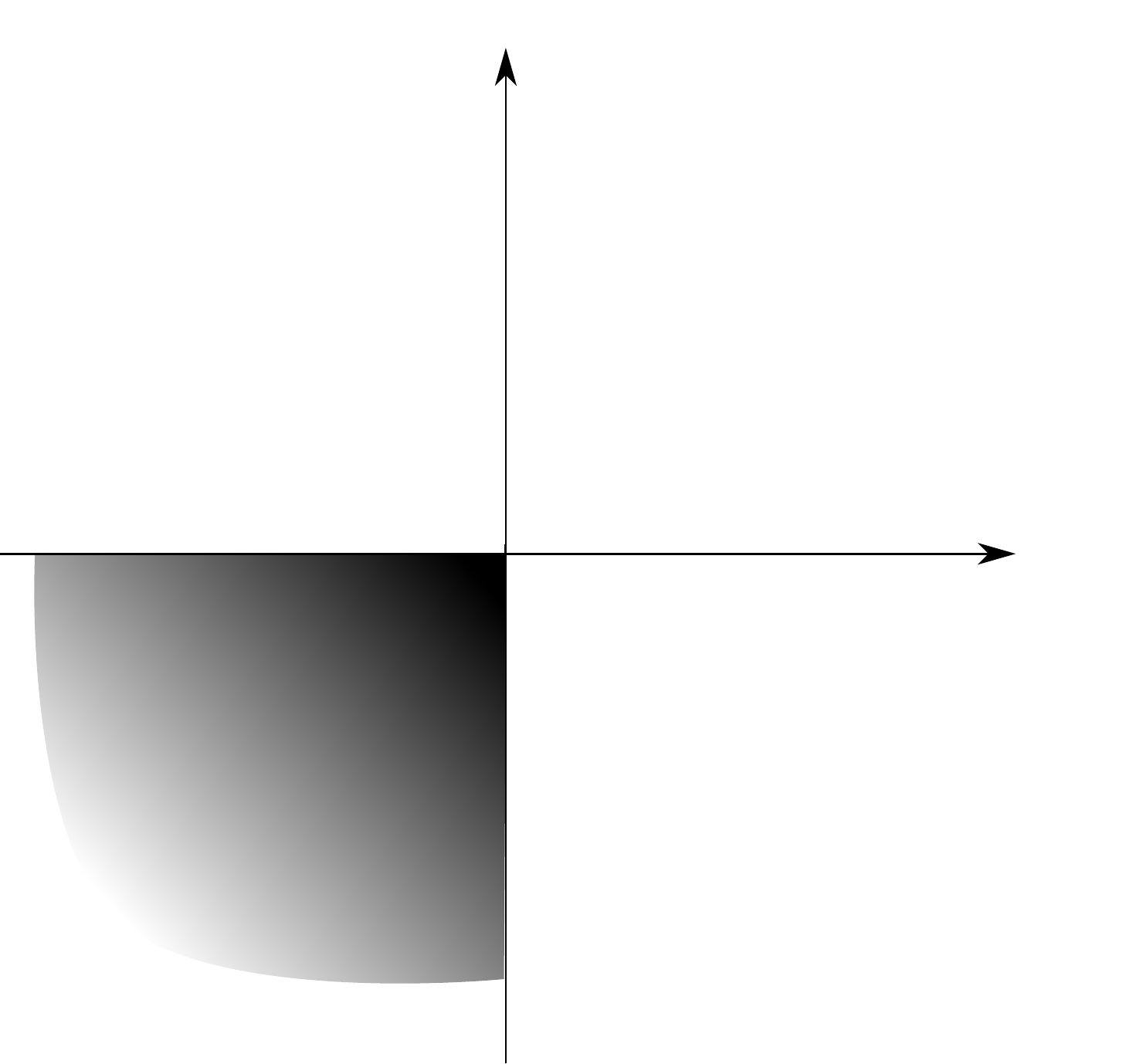
  \hfill
  \def\svgwidth{.45\textwidth}
  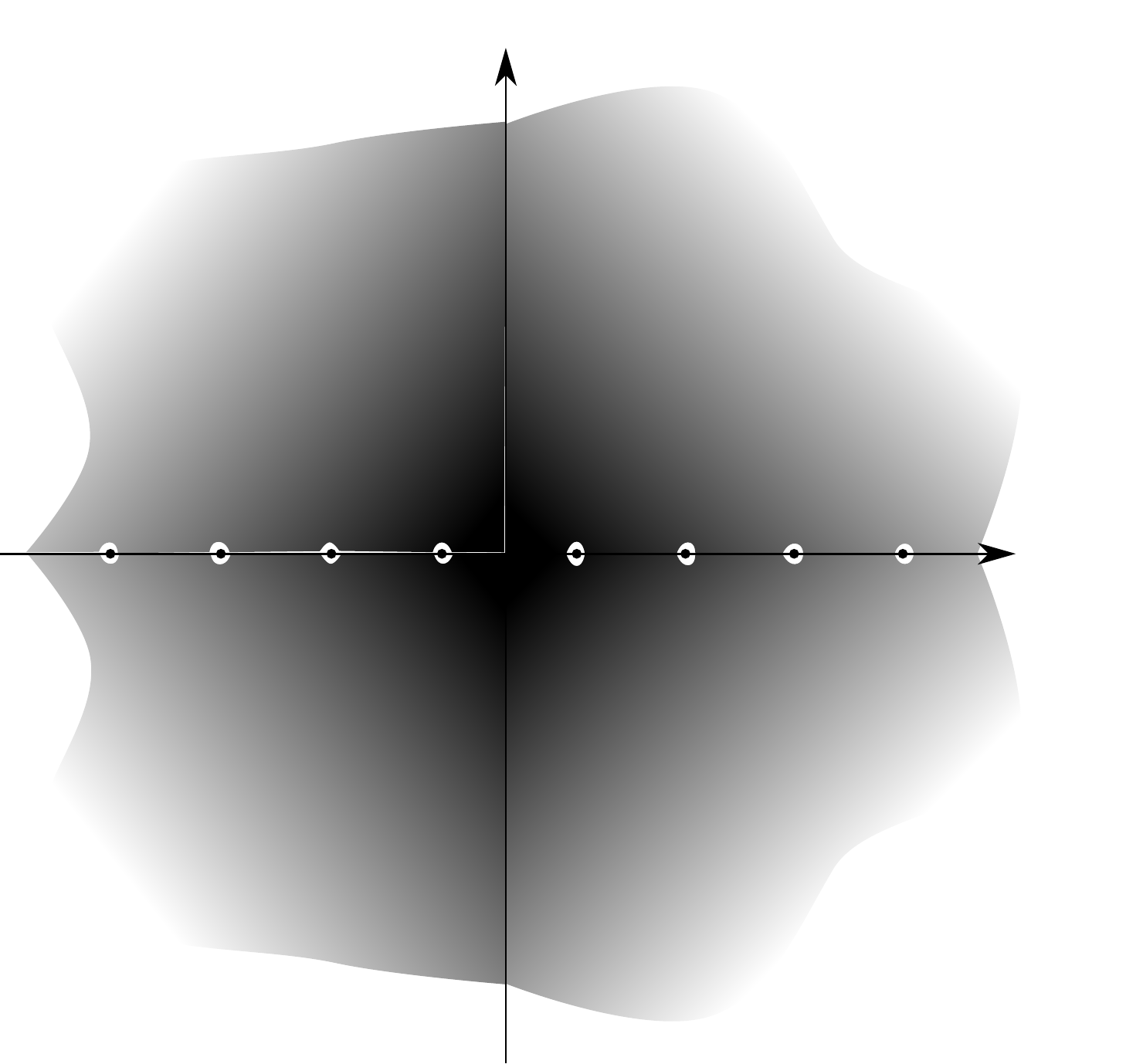
  \caption{The regions $D^+$ (solid red line) and $D^-$ (dashed green line) on the left and $\widetilde{D}^+$ (solid red line), $\widetilde{D}^-$ (dashed green line), $\widetilde{E}^+$ (solid blue line), and $\widetilde{E}^-$ (dashed purple line) on the right for the linear Schr\"odinger equation. 
  \label{fig:LS_D}}
\end{figure}

With some foresight, we will deform these integrals into the complex plane. Define the domains 
\Eq{Dpm}
$$\req{D^+=\{\kappa \in\CC:\Re(\i\kappa ^2)\leq0, \ \Im \kappa \geq 0\},\\D^-=\{\kappa \in\CC^\pm:\Re(\i\kappa ^2)\leq0, \ \Im \kappa \leq 0\},}$$
as plotted in Figure~\ref{fig:LS_D}.
Further, to avoid integrating through the zeros of the discriminant \eq{discriminant}, we introduce the subdomains
\Eq{Dtpm}
$$\widetilde{D}^\pm=D^\pm\,\setminus \bigcup_{\kappa \in \ZD} \text{a small neighbourhood of $\kappa $.}$$
We will integrate around the boundaries with the orientation of the integration contour is such that the interior of the domain lies on the left, as in Figure~\ref{fig:LS_D}, and thus find
\Eq{ls_badsoln_deps}
$$\eeq{  u(t,x)=\frac{1}{2\pii}\int_{-\infty}^\infty e^{\i\kappa\:x-\i\kappa ^2 t}\,\widehat{u_0}(\kappa )\ud \kappa -\frac{1}{2\pii}\int_{\partial D^-} e^{\i\kappa (x-L)-\i\kappa ^2 t}\,\bbk{\i h_1(t,\i\kappa ^2)-\kappa h_0(t,\i\kappa ^2)}\ud \kappa \cthn{150}-\frac{1}{2\pii}\int_{\partial D^+} e^{\i\kappa\:x-\i\kappa ^2 t}\,\bbk{\i g_1(t,\i\kappa ^2)-\kappa g_0(t,\i\kappa ^2)}\ud \kappa \\
=\frac{1}{2\pii}\int_{-\infty}^\infty e^{\i\kappa\:x-\i\kappa ^2 t}\,\widehat{u_0}(\kappa )\ud \kappa -\frac{1}{2\pii}\int_{\partial \widetilde{D}^-} e^{\i\kappa (x-L)-\i\kappa ^2 t}\,\bbk{\i h_1(t,\i\kappa ^2)-\kappa h_0(t,\i\kappa ^2)}\ud \kappa \cthn{150}-\frac{1}{2\pii}\int_{\partial \widetilde{D}^+} e^{\i\kappa\:x-\i\kappa ^2 t}\,\bbk{\i g_1(t,\i\kappa ^2)-\kappa g_0(t,\i\kappa ^2)}\ud \kappa .}$$



Using~\eqref{ls_gr},~\eqref{ls_gr_neg}, with $T$ replaced by $t$, and the pseudoperiodic boundary conditions~\eqref{eqn:ls_pp}, one can solve for $h_1,h_0,g_1$, and $g_0$ in terms of $\widehat{u_0}$ and $\widehat{u}(T,\cdot)$.
Substituting these into formula~\eqref{ls_badsoln_deps}, and using Jordan's Lemma and Cauchy's theorem,~\cite{AblowitzFokas}, as is typical in the UTM,~\cite{DeconinckTrogdonVasan, FokasBook, FokasPelloni4},
 we see that the terms involving $\widehat{u}(T,\cdot)$ evaluate to 0, and our solution reduces to 
\begin{equation}\label{ls_soln_deps2}
\begin{split}
u(t,x)={}&\frac{1}{2\pii}\left(\int_{-\infty}^\infty \!\!e^{\i\kappa\:x-\i\kappa ^2 t}\,\widehat{u_0}(\kappa )\ud \kappa -\!\!\int_{\partial \widetilde{D}^-} \!\!\frac{e^{\i\kappa (x-L)-\i\kappa ^2 t}\,\zeta^-(\kap,t)}{\Delta(\kappa)} \ud \kappa -\!\!\int_{\partial \widetilde{D}^+}\!\! \frac{e^{\i\kappa\:x-\i\kappa ^2 t}\,\zeta^+(t,\kappa)}{\Delta(\kappa)}\ud \kappa\right),
\end{split}
\end{equation}
where the notation
\begin{subequations}
\begin{align}
	\zeta^+(t,\kappa ) ={}& \left( \left(\beta_0+\beta_1\right)e^{\i \kappa L}-2 \right) \widehat{u_0}(\kappa) + \left(\beta_0-\beta_1\right) e^{-\i\kappa L} \,\widehat{u_0}(-\kappa),\\
	\zeta^-(t,\kappa ) ={}& \left(2\beta_0\beta_1e^{\i\kappa L}-\beta_0-\beta_1\right) \widehat{u_0}(\kappa) + \left(\beta_0-\beta_1\right) \widehat{u_0}(-\kappa),
\end{align}
\end{subequations}
follows~\cite{KesiciPelloniPryerSmith, SmithThesis, Smith2012}.
Note that 
$$\zeta^+(t,\kappa )-e^{-\i\kappa L}\,\zeta^-(t,\kappa )=\widehat{u_0}(\kappa )\Delta(\kappa ).$$ 

Further, set
$$E^+\cup E^-=\{\kappa \in\CC^\pm:\Re(\i\kappa ^2)\geq0\} \roq{and}
\widetilde{E}^\pm=E^\pm\setminus \bigcup_{\kappa \in\ZD} \text{a small neighborhood of $\kappa $,}$$ 
as in Figure~\ref{fig:LS_D}.
Let $S(\mu,r)$ be the positively oriented boundary of a disc centred at $\mu$ with small radius $r$. 
Then equation~\eqref{ls_soln_deps2} becomes
$$\eeq{	u(t,x)=\frac{1}{2\pii}\left(\int_{\partial \widetilde{E}^-}e^{\i\kappa\:x-\i\kappa ^2t}\, \frac{e^{-\i\kappa L}\,\zeta^-(t,\kappa )}{\Delta(\kappa )}\ud \kappa +\int_{\partial \widetilde{E}^+}e^{\i\kappa\:x-\i\kappa ^2t} \,\frac{\zeta^+(t,\kappa )}{\Delta(\kappa )}\ud \kappa\right)
\cthn{15}+\frac{1}{2\pii}\left(\sum_{\mu \in \ZDp}\int_{S(\mu,r)} e^{\i\kappa\:x-\i\kappa ^2t}\,\frac{\zeta^+(\kap,t)}{\Delta(\kappa)}\ud \kap+ \sum_{\mu \in \ZDm}\int_{S(\mu,r)} e^{\i\kappa\:x-\i\kappa ^2t}\,\frac{e^{-\i\kappa L}\,\zeta^-(\kap,t)}{\Delta(\kappa)}\ud\kappa \right)\\
	=\frac{1}{2\pii}\left(\int_{\partial \widetilde{E}^-}e^{\i\kappa\:x-\i\kappa ^2t} \,\frac{e^{-\i\kappa L}\,\zeta^-(t,\kappa )}{\Delta(\kappa )}\ud \kappa +\int_{\partial \widetilde{E}^+}e^{\i\kappa\:x-\i\kappa ^2t} \,\frac{\zeta^+(t,\kappa )}{\Delta(\kappa )}\ud \kappa\right)
	\cthn{15}+\frac{1}{2\pii}\left(\sum_{\mu \in \ZDp}\int_{S(\mu,r)} e^{\i\kappa\:x-\i\kappa ^2t}\,\frac{\zeta^+(\kap,t)}{\Delta(\kappa)}\ud\kap+ \sum_{\mu \in \ZDm}\int_{S(\mu,r)} e^{\i\kappa\:x-\i\kappa ^2t}\,\left(\frac{\zeta^+(\kap,t)}{\Delta(\kappa)}-\widehat{u_0}(\kappa)\right)\ud\kappa \right).}
	$$
The first two integrals are zero by an application of Jordan's Lemma and Cauchy's theorem.
Similarly,
$$\int_{S(\mu,r)}e^{\i\kappa\:x-\i\kappa ^2t}\, \widehat{u_0}(\kappa)\ud\kap=0,$$
and so
\Eq{eqn:q.series}
$$\eeq{u(t,x)=\frac{1}{2\pii} \sum_{\mu \in \ZD}\int_{S(\mu,r)} e^{\i\kappa\:x-\i\kappa ^2t}\>\frac{\zeta^+(\kap,t)}{\Delta(\kappa)}\ud\kappa\\
= \i \sum_{j\in\ZZ} \Res_{\kap=\kappa_j} \left[ e^{\i \kappa\:x-\i \kappa^2t} \>\frac{\zeta^+(\kap,t)}{\Delta(\kappa)} \right] +\i \sum_{j\in\ZZ} \Res_{\kap=-\kappa_j} \left[ e^{\i \kappa\:x-\i \kappa^2t} \>\frac{\zeta^+(\kap,t)}{\Delta(\kappa)} \right] \\ 
	=\i \sum_{j\in\ZZ} e^{-\i \kappa_j^2t} \> \frac{e^{\i \kappa_j x}\,\zeta^+(\kappa_j,t)-e^{-\i \kappa_j x}\,\zeta^+(-\kappa_j,t)}{\Delta'(\kappa_j)}, }$$
where the last equality holds provided all zeros of $\Delta$ are simple, as we assumed from the outset.
Note that, due to the sense in which equation~\eqref{ls_badsoln} is understood, the above series should be understood in the sense of principal values, that is, the limit $n\to\infty$ of partial sums over $-n\leq j \leq n$.

Finally, using the definition \eq{gamma} of $\gamma$, equation~\eq{eqn:q.series} becomes
\Eq{eqn:seriessoln}
$$\eeq{u(t,x)=\frac{1}{L}\sum_{j\in\ZZ} e^{-\i\kappa_j^2t}\left[ e^{\i\kappa_j x} \left(\frac{\left[-2+ \left(\beta_0+\beta_1\right)\gamma \right] \widehat{u_0}(\kappa_j)+ \left(\beta_0-\beta_1\right) \gamma^{-1} \widehat{u_0}(-\kappa_j)}{(\beta_0+\beta_1)(\gamma-\gamma^{-1})}\right) \right.
\cthn{100}\left. +e^{-\i\kappa_j x}\left(\frac{\left[2- \left(\beta_0+\beta_1\right)\gamma^{-1} \right] \widehat{u_0}(-\kappa_j) + \left(\beta_1-\beta_0\right) \gamma \widehat{u_0}(\kappa_j)}{(\beta_0+\beta_1)(\gamma-\gamma^{-1})}\right)\right]\\
= \frac{1}{L} \sum_{j\in\ZZ} e^{-\i \kappa_j^2t} \left[\frac{ \left((\beta_0+\beta_1)\gamma-2\right)e^{\i \kappa_j x} + (\beta_1-\beta_0)\gamma e^{-\i \kappa_j x}}{(\beta_0+\beta_1)(\gamma-\gamma^{-1})} \right]\left[\,\widehat{u_0}(\kappa_j)+ \delta\,\widehat{u_0}(-\kappa_j) \,\right],}
$$
where $\delta $ is given in \eq{deltatau} above. 
Again using~\eq{gamma}, equation~\eq{eqn:seriessoln} is equivalent to \eqref{eqn:seriessolnx}.

\section{Numerical experimentation}\label{sec:NumericalResults}

Applying the generalised Fourier series representation of $u(t,x)$ given by~\eqref{eqn:seriessolnx}, or, equivalently~\eqref{eqn:seriessoln}, it is straightforward to obtain reasonable numerical approximations to the solution profiles at specified times simply by summing over a sufficient number of terms. 
In the accompanying figures, we sum over 20,001 terms, given that reasonable changes in the total number of terms does not produce significantly different results.
In conformity with the results obtained  in~\cite{Olver} for the periodic problem, we expect that problems with energy-conserving pseudoperiodic boundary conditions will exhibit fractalization and revival.
Results confirming this are displayed in Figures~\ref{fig:PseudoConservativeIrrational} and~\ref{fig:PseudoConservativeRational}.
More surprising is that we have found cases where even linear Schr\"{o}dinger problems with nonconservative 
pseudoperiodic boundary conditions still exhibit revivals, as illustrated in Figures~\ref{fig:PseudoNonconservativeIrrational} and~\ref{fig:PseudoNonconservativeRational}.
\begin{figure}[h!]
  \centering
  \begin{minipage}[b]{.32\linewidth}
    \centering
    \includegraphics[width=\linewidth]{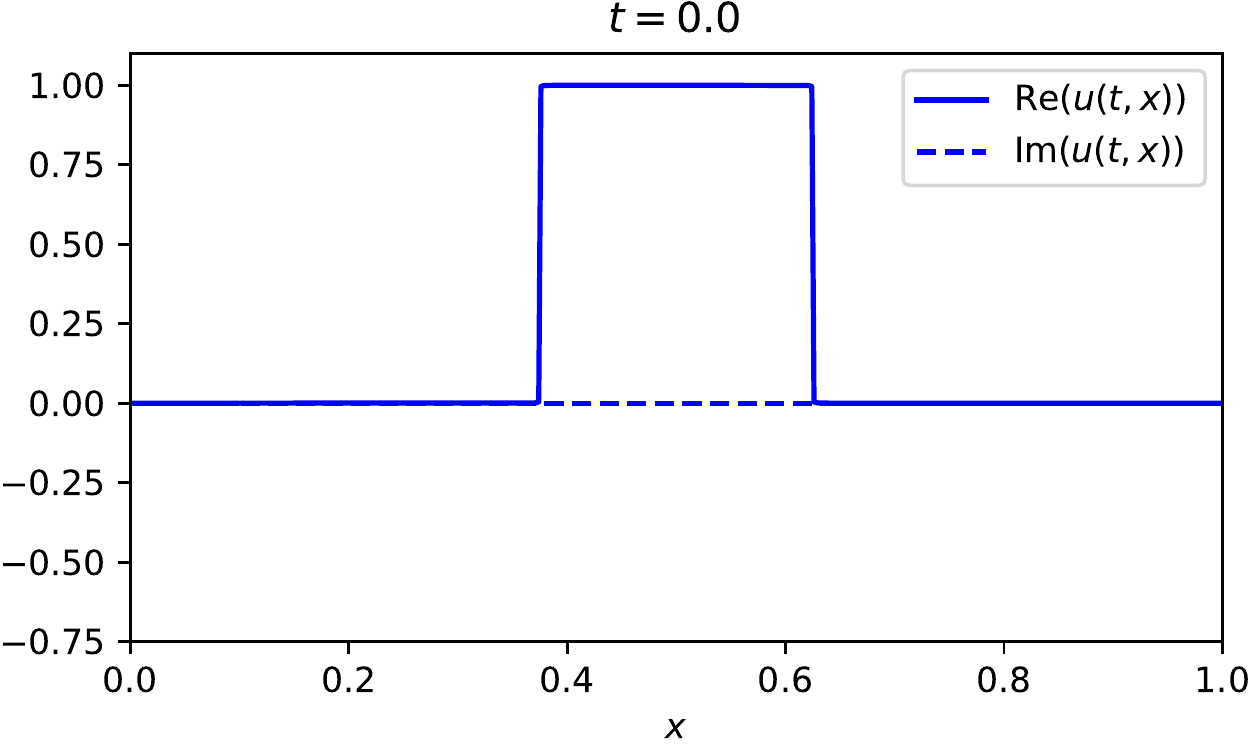}
    \subcaption{$t=0$}  \label{fig:PseudoConservativeIrrational.1}
  \end{minipage}
  \hfill
  \begin{minipage}[b]{.32\linewidth}
    \centering
    \includegraphics[width=\linewidth]{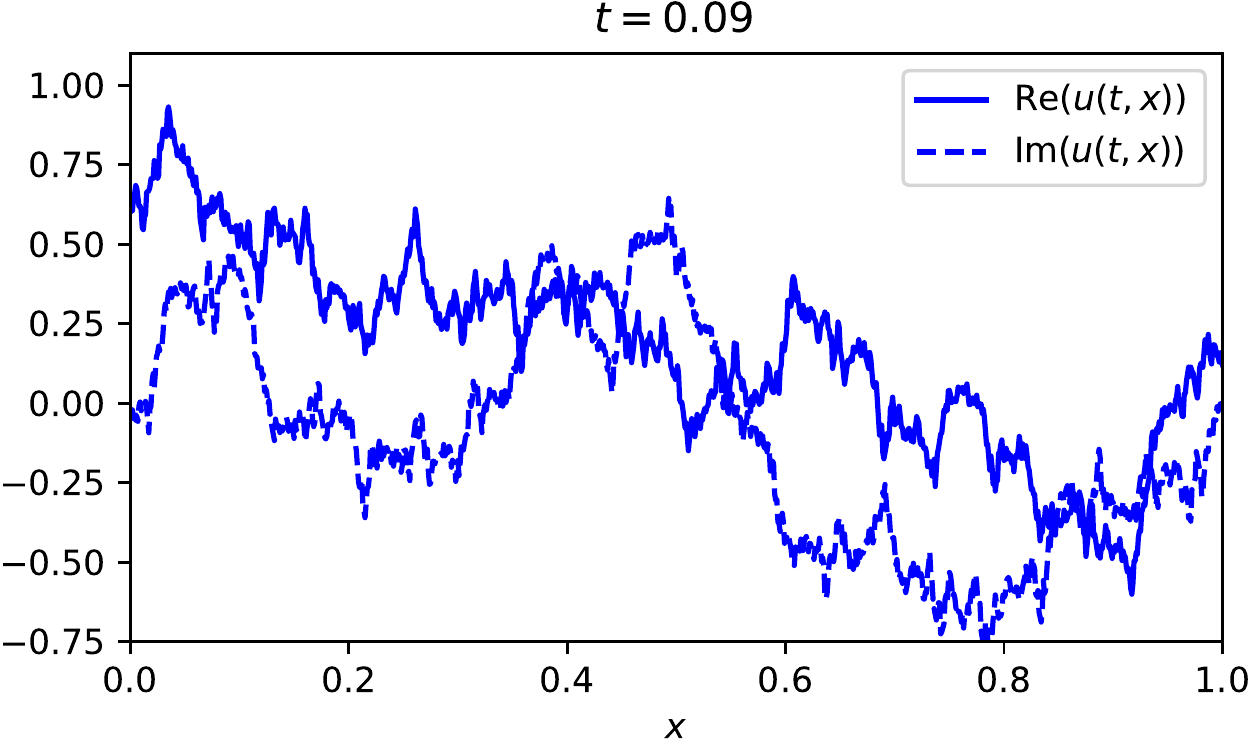}
    \subcaption{$t=0.09$}\label{fig:PseudoConservativeIrrational.2}
  \end{minipage}
  \hfill
  \begin{minipage}[b]{.32\linewidth}
    \centering
    \includegraphics[width=\linewidth]{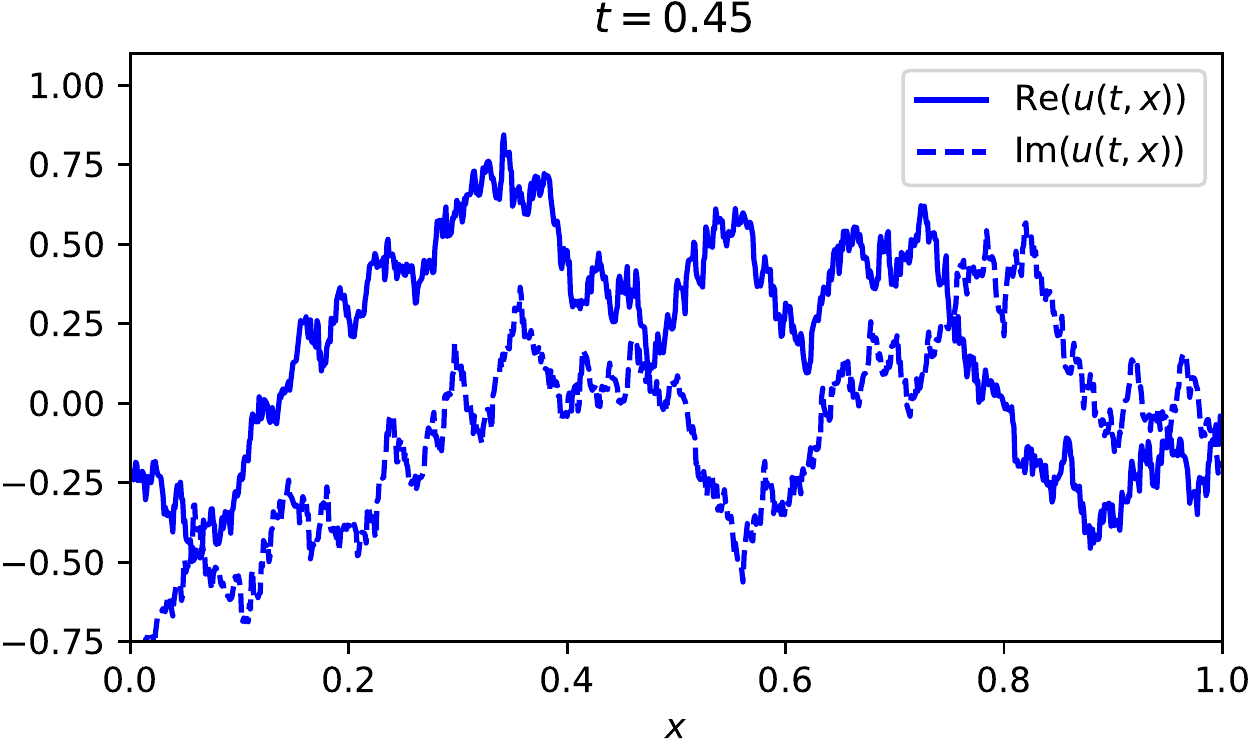}
    \subcaption{$t=0.45$}\label{fig:PseudoConservativeIrrational.6}
  \end{minipage}
  \\
  \vspace{2ex}
  \begin{minipage}[b]{.32\linewidth}
    \centering
    \includegraphics[width=\linewidth]{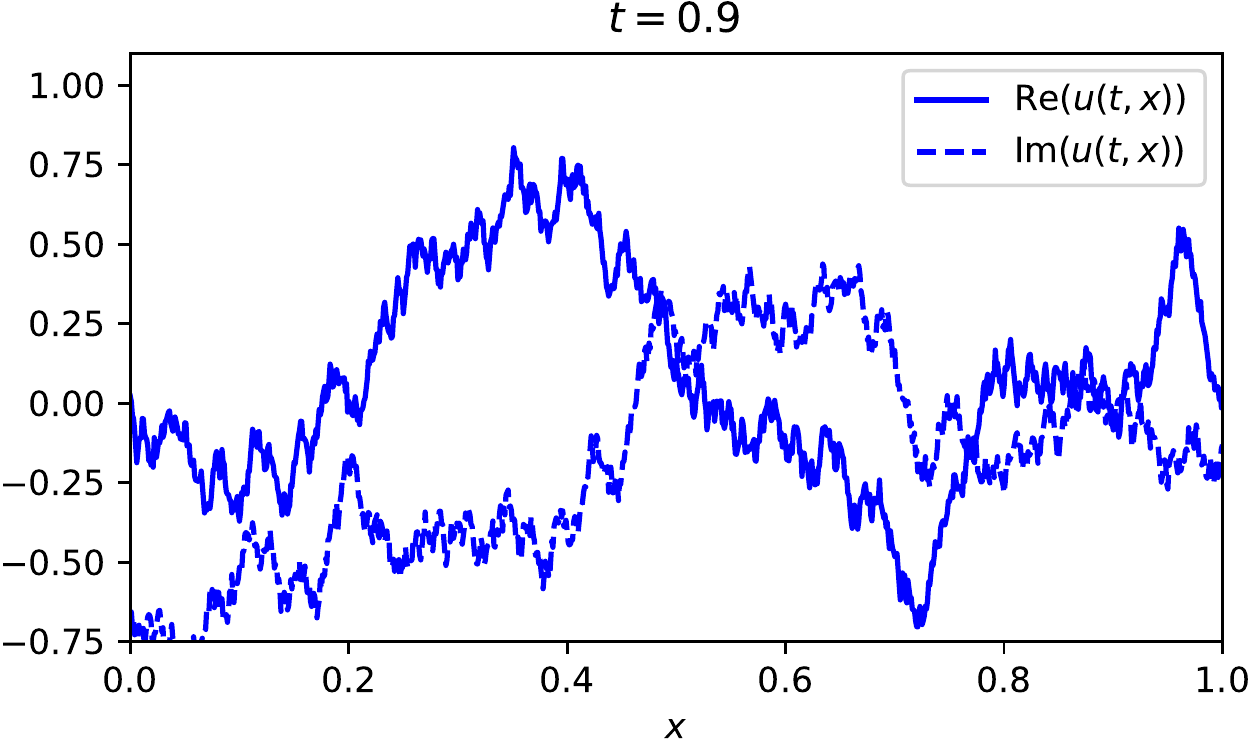}
    \subcaption{$t=0.9$} \label{fig:PseudoConservativeIrrational.11}
  \end{minipage}
  \hfill
  \begin{minipage}[b]{.32\linewidth}
    \centering
    \includegraphics[width=\linewidth]{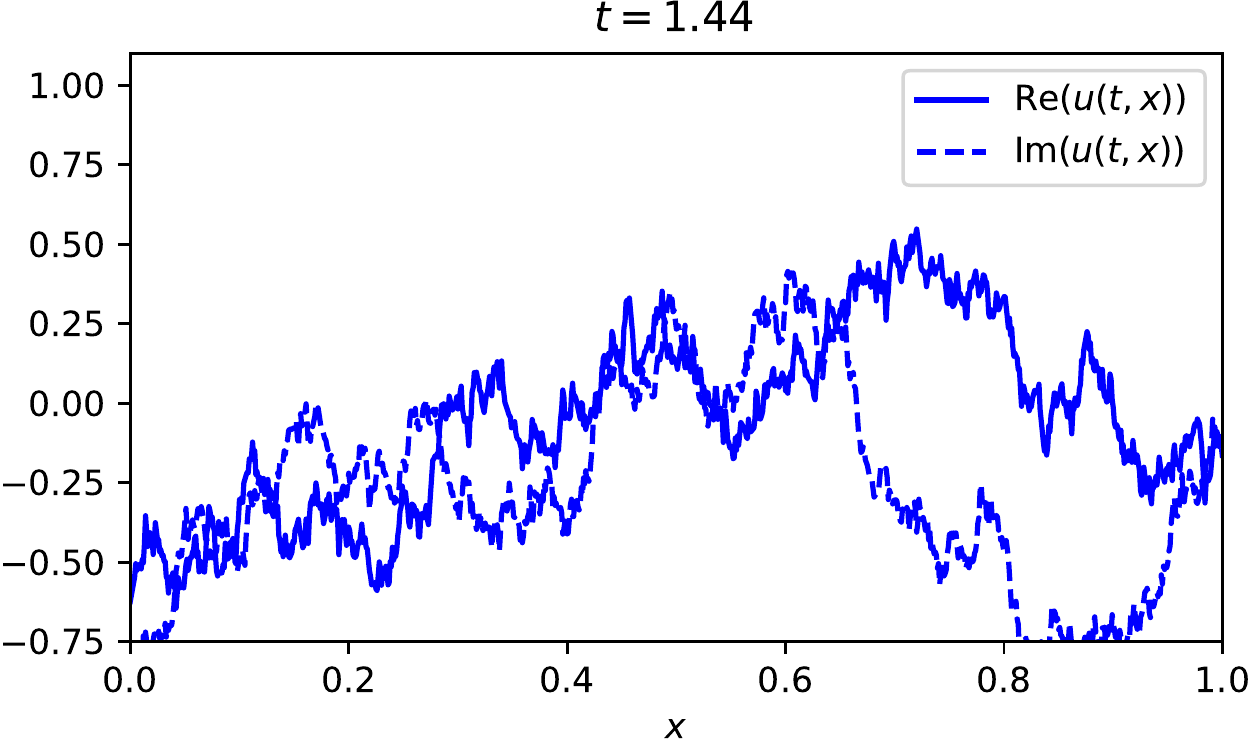}
    \subcaption{$t=1.44$}\label{fig:PseudoConservativeIrrational.17}
  \end{minipage}
  \hfill
  \begin{minipage}[b]{.32\linewidth}
    \centering
    \includegraphics[width=\linewidth]{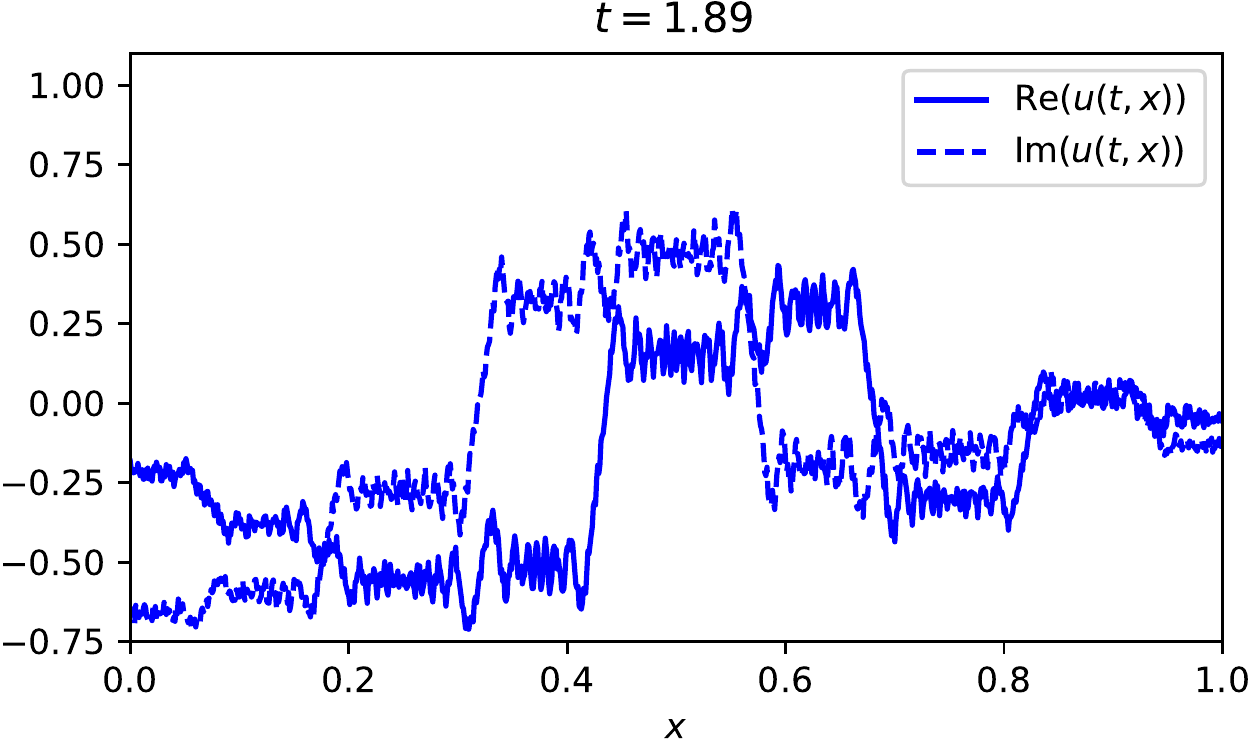}
    \subcaption{$t=1.89$}\label{fig:PseudoConservativeIrrational.22}
  \end{minipage}
  \caption{
    The solution of the linear Schr\"{o}dinger equation with energy conserving pseudoperiodic boundary conditions $\beta_0=1/5$, $\beta_1=5$ on $[0,1]$ and box initial datum, evaluated at certain ``irrational" times which are not commensurate with $L^2/(4\pii)$.
  }
  \label{fig:PseudoConservativeIrrational}
\end{figure}
\begin{figure}[h!]
  \centering
  \begin{minipage}[b]{.32\linewidth}
    \centering
    \includegraphics[width=\linewidth]{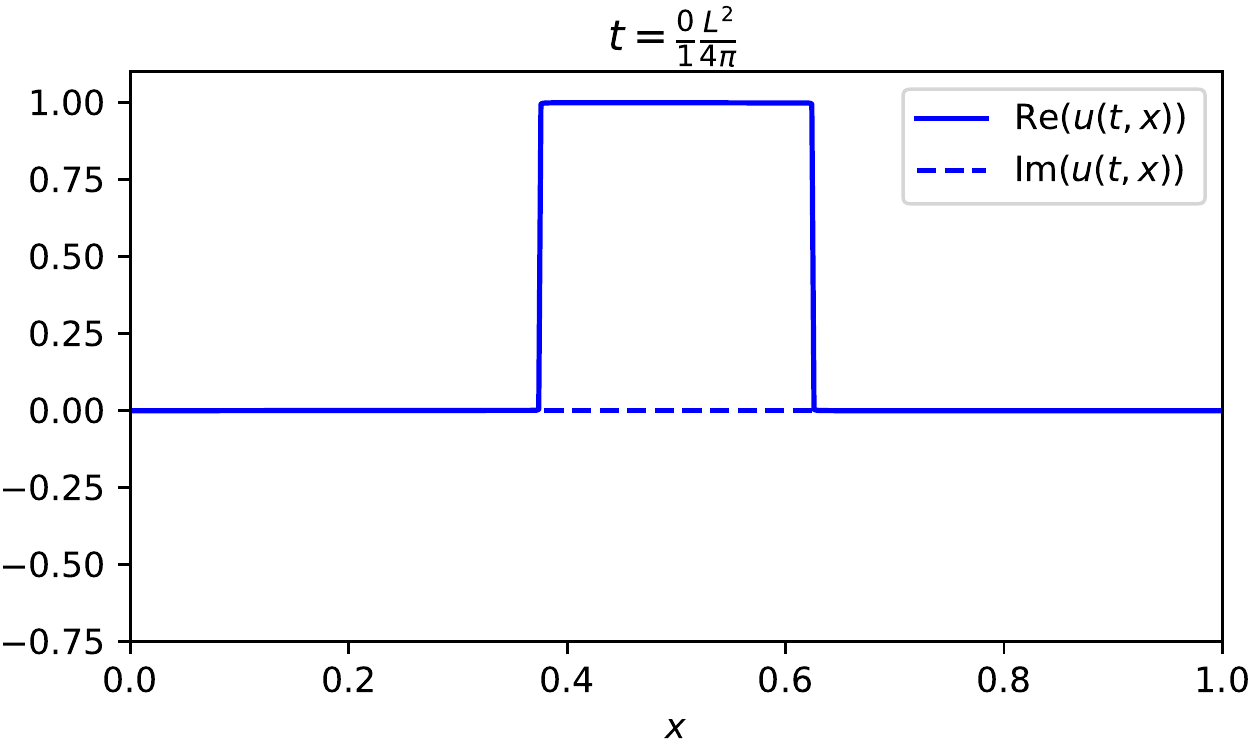}
    \subcaption{$t=0$}\label{fig:PseudoConservativeRational.1}
  \end{minipage}
  \hfill
  \begin{minipage}[b]{.32\linewidth}
    \centering
    \includegraphics[width=\linewidth]{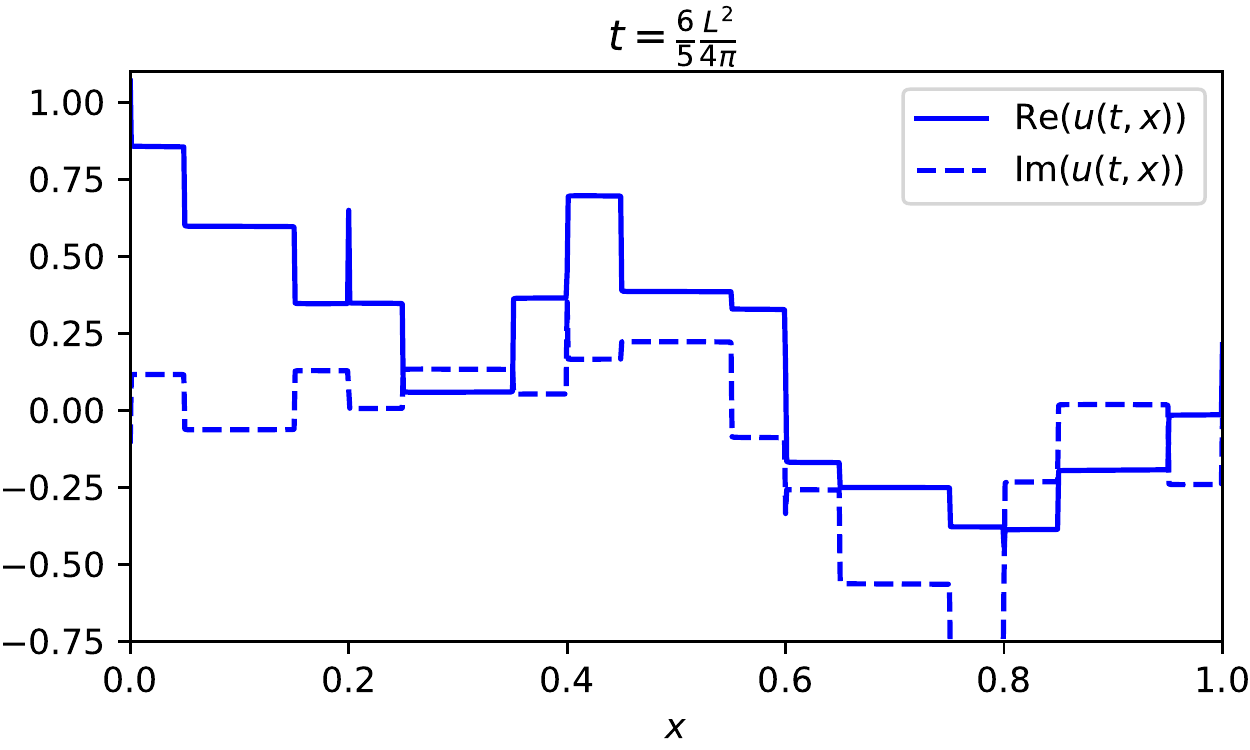}
    \subcaption{$t=\frac{6}{5}\frac{L^2}{4\pii}\approx0.09$}\label{fig:PseudoConservativeRational.2}
  \end{minipage}
  \hfill
  \begin{minipage}[b]{.32\linewidth}
    \centering
    \includegraphics[width=\linewidth]{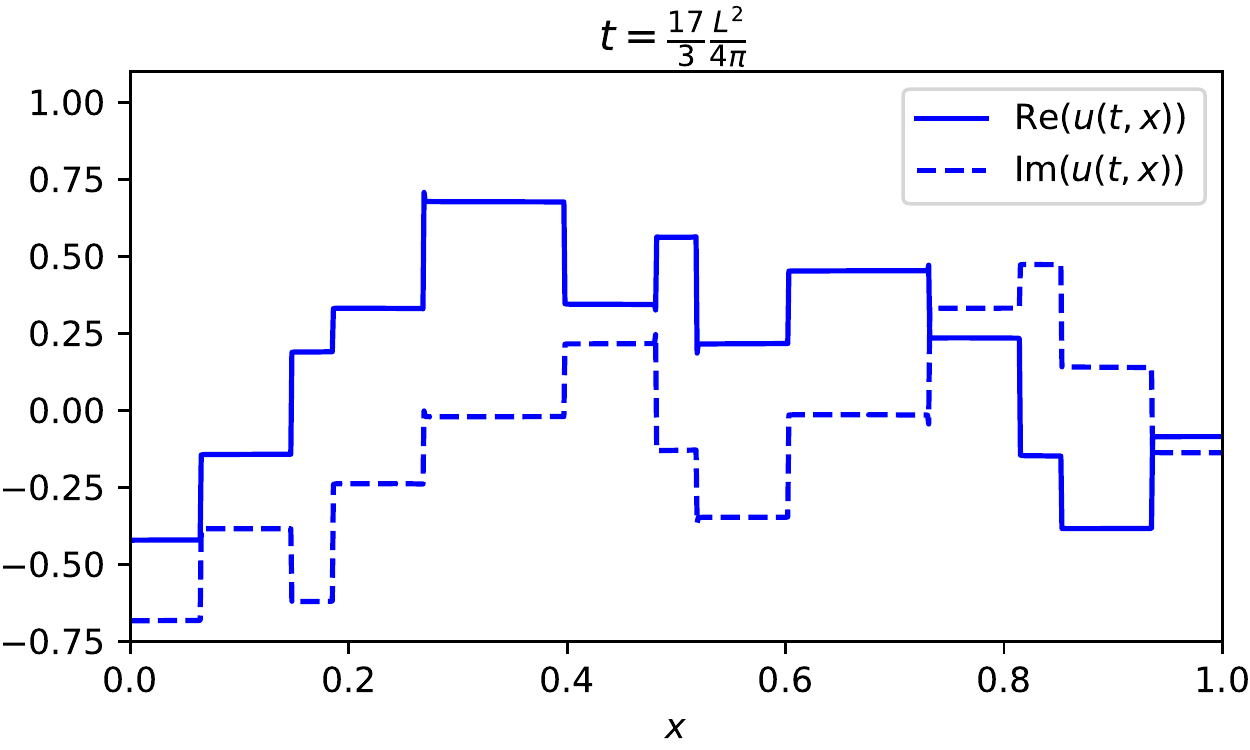}
    \subcaption{$t=\frac{17}{3}\frac{L^2}{4\pii}\approx0.45$}\label{fig:PseudoConservativeRational.6}
  \end{minipage}
  \\
  \vspace{2ex}
  \begin{minipage}[b]{.32\linewidth}
    \centering
    \includegraphics[width=\linewidth]{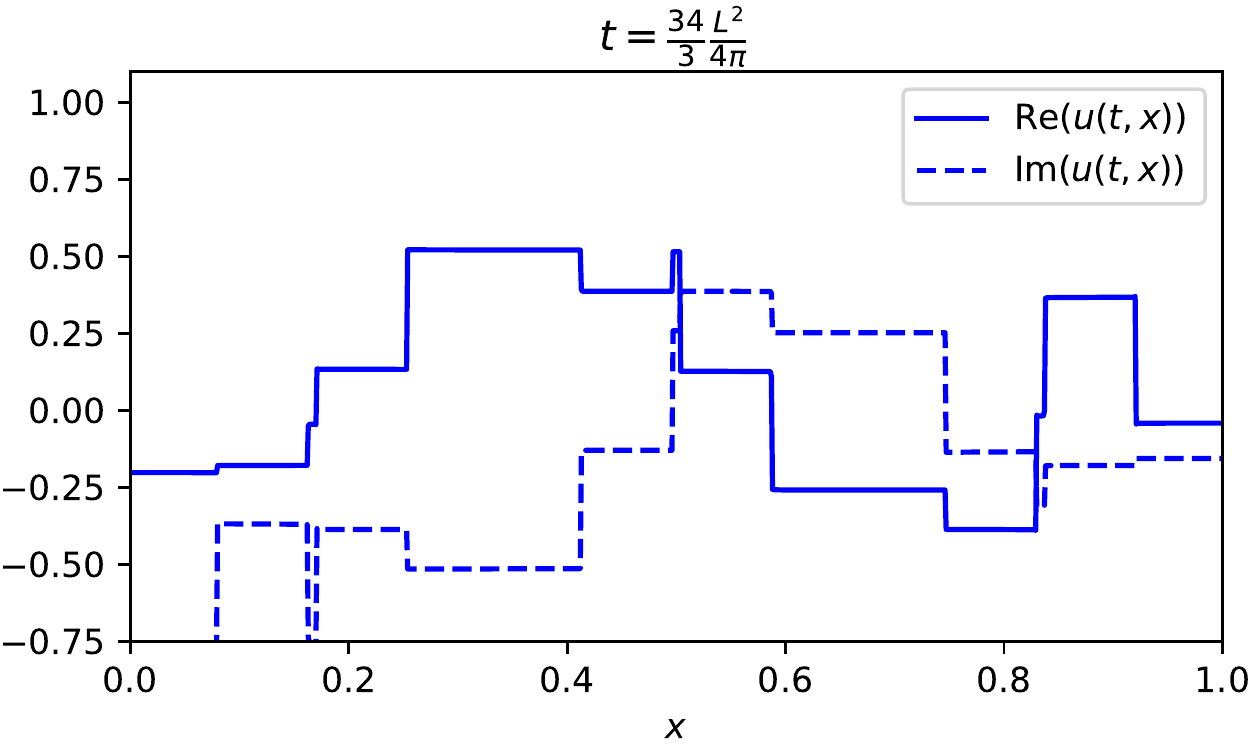}
    \subcaption{$t=\frac{34}{3}\frac{L^2}{4\pii}\approx0.9$}\label{fig:PseudoConservativeRational.11}
  \end{minipage}
  \hfill
  \begin{minipage}[b]{.32\linewidth}
    \centering
    \includegraphics[width=\linewidth]{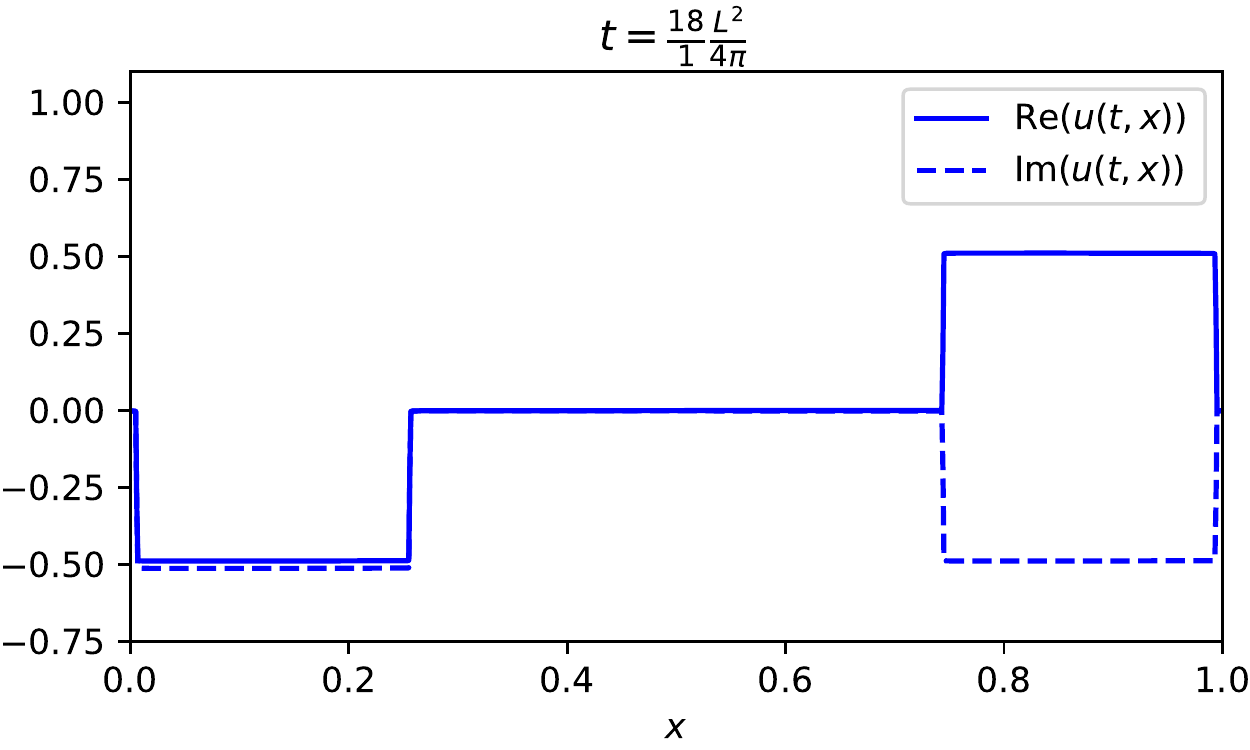}
    \subcaption{$t=\frac{18}{2}\frac{L^2}{4\pii}\approx1.44$}\label{fig:PseudoConservativeRational.17}
  \end{minipage}
  \hfill
  \begin{minipage}[b]{.32\linewidth}
    \centering
    \includegraphics[width=\linewidth]{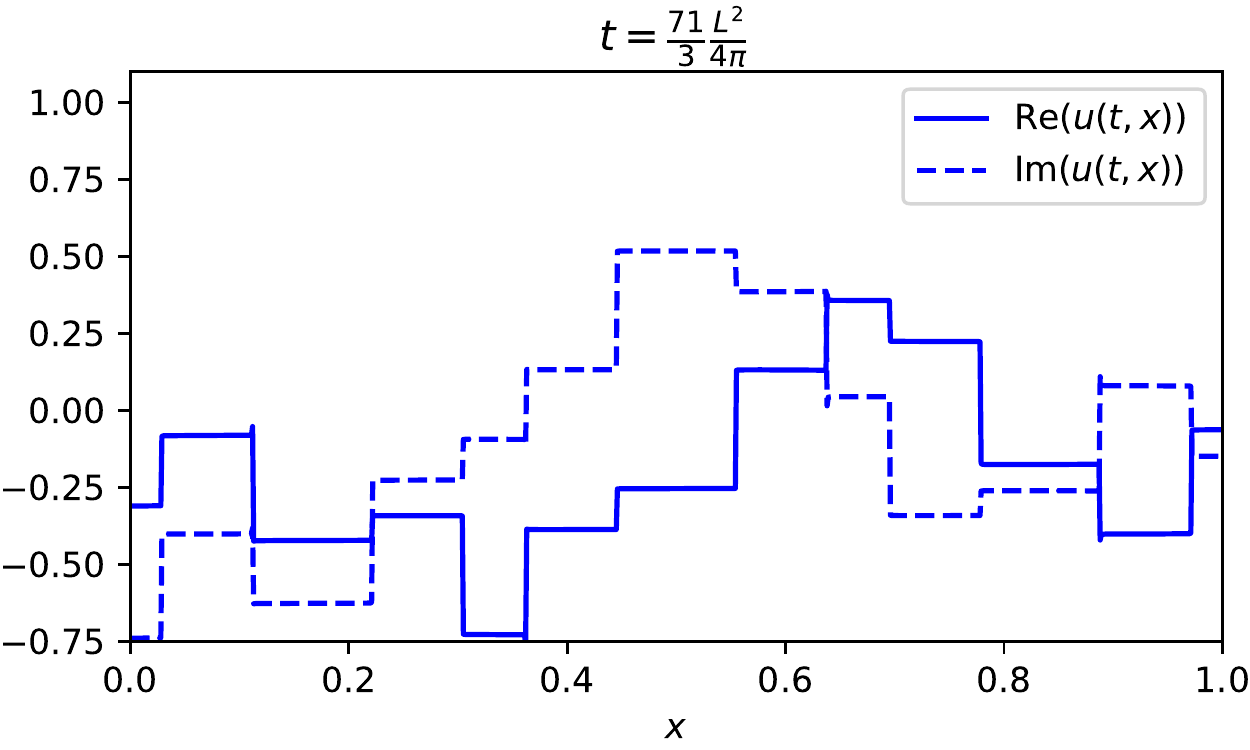}
    \subcaption{$t=\frac{71}{3}\frac{L^2}{4\pii}\approx1.89$}\label{fig:PseudoConservativeRational.22}
  \end{minipage}
  \caption{
    The solution of the linear Schr\"{o}dinger equation with energy conserving pseudoperiodic boundary conditions $\beta_0=1/5$, $\beta_1=5$ on $[0,1]$ and box initial datum, evaluated at certain ``rational" times which are commensurate with $L^2/(4\pii)$.
  }
  \label{fig:PseudoConservativeRational}
\end{figure}
\begin{figure}[h!]
  \centering
  \begin{minipage}[b]{.32\linewidth}
    \centering
    \includegraphics[width=\linewidth]{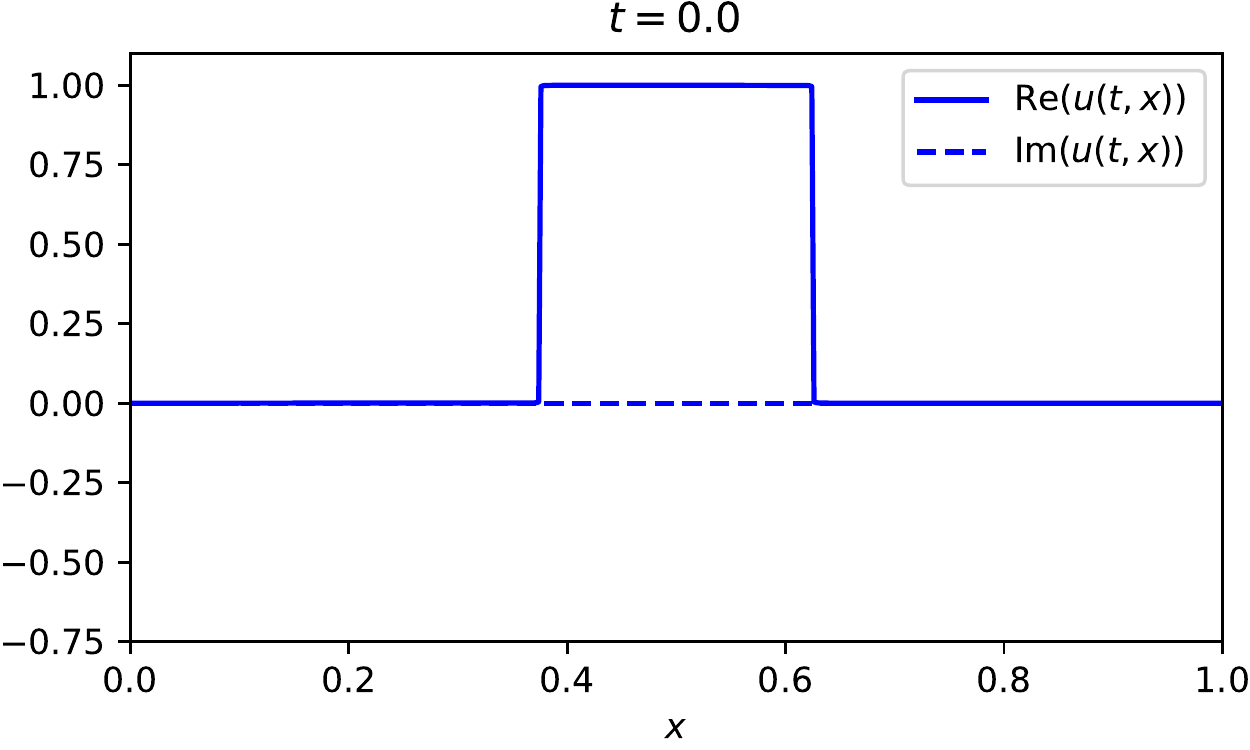}
    \subcaption{$t=0$}\label{fig:PseudoNonconservativeIrrational.1}
  \end{minipage}
  \hfill
  \begin{minipage}[b]{.32\linewidth}
    \centering
    \includegraphics[width=\linewidth]{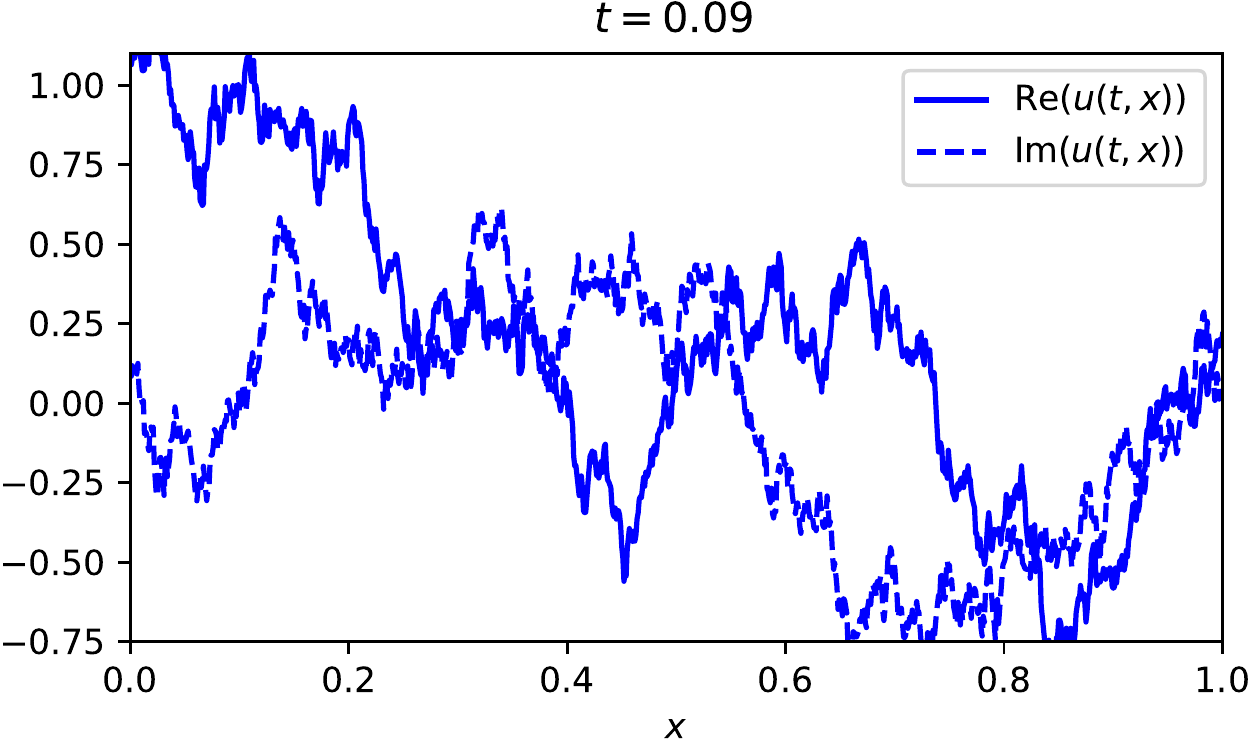}
    \subcaption{$t=0.09$}\label{fig:PseudoNonconservativeIrrational.2}
  \end{minipage}
  \hfill
  \begin{minipage}[b]{.32\linewidth}
    \centering
    \includegraphics[width=\linewidth]{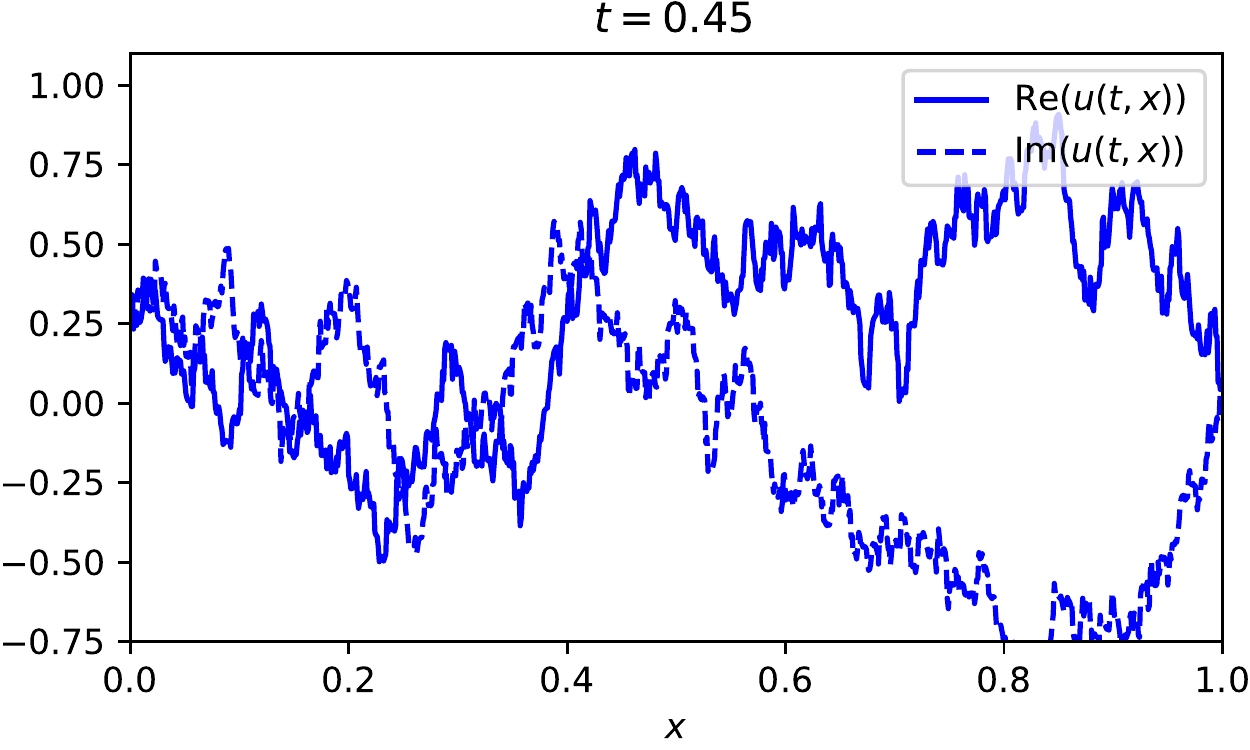}
    \subcaption{$t=0.45$}\label{fig:PseudoNonconservativeIrrational.6}
  \end{minipage}
  \\
  \vspace{2ex}
  \begin{minipage}[b]{.32\linewidth}
    \centering
    \includegraphics[width=\linewidth]{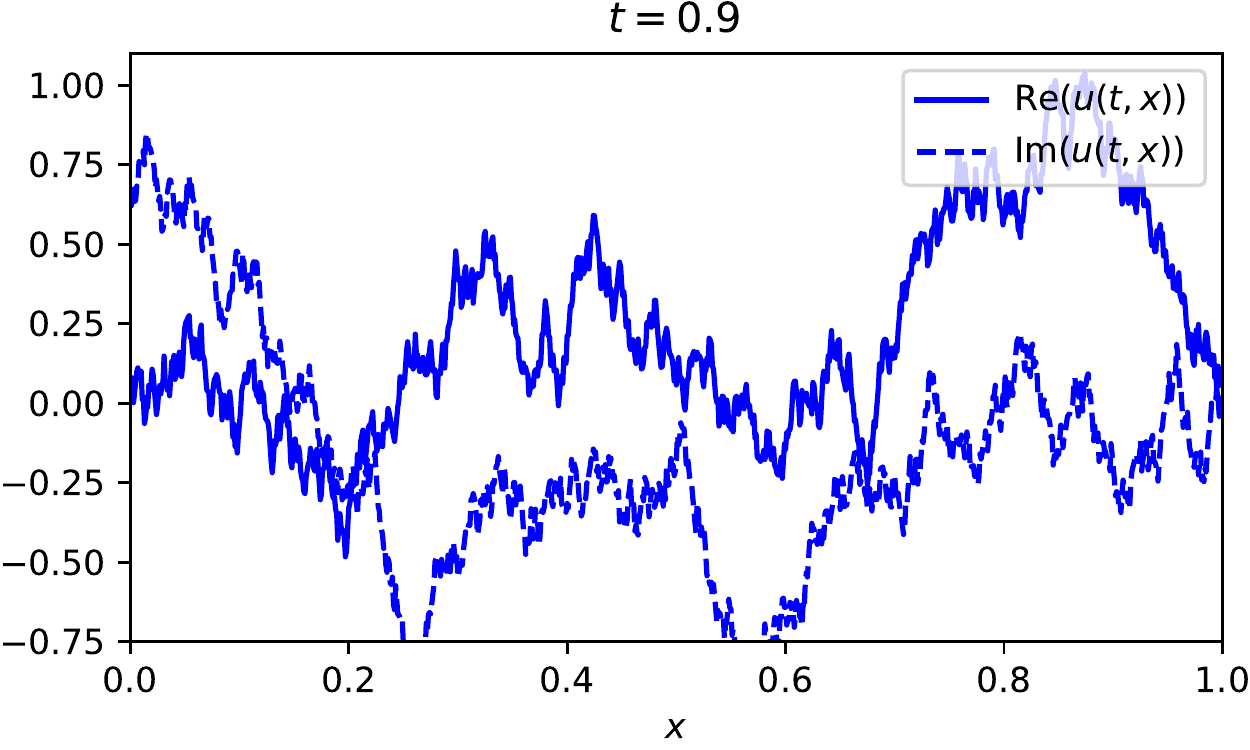}
    \subcaption{$t=0.9$}\label{fig:PseudoNonconservativeIrrational.11}
  \end{minipage}
  \hfill
  \begin{minipage}[b]{.32\linewidth}
    \centering
    \includegraphics[width=\linewidth]{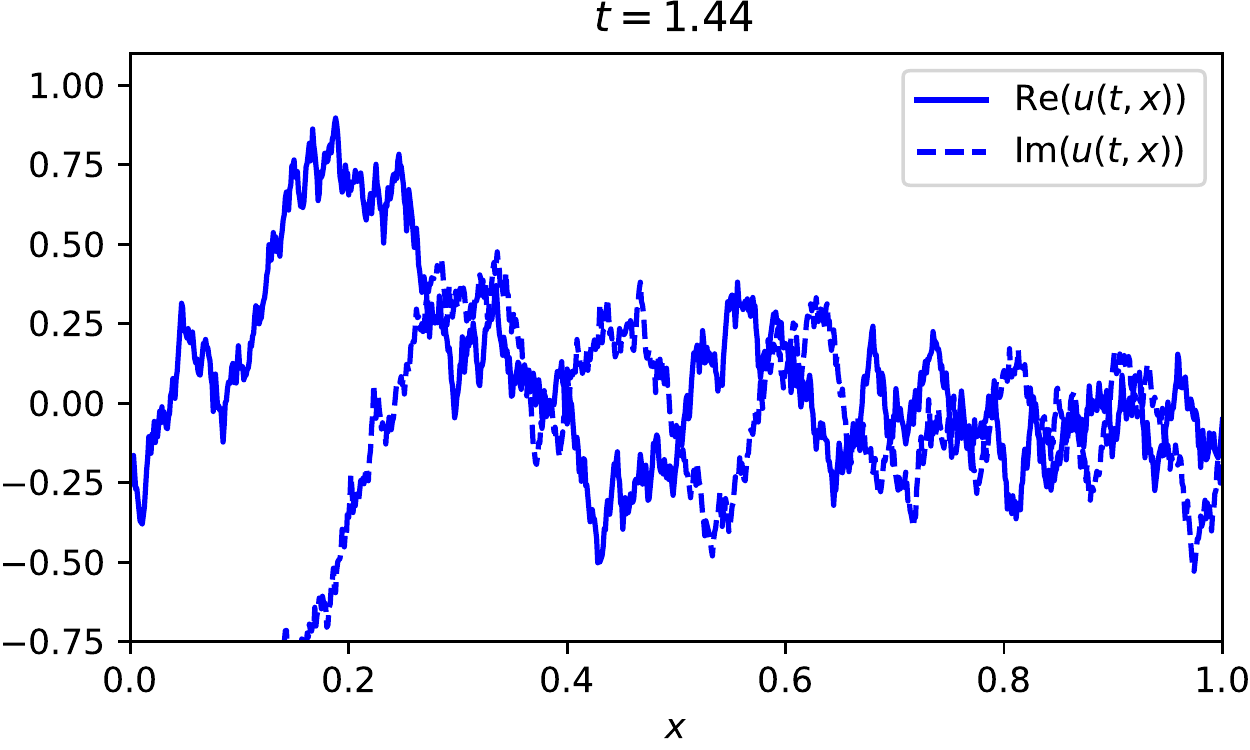}
    \subcaption{$t=1.44$}\label{fig:PseudoNonconservativeIrrational.17}
  \end{minipage}
  \hfill
  \begin{minipage}[b]{.32\linewidth}
    \centering
    \includegraphics[width=\linewidth]{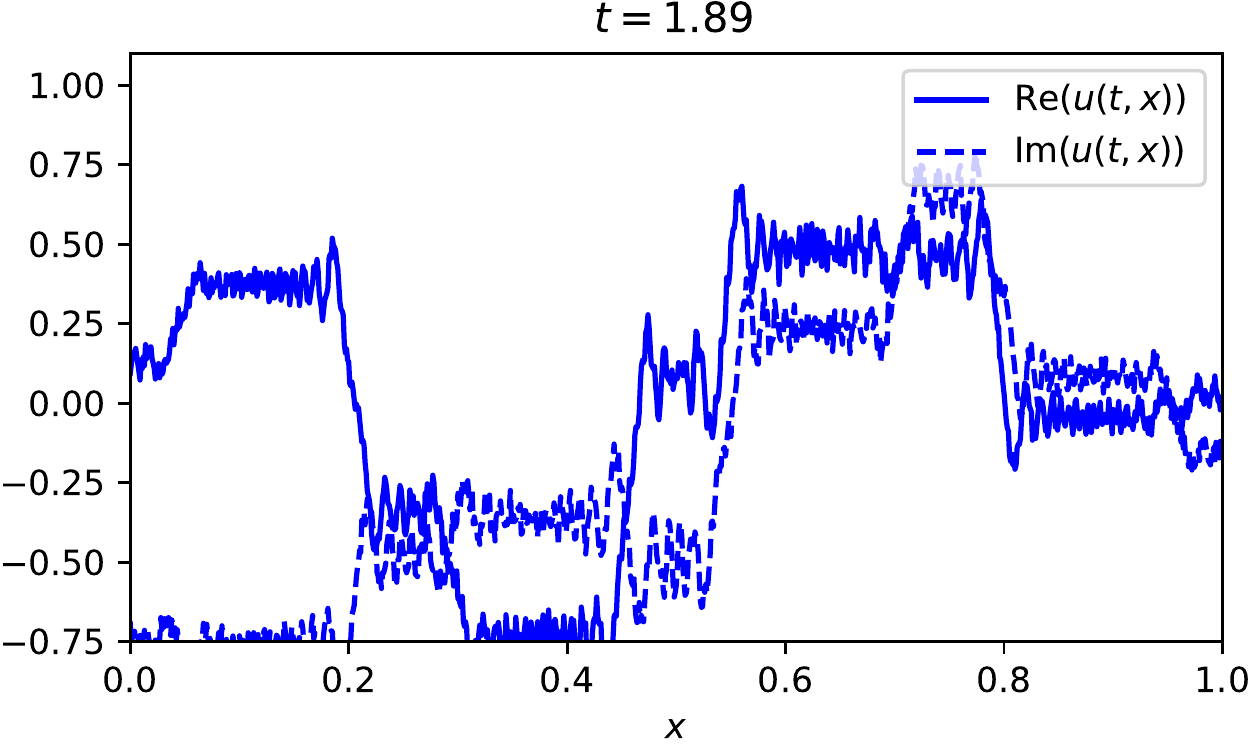}
    \subcaption{$t=1.89$}\label{fig:PseudoNonconservativeIrrational.22}
  \end{minipage}
  \caption{
    The solution of the linear Schr\"{o}dinger equation with energy non conserving pseudoperiodic boundary conditions $\beta_0=1/5$, $\beta_1=2$ on $[0,1]$ and box initial datum, evaluated at certain ``irrational" times which are not commensurate with $L^2/(4\pii)$.
  }
  \label{fig:PseudoNonconservativeIrrational}
\end{figure}
\begin{figure}[h!]
  \centering
  \begin{minipage}[b]{.32\linewidth}
    \centering
    \includegraphics[width=\linewidth]{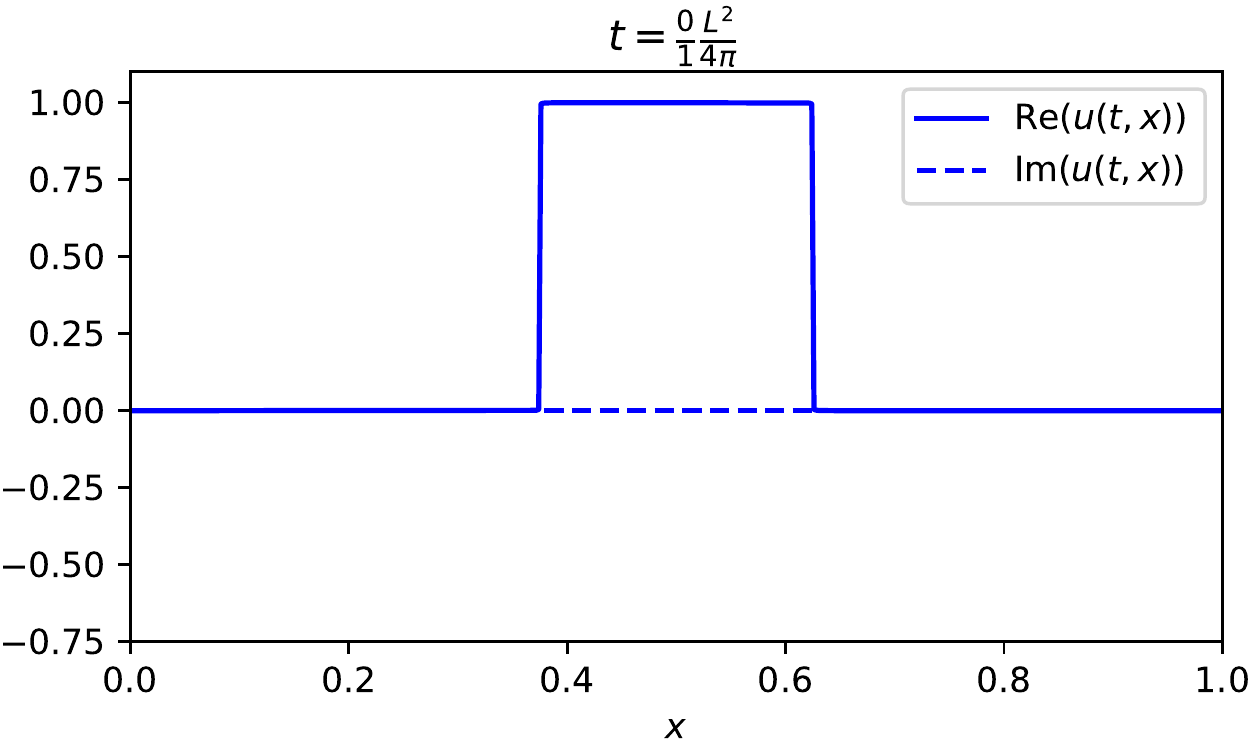}
    \subcaption{$t=0$}\label{fig:PseudoNonconservativeRational.1}
  \end{minipage}
  \hfill
  \begin{minipage}[b]{.32\linewidth}
    \centering
    \includegraphics[width=\linewidth]{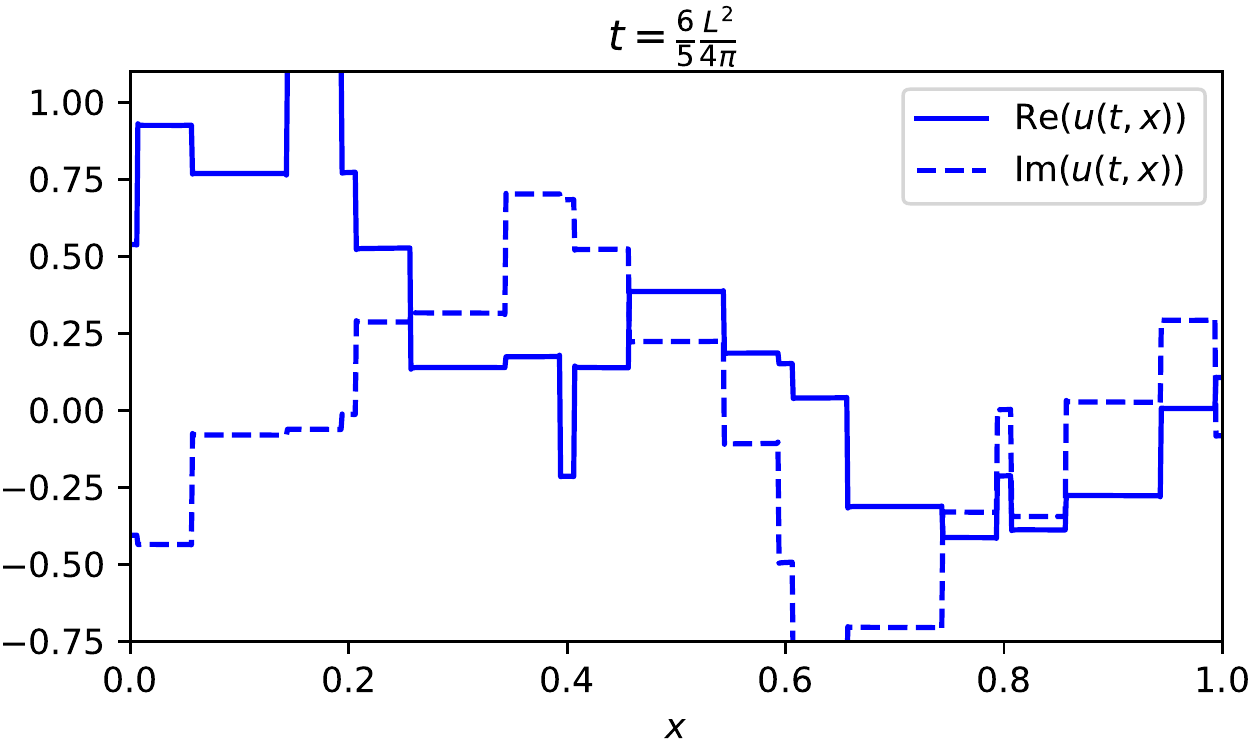}
    \subcaption{$t=\frac{6}{5}\frac{L^2}{4\pii}\approx0.09$}\label{fig:PseudoNonconservativeRational.2}
  \end{minipage}
  \hfill
  \begin{minipage}[b]{.32\linewidth}
    \centering
    \includegraphics[width=\linewidth]{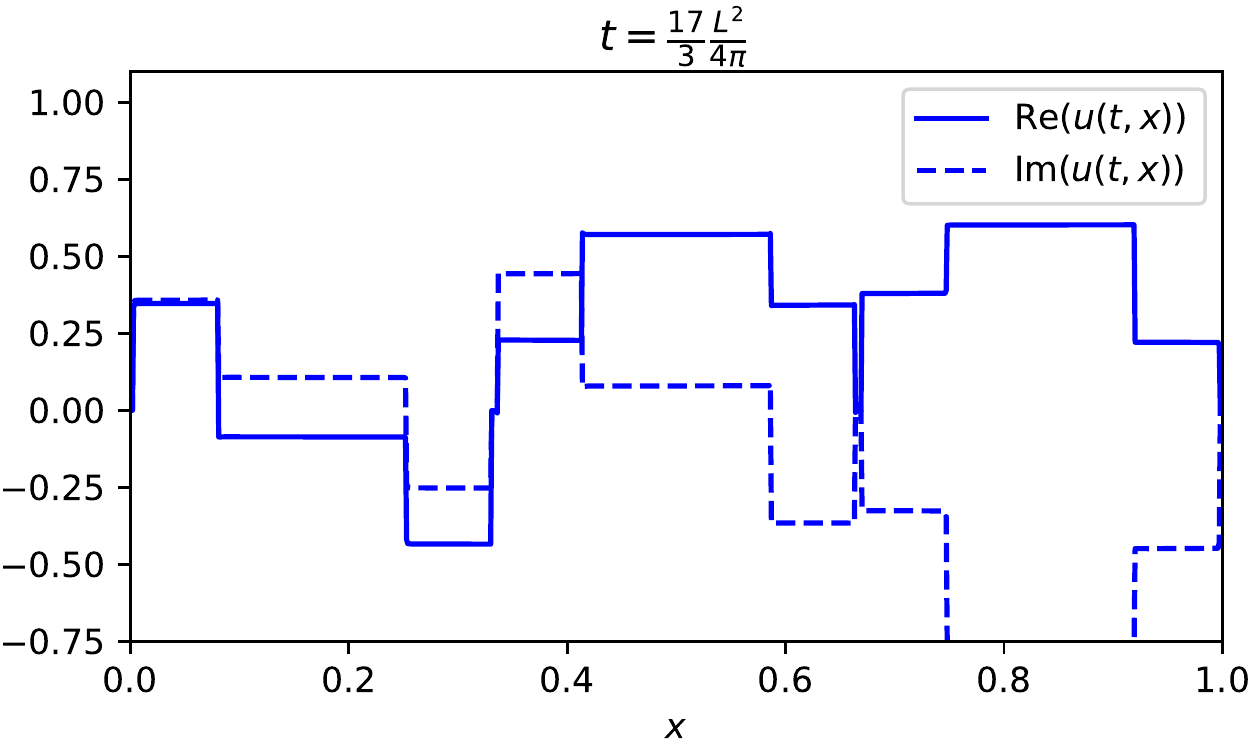}
    \subcaption{$t=\frac{17}{3}\frac{L^2}{4\pii}\approx0.45$}\label{fig:PseudoNonconservativeRational.6}
  \end{minipage}
  \\
  \vspace{2ex}
  \begin{minipage}[b]{.32\linewidth}
    \centering
    \includegraphics[width=\linewidth]{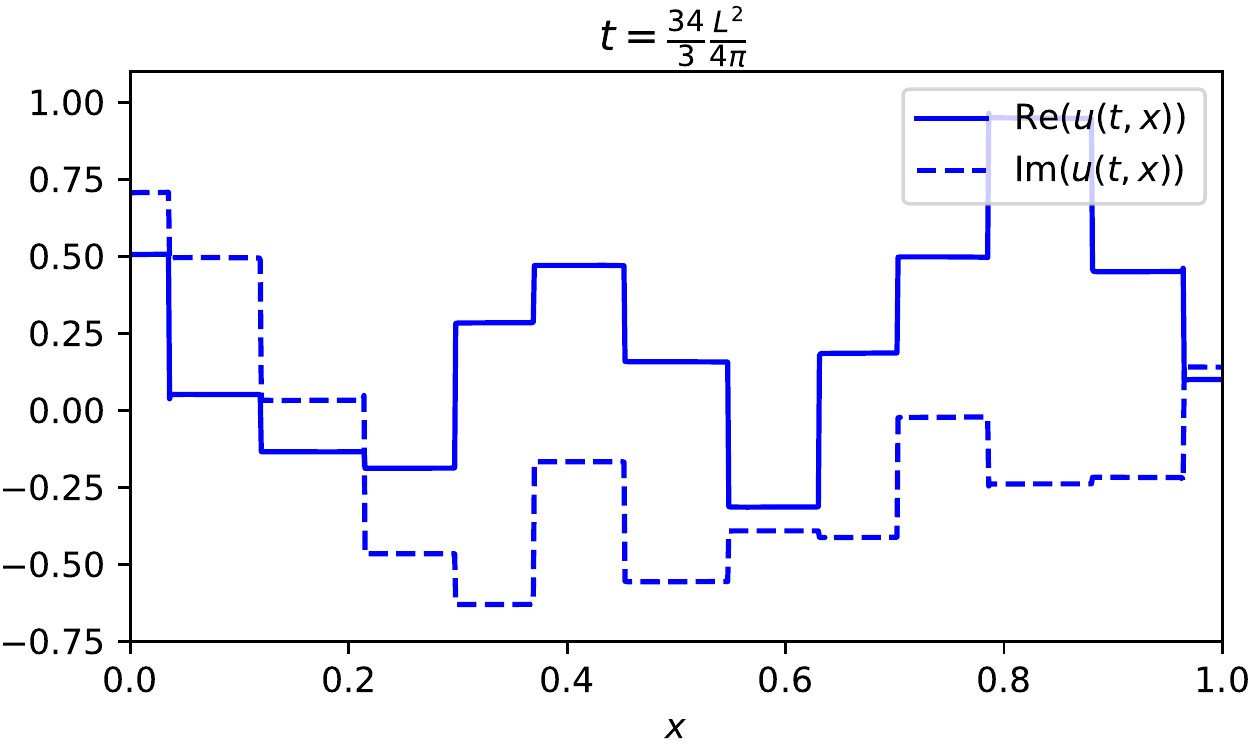}
    \subcaption{$t=\frac{34}{3}\frac{L^2}{4\pii}\approx0.9$}\label{fig:PseudoNonconservativeRational.11}
  \end{minipage}
  \hfill
  \begin{minipage}[b]{.32\linewidth}
    \centering
    \includegraphics[width=\linewidth]{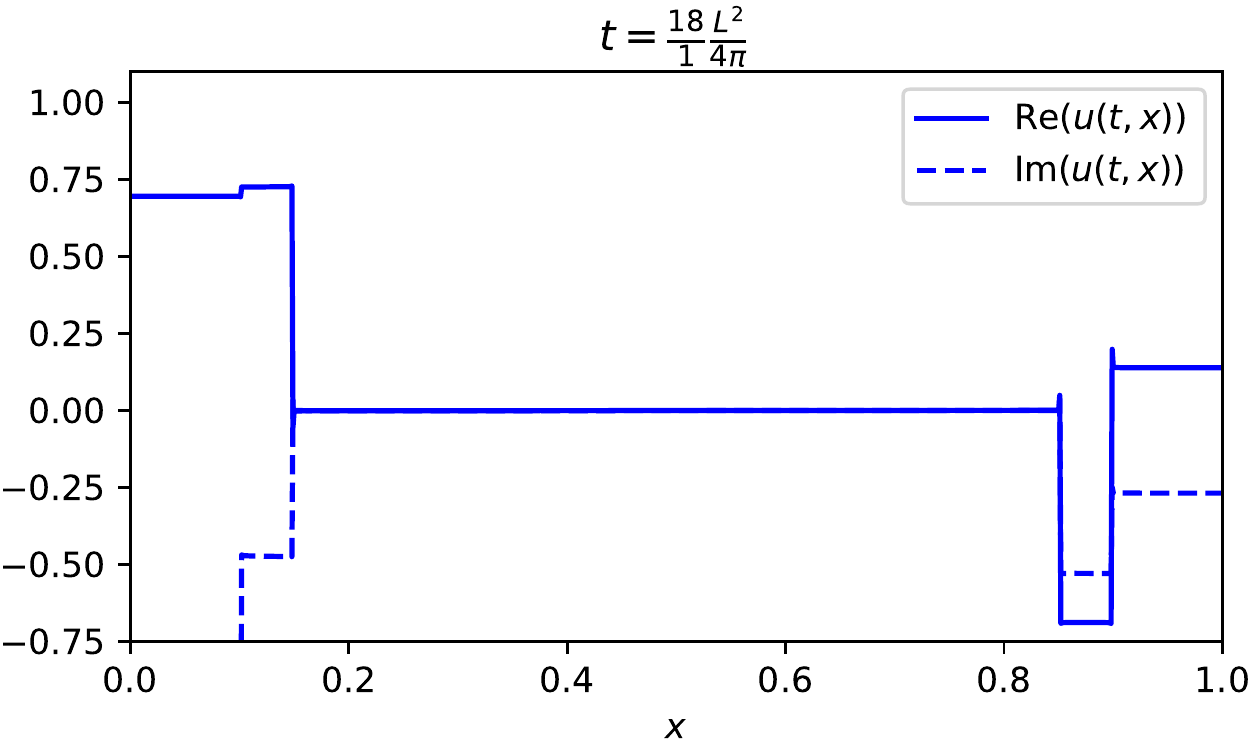}
    \subcaption{$t=\frac{18}{1}\frac{L^2}{4\pii}\approx1.44$}\label{fig:PseudoNonconservativeRational.17}
  \end{minipage}
  \hfill
  \begin{minipage}[b]{.32\linewidth}
    \centering
    \includegraphics[width=\linewidth]{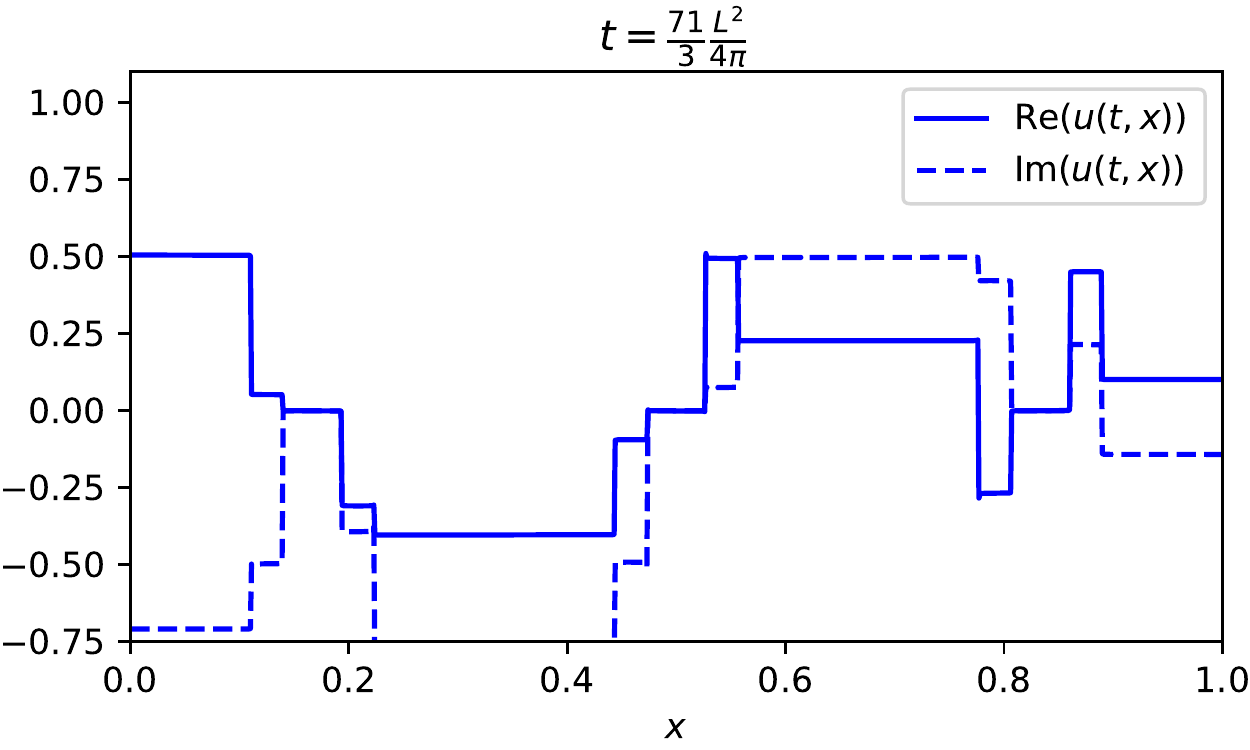}
    \subcaption{$t=\frac{71}{3}\frac{L^2}{4\pii}\approx1.89$}\label{fig:PseudoNonconservativeRational.22}
  \end{minipage}
  \caption{
    The solution of the linear Schr\"{o}dinger equation with energy non conserving pseudoperiodic boundary conditions $\beta_0=1/5$, $\beta_1=2$ on $[0,1]$ and box initial datum, evaluated at certain ``rational" times which are commensurate with $L^2/(4\pii)$.
  }
  \label{fig:PseudoNonconservativeRational}
\end{figure}

In order to better understand the effect, and guide the development of an analytic theory, it is instructive to perform further numerical experiments.
Referring to Figure~\ref{fig:PiecewiseConstantUnnecessary}, it appears that the solution at rational times --- commensurate with $L^2/(4\pii)$ --- is piecewise linear  for a piecewise linear initial datum.
The manner in which the slope of each piece of the solution appears to depend upon the slope of the initial datum also suggests that the solution at rational times may depend upon the initial datum in a simple way.
Note that, although Figure~\ref{fig:PiecewiseConstantUnnecessary} demonstrates that at appropriate times $u(t,x)$ is piecewise linear, its modulus $\lvert u(t,x) \rvert$ is not.
Hence our preference throughout Section~\ref{sec:NumericalResults} to display plots of real and imaginary parts of $u$, despite the greater relevance of $\lvert u(t,x) \rvert$ in applications.

\begin{figure}[h!]
  \centering
  \begin{minipage}[b]{.49\linewidth}
    \centering
    \includegraphics[width=\linewidth]{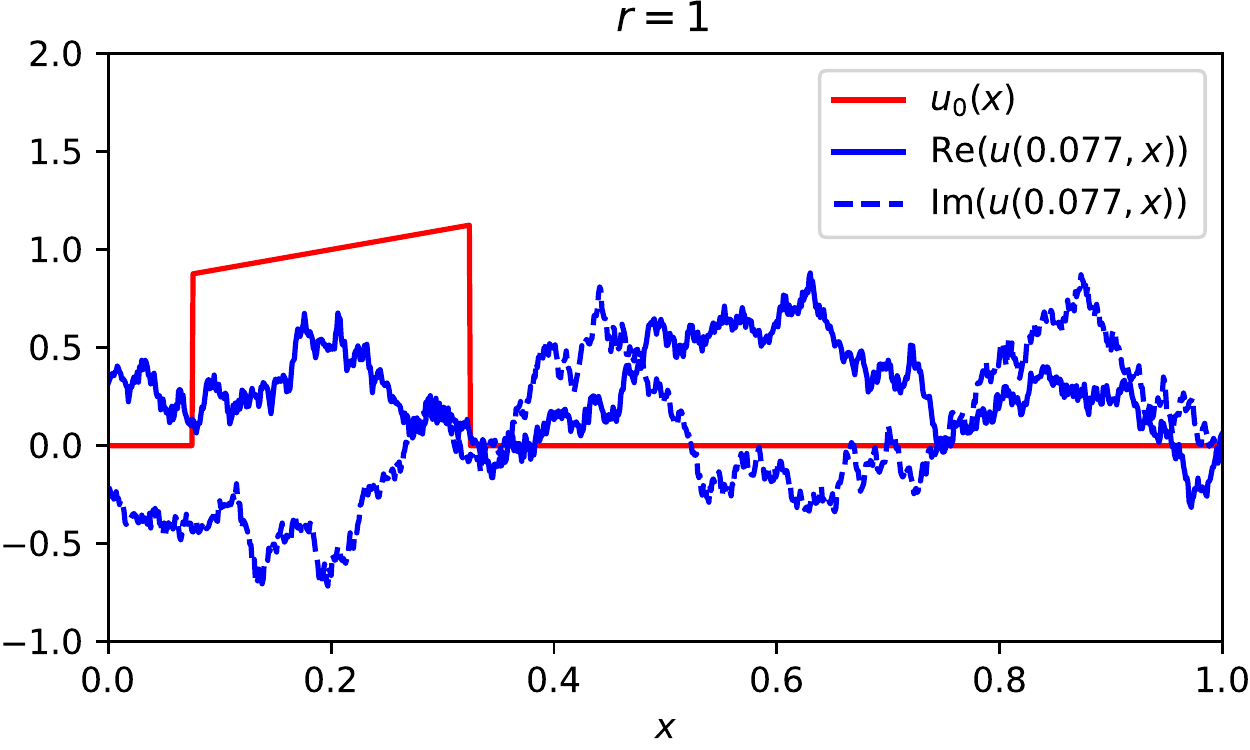}
    \subcaption{$t=0.077$, $r=1$}\label{fig:PiecewiseConstantUnnecessary.0}
  \end{minipage}
  \hfill
  \begin{minipage}[b]{.49\linewidth}
    \centering
    \includegraphics[width=\linewidth]{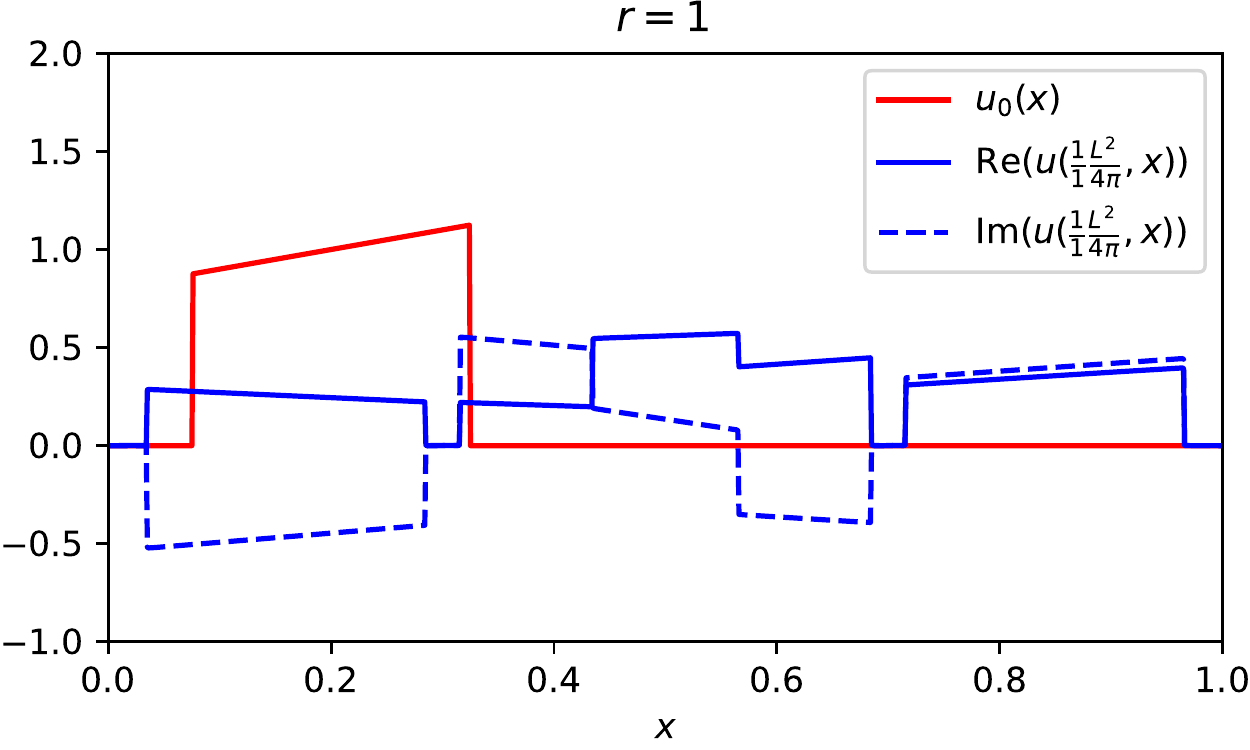}
    \subcaption{$t=\frac{1}{1}\frac{L^2}{4\pii}$, $r=1$}\label{fig:PiecewiseConstantUnnecessary.1}
  \end{minipage}
  \\
  \vspace{2ex}
  \begin{minipage}[b]{.49\linewidth}
    \centering
    \includegraphics[width=\linewidth]{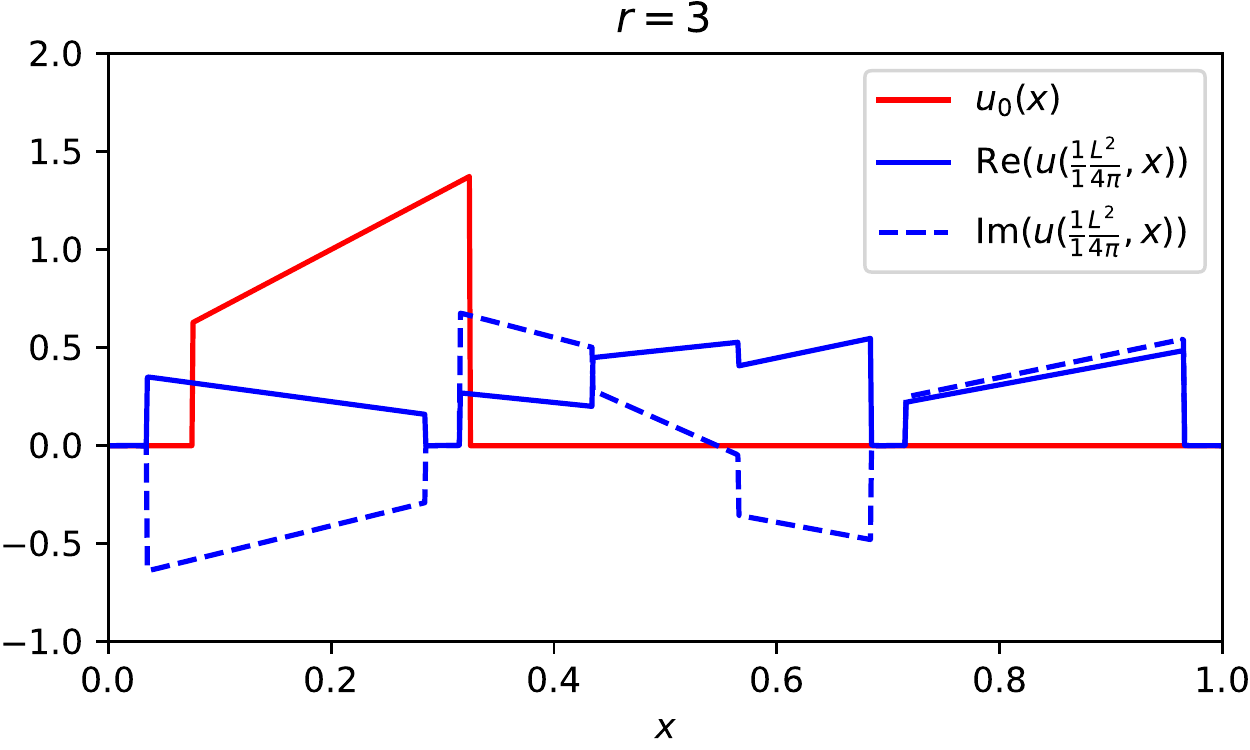}
    \subcaption{$t=\frac{1}{1}\frac{L^2}{4\pii}$, $r=3$}\label{fig:PiecewiseConstantUnnecessary.10}
  \end{minipage}
  \hfill
  \begin{minipage}[b]{.49\linewidth}
    \centering
    \includegraphics[width=\linewidth]{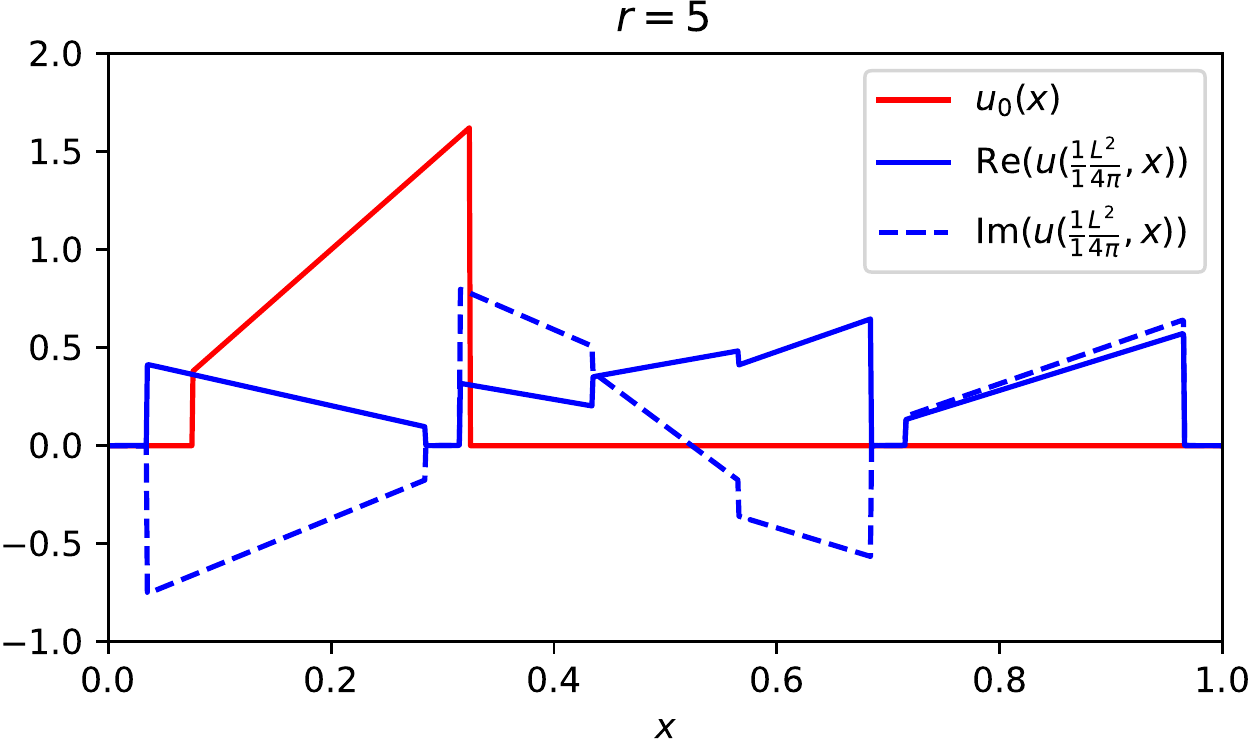}
    \subcaption{$t=\frac{1}{1}\frac{L^2}{4\pii}$, $r=5$}\label{fig:PiecewiseConstantUnnecessary.16}
  \end{minipage}
  \caption{
    The solution of the linear Schr\"{o}dinger equation with pseudoperiodic boundary conditions $\beta_0=1/5$, $\beta_1=2$ on $[0,1]$ and piecewise linear initial datum, evaluated at certain times.
  }
  \label{fig:PiecewiseConstantUnnecessary}
\end{figure}

To better study the precise dependence of the solution on the initial datum, we study the problem whose initial value is supported on a small subinterval of $[0,L]$.
Figure~\ref{fig:SmallSupport} suggests that the solution at rational times
is simply a certain linear combination of copies of the initial datum and its reflection, after certain shifts have been applied.
\begin{figure}[h!]
  \centering
  \begin{minipage}[b]{.32\linewidth}
    \centering
    \includegraphics[width=\linewidth]{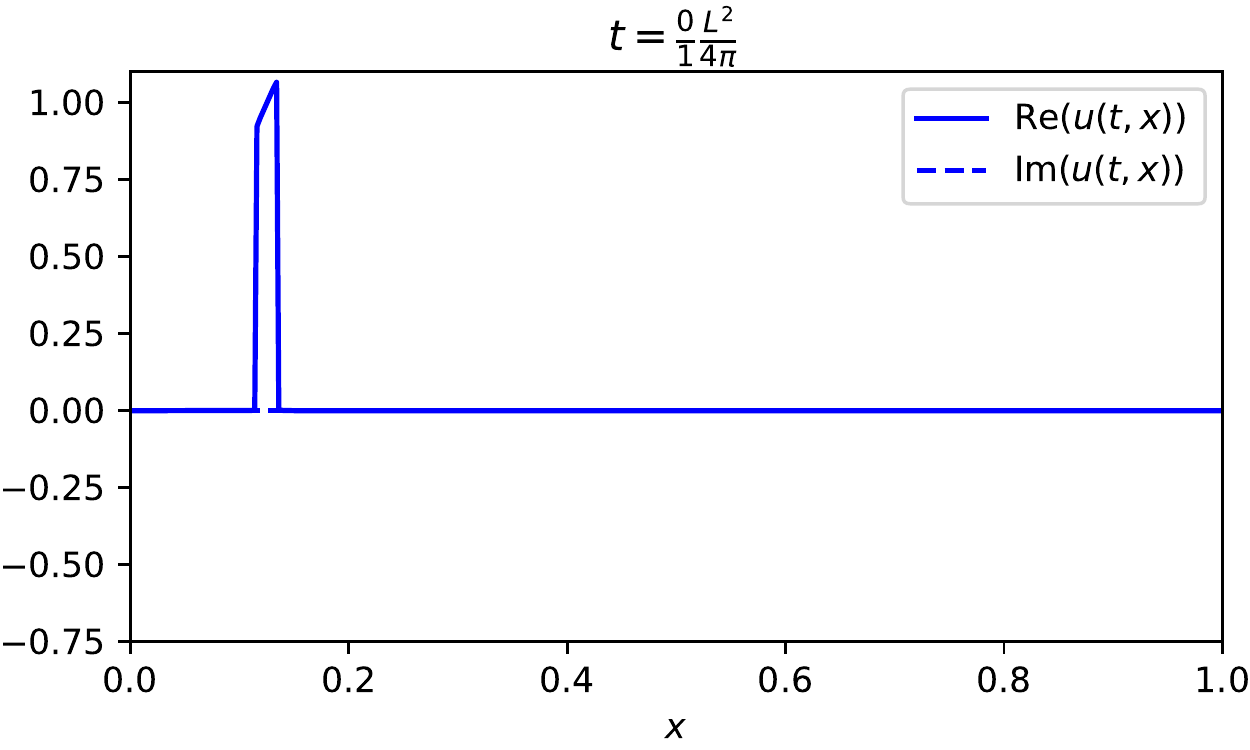}
    \subcaption{$t=0$}\label{fig:SmallSupport.1}
  \end{minipage}
  \hfill
  \begin{minipage}[b]{.32\linewidth}
    \centering
    \includegraphics[width=\linewidth]{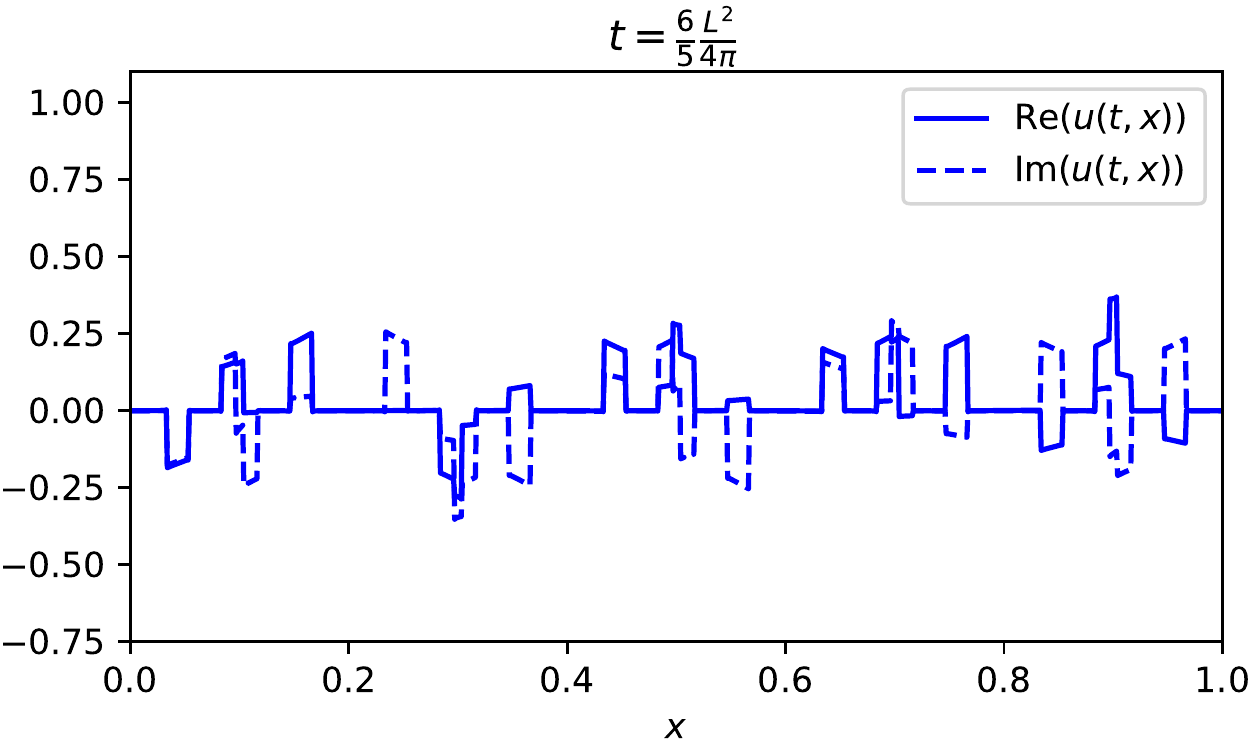}
    \subcaption{$\udsty t=\frac{6}{5}\frac{L^2}{4\pii}\approx0.09$}\label{fig:SmallSupport.2}
  \end{minipage}
  \hfill
  \begin{minipage}[b]{.32\linewidth}
    \centering
    \includegraphics[width=\linewidth]{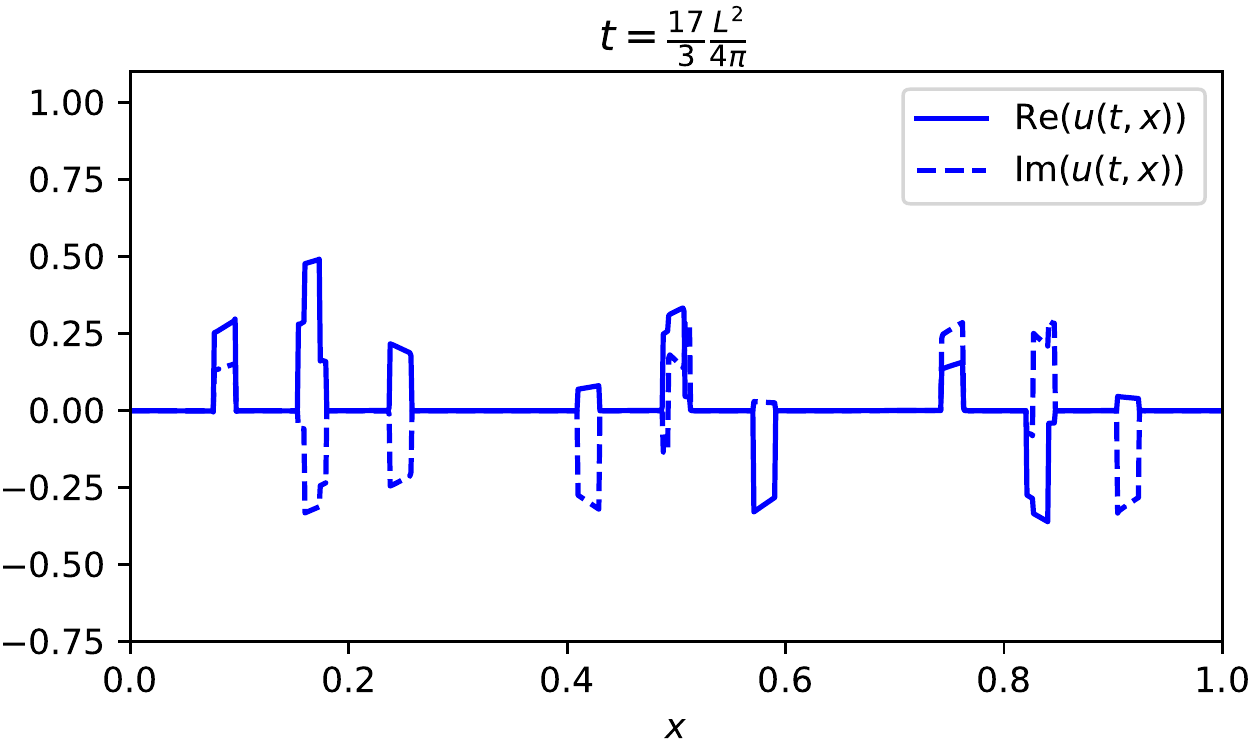}
    \subcaption{$\udsty t=\frac{17}{3}\frac{L^2}{4\pii}\frac{L^2}{4\pii}\approx0.45$}\label{fig:SmallSupport.6}
  \end{minipage}
  \\
  \vspace{2ex}
  \begin{minipage}[b]{.32\linewidth}
    \centering
    \includegraphics[width=\linewidth]{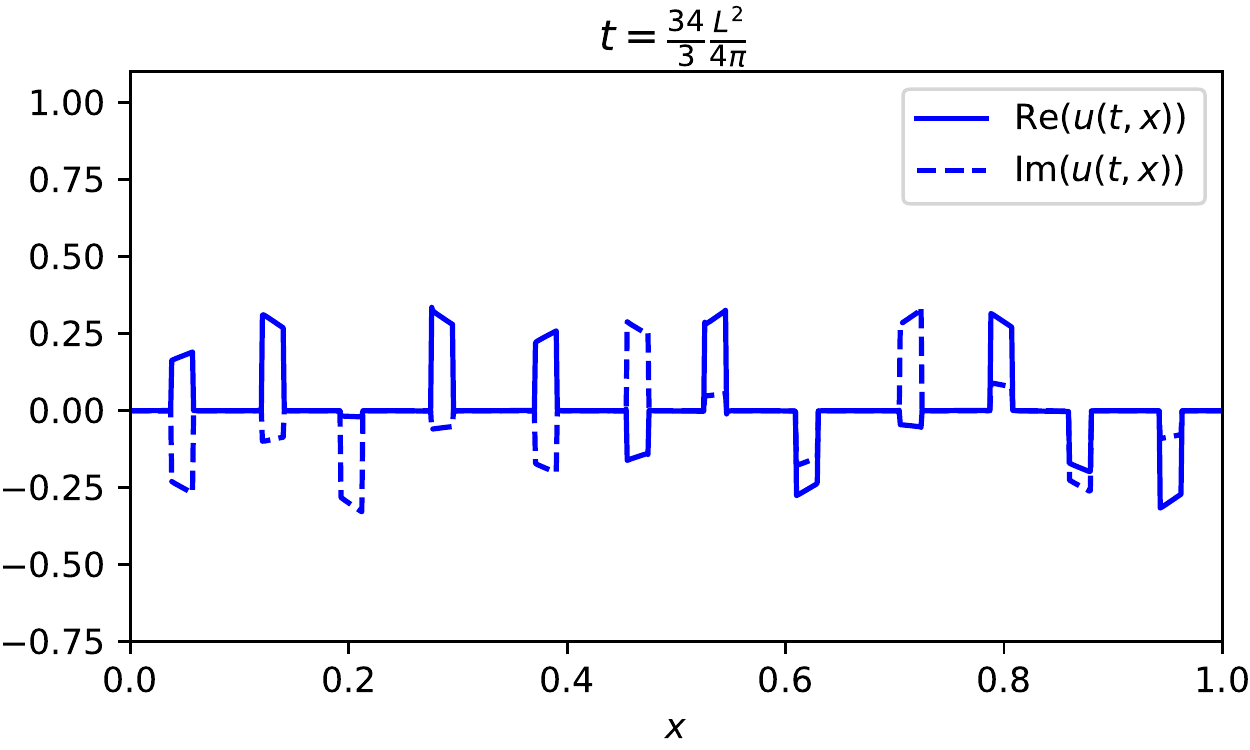}
    \subcaption{$\udsty t=\frac{34}{3}\frac{L^2}{4\pii}\approx0.9$}\label{fig:SmallSupport.11}
  \end{minipage}
  \hfill
  \begin{minipage}[b]{.32\linewidth}
    \centering
    \includegraphics[width=\linewidth]{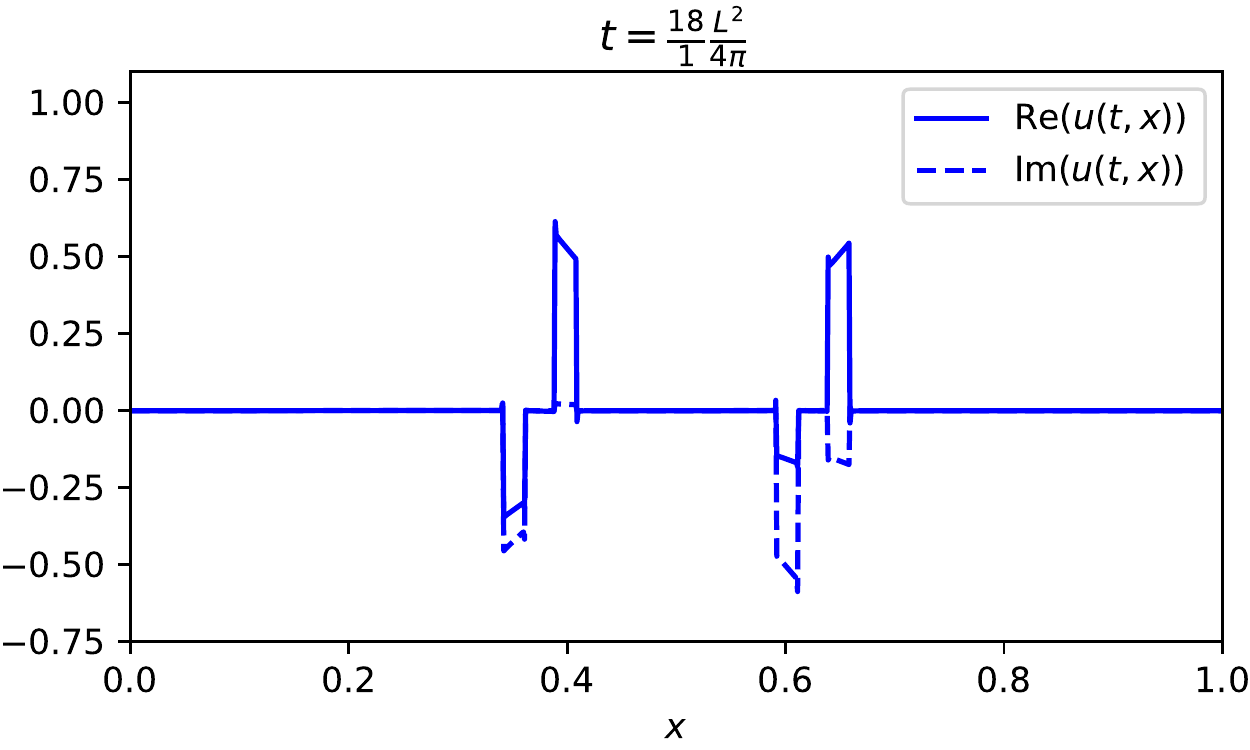}
    \subcaption{$\udsty t=\frac{18}{1}\frac{L^2}{4\pii}\approx1.44$}\label{fig:SmallSupport.17}
  \end{minipage}
  \hfill
  \begin{minipage}[b]{.32\linewidth}
    \centering
    \includegraphics[width=\linewidth]{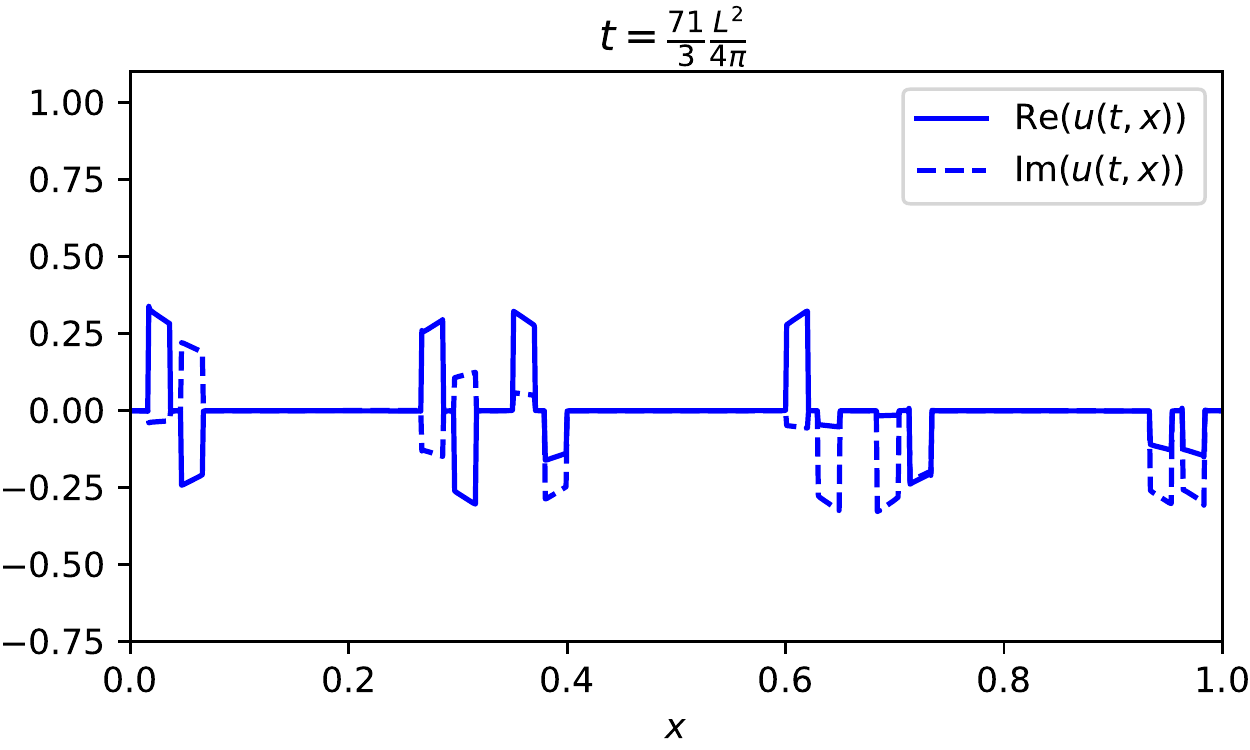}
    \subcaption{$\udsty t=\frac{71}{3}\frac{L^2}{4\pii}\approx1.89$}\label{fig:SmallSupport.22}
  \end{minipage}
  \caption{
    The solution of the linear Schr\"{o}dinger equation with pseudoperiodic boundary conditions $\beta_0=1/5$, $\beta_1=2$ on $[0,1]$ evaluated at ``rational" times commensurate with $L^2/(4\pii)$.
    The initial datum is $u_0(x) = 8x$ on $1/8-1/50 < x < 1/8+1/50$, and $x=0$ otherwise.
  }
  \label{fig:SmallSupport}
\end{figure}
Indeed the plots in Figure~\ref{fig:SmallSupport} correspond to plots in Figure~\ref{fig:PseudoNonconservativeRational}, but for an initial datum that is piecewise linear on its small support. 


The more careful study of the rational time dependence summarised in Figure~\ref{fig:FourVCopies} suggests that, at the rational time
\BE \label{eqn:TRelationUV}
  t = \frac{L^2}{4\pii} \times \frac{p}{q},
\EE
for $p$, $q$ coprime positive integers, there are $4\:q$ shifted and reflected copies of the initial datum.
\begin{figure}[h!]
  \centering
  \begin{minipage}[b]{.32\linewidth}
    \centering
    \includegraphics[width=\linewidth]{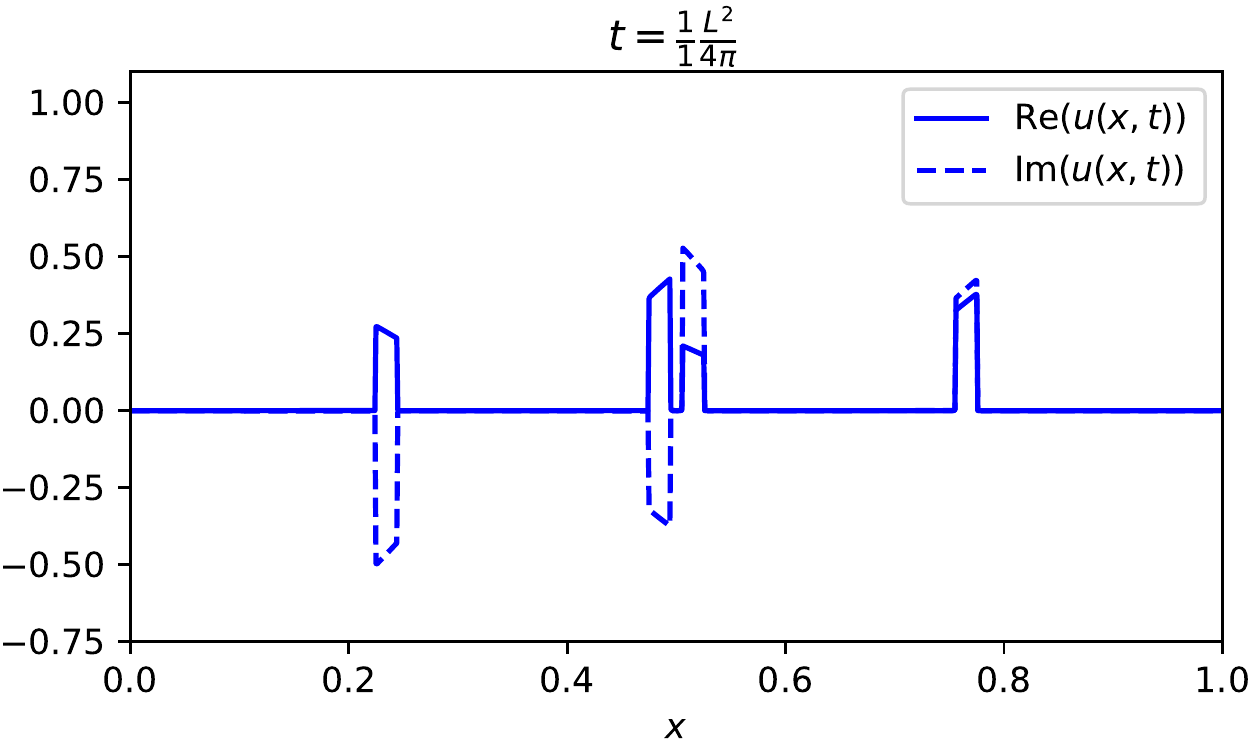}
    \subcaption{$\udsty \frac{p}{q}=\frac{1}{1}$}\label{fig:FourVCopies.u1_v1}
  \end{minipage}
  \hfill
  \begin{minipage}[b]{.32\linewidth}
    \centering
    \includegraphics[width=\linewidth]{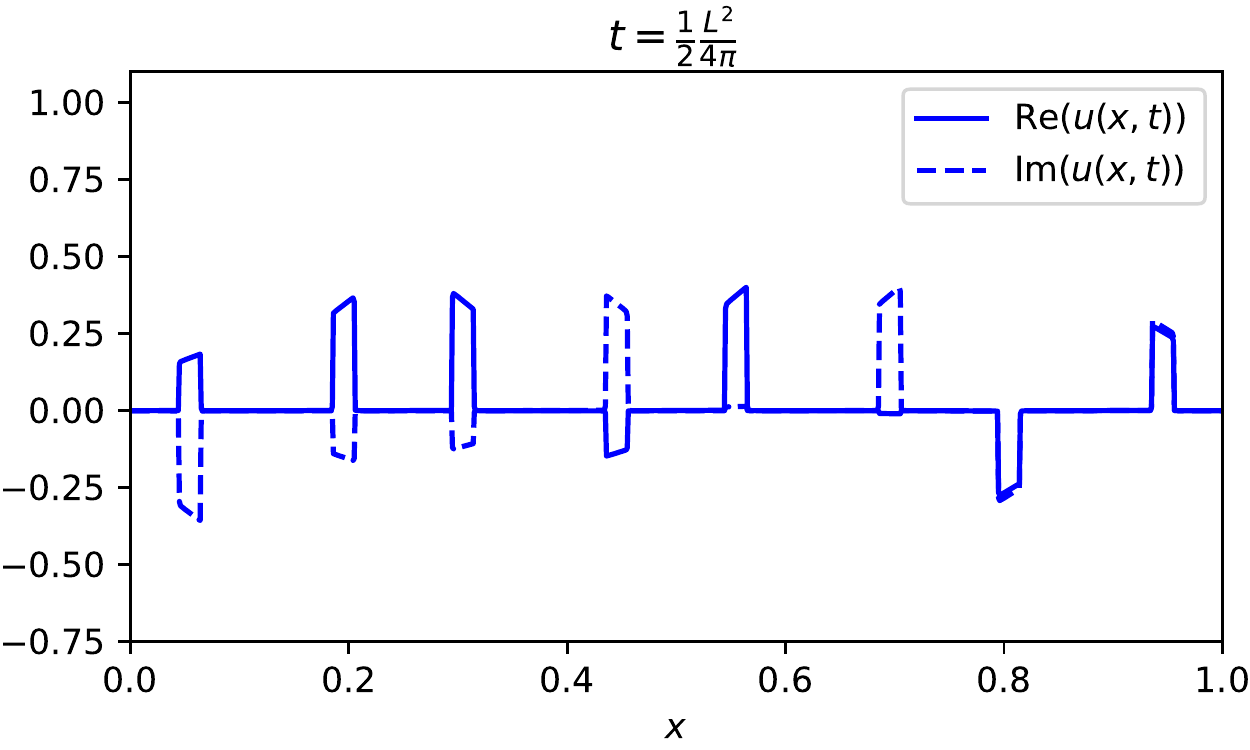}
    \subcaption{$\udsty \frac{p}{q}=\frac{1}{2}$}\label{fig:FourVCopies.u1_v2}
  \end{minipage}
  \hfill
  \begin{minipage}[b]{.32\linewidth}
    \centering
    \includegraphics[width=\linewidth]{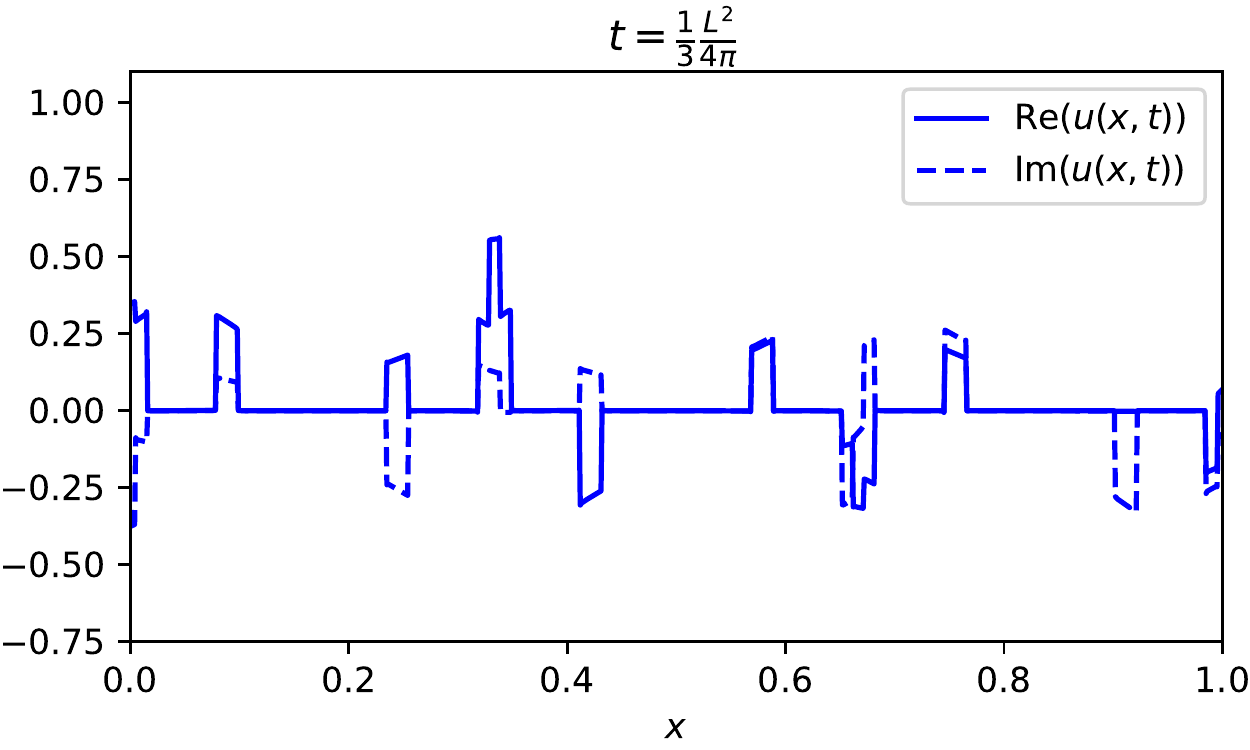}
    \subcaption{$\udsty \frac{p}{q}=\frac{1}{3}$}\label{fig:FourVCopies.u1_v3}
  \end{minipage}
  \\
  \vspace{2ex}
  \begin{minipage}[b]{.32\linewidth}
    \centering
    \includegraphics[width=\linewidth]{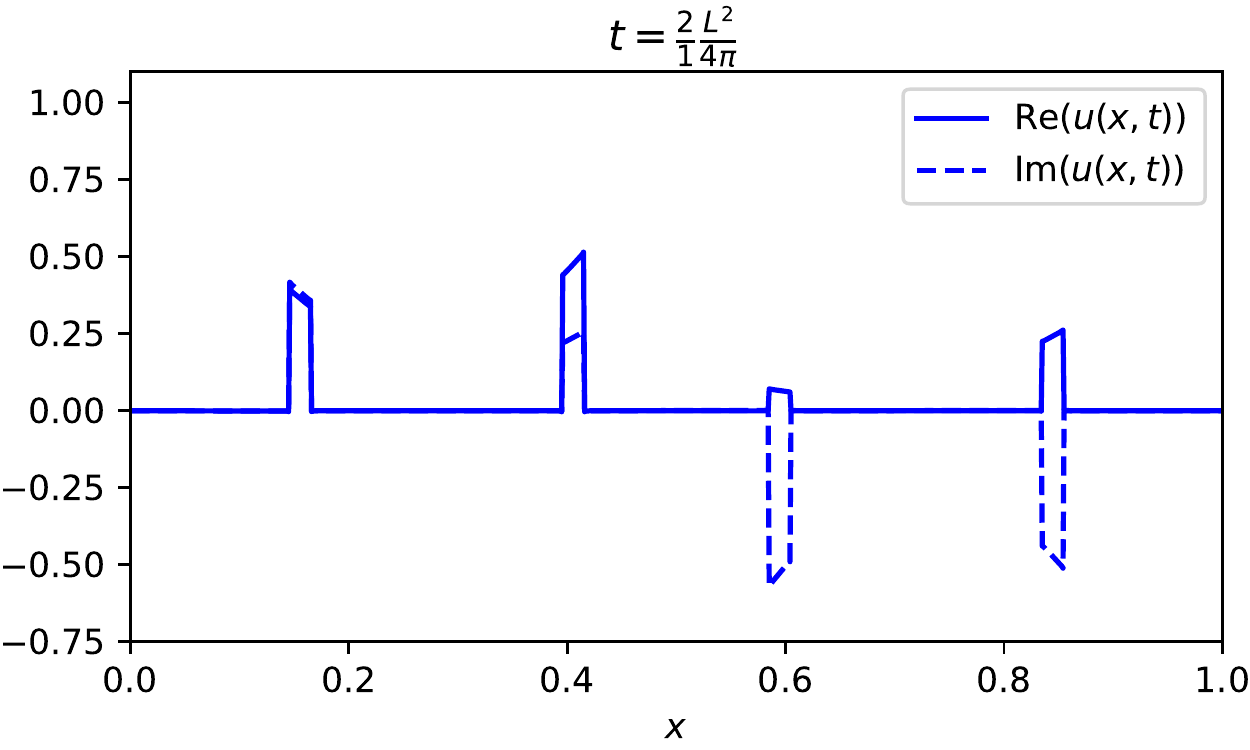}
    \subcaption{$\udsty \frac{p}{q}=\frac{2}{1}$}\label{fig:FourVCopies.u2_v1}
  \end{minipage}
  \hfill
  \begin{minipage}[b]{.32\linewidth}
    \centering
    \includegraphics[width=\linewidth]{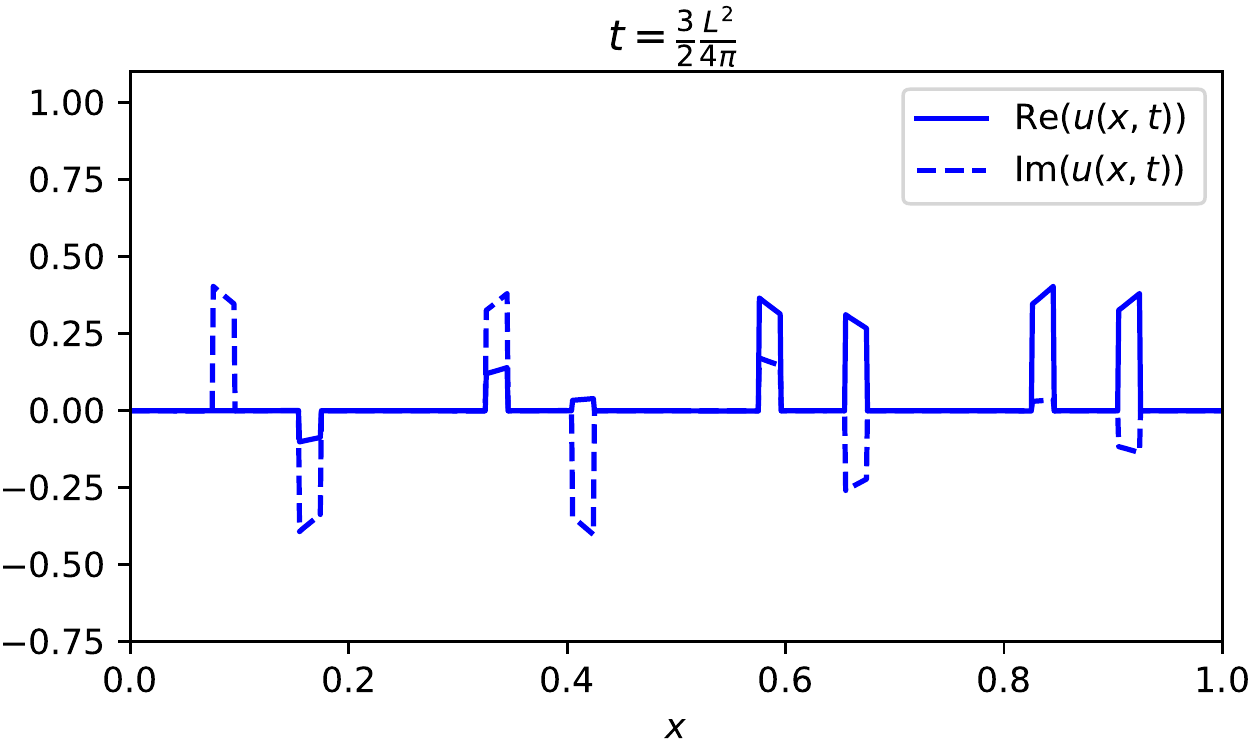}
    \subcaption{$\udsty \frac{p}{q}=\frac{3}{2}$}\label{fig:FourVCopies.u3_v2}
  \end{minipage}
  \hfill
  \begin{minipage}[b]{.32\linewidth}
    \centering
    \includegraphics[width=\linewidth]{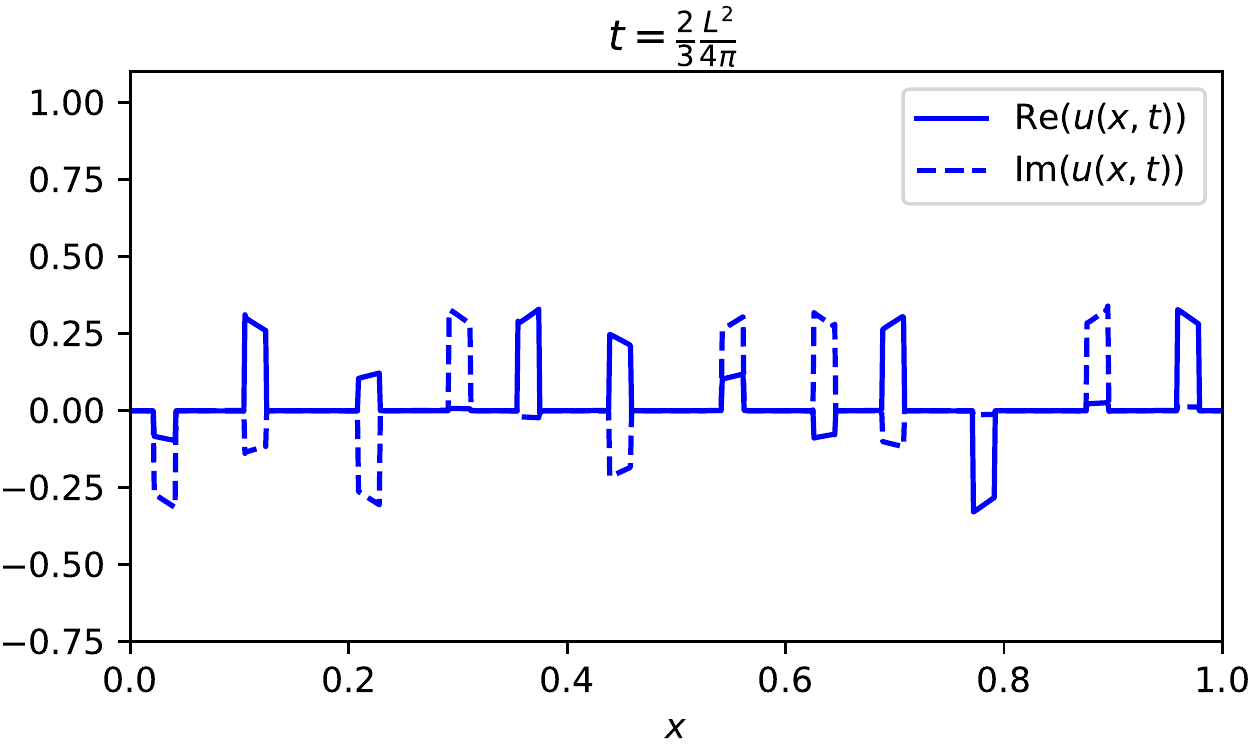}
    \subcaption{$\udsty \frac{p}{q}=\frac{2}{3}$}\label{fig:FourVCopies.u2_v3}
  \end{minipage}
  \caption{
    The solution of the linear Schr\"{o}dinger equation with pseudoperiodic boundary conditions $\beta_0=1/5$, $\beta_1=2$ on $[0,1]$ and initial datum of small support, evaluated at ``rational" times commensurate with $L^2/(4\pii)$ and $u_0(x) = 8x$ on $L(1/8-1/50) < x < L(1/8+1/50)$, and $x=0$ otherwise.
  }
  \label{fig:FourVCopies}
\end{figure}
Although it may at first appear unnatural to factor $4$ out of $q$ in equation~\eqref{eqn:TRelationUV}, comparing figures~\ref{fig:FourVCopies.u1_v1} and~\ref{fig:FourVCopies.u2_v1} suggests this is the proper approach.
It is not quite true to claim that, at such times, the support of $u(t,x)$ is $4\:q$ shifted copies of the support of $u_0(x)$ because, as demonstrated in Figure~\ref{fig:FourVCopies.u1_v3}, some of the connected intervals of support may overlap.
Nevertheless, when the support of $u_0(x)$ is much smaller than $L$, it is relatively easy to observe the linear interactions of the superimposed shifts of $u_0(x)$.
Thus, we see strong numerical evidence for a revival in the solution profiles at rational times.

At a fixed time, we can study how translating the initial datum affects the solution.
Indeed, Figure~\ref{fig:ShiftIndependentID} shows that translations of initial datum lead to translations within the solution. 
Specifically, shifts of the initial datum all translate to the right, while the shifted reflections all translate to the left. 
Moreover, by considering different initial data, one deduces that the shifts applied to each copy and reflection of the initial datum are independent thereof.
%
\begin{figure}[h!]
  \centering
  \begin{minipage}[b]{.49\linewidth}
    \centering
    \includegraphics[width=\linewidth]{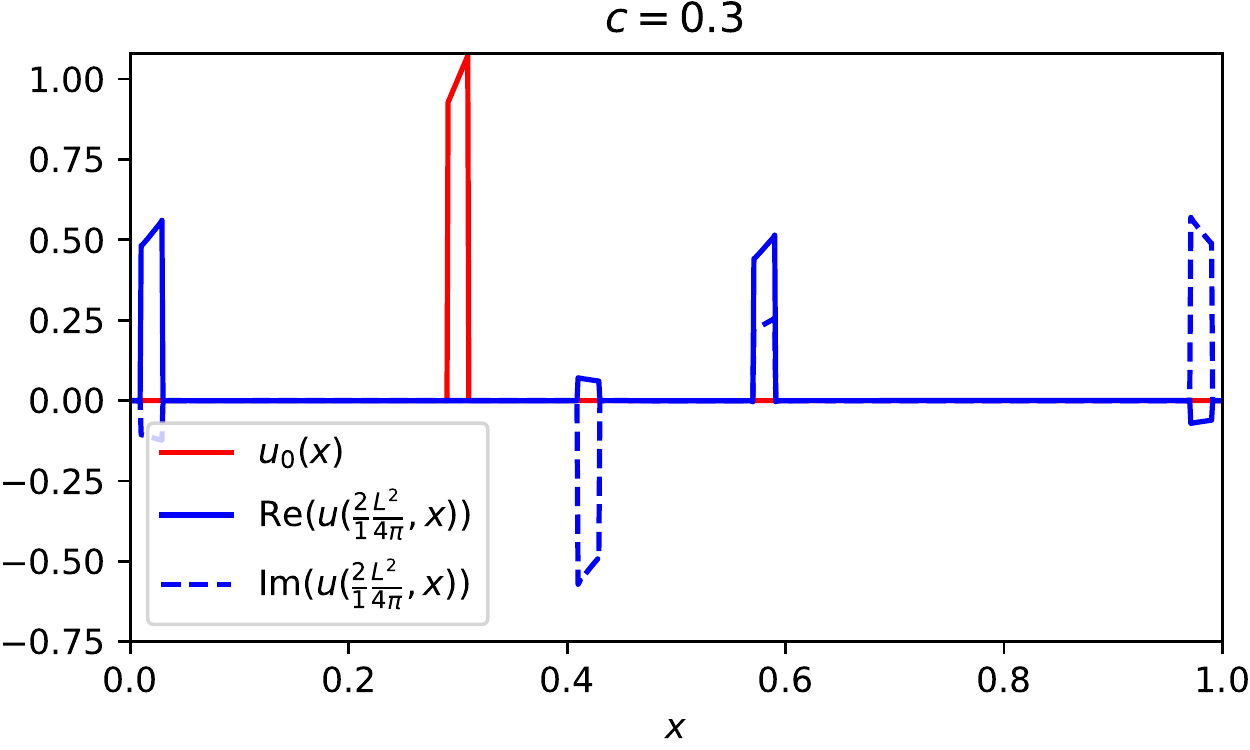}
    \subcaption{$c=0.3$}\label{fig:ShiftIndependentID.43}
  \end{minipage}
  \hfill
  \begin{minipage}[b]{.49\linewidth}
    \centering
    \includegraphics[width=\linewidth]{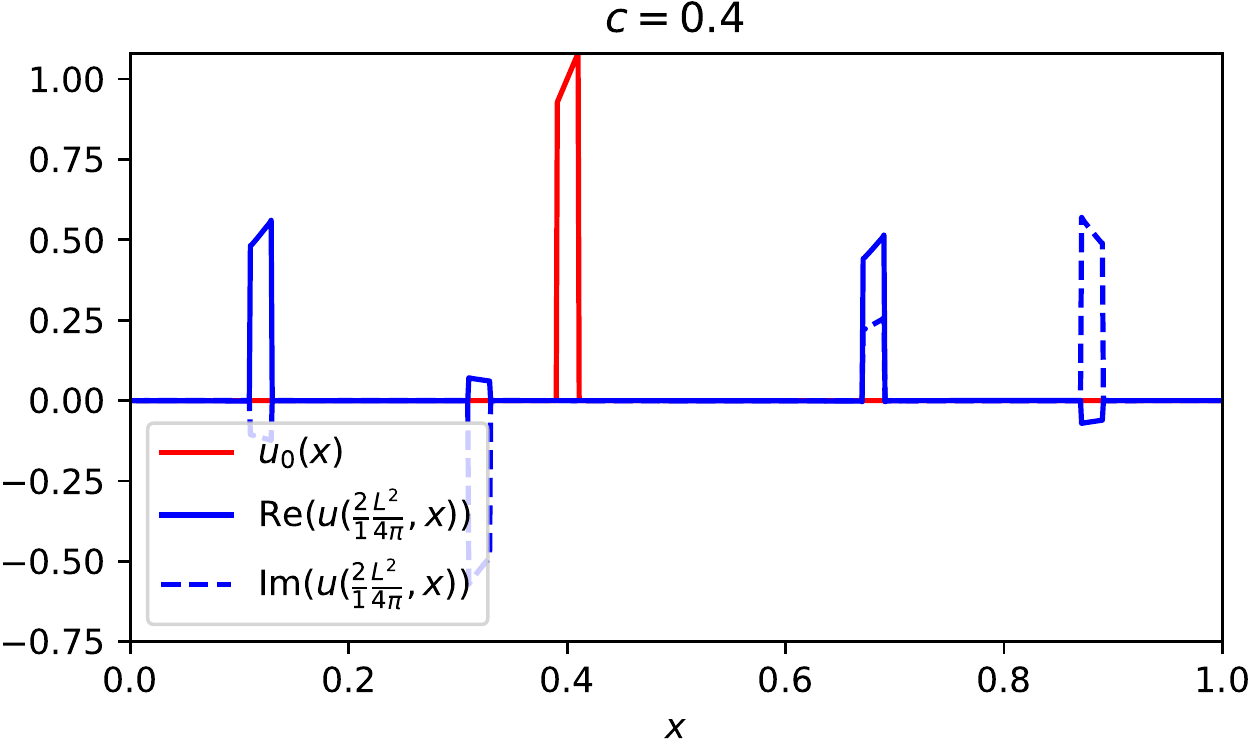}
    \subcaption{$c=0.4$}\label{fig:ShiftIndependentID.20}
  \end{minipage}
  \\
  \vspace{2ex}
  \begin{minipage}[b]{.49\linewidth}
    \centering
    \includegraphics[width=\linewidth]{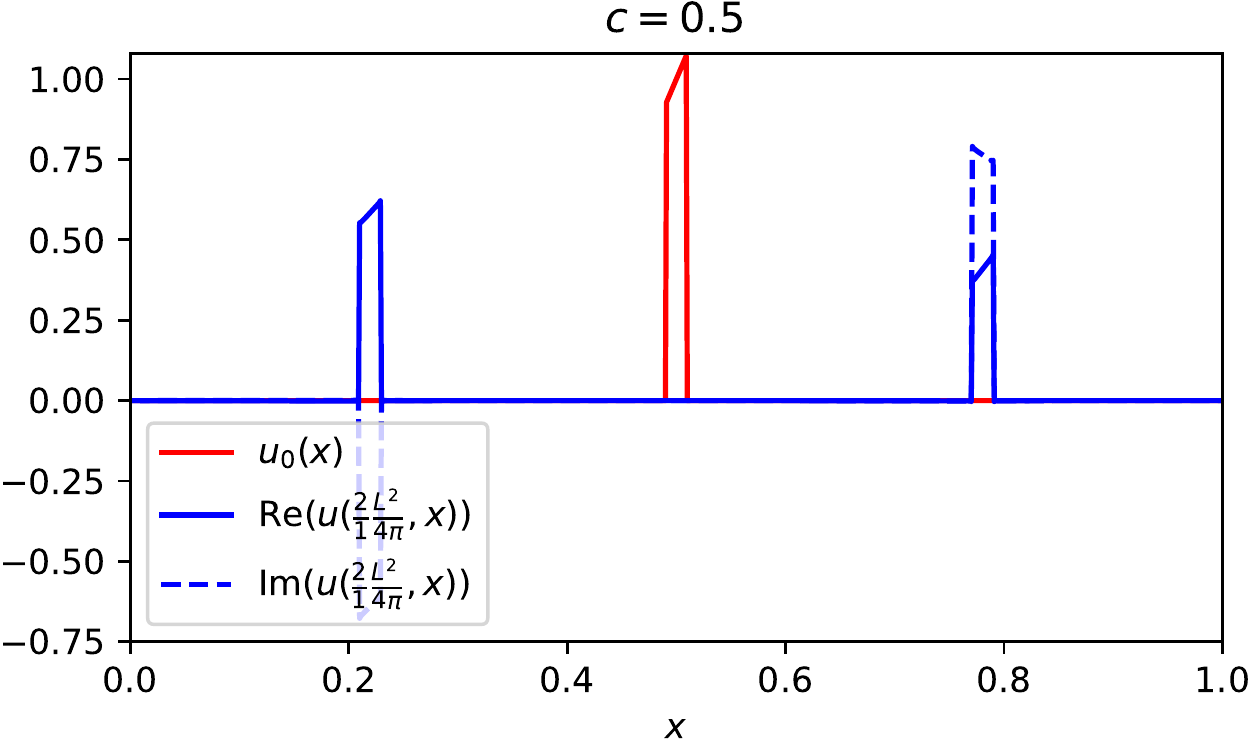}
    \subcaption{$c=0.5$}\label{fig:ShiftIndependentID.25}
  \end{minipage}
  \hfill
  \begin{minipage}[b]{.49\linewidth}
    \centering
    \includegraphics[width=\linewidth]{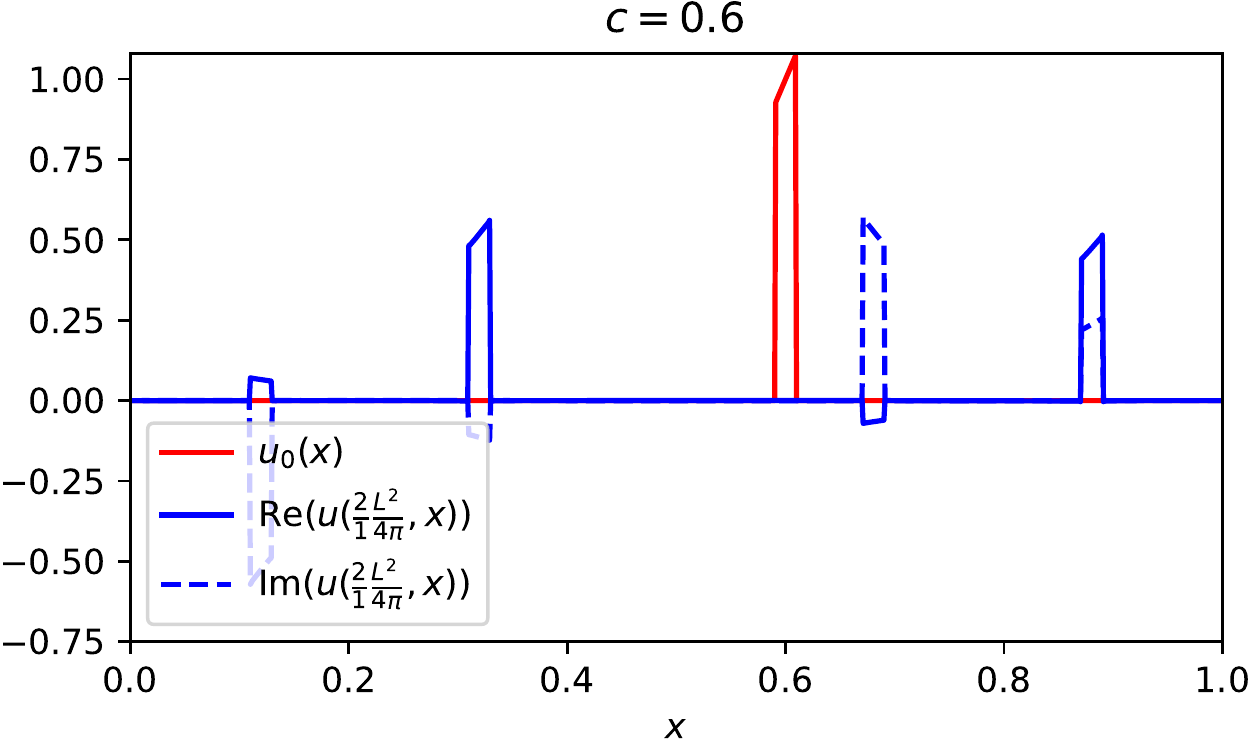}
    \subcaption{$c=0.6$}\label{fig:ShiftIndependentID.30}
  \end{minipage}
  \caption{
    The solution of the linear Schr\"{o}dinger equation with pseudoperiodic boundary conditions $\beta_0=1/5$, $\beta_1=2$ on $[0,1]$ and initial datum of small support, centred at different positions, evaluated at time $L^2/(4\pii)$.
    The initial data are $u_0(x) = 8(x-c)+1$ on $L(c-1/50) < x < L(c+1/50)$, and $x=0$ otherwise, for various $c$.
  }
  \label{fig:ShiftIndependentID}
\end{figure}

\section{Analytic characterization of revivals} \label{sec:analysis}

The numerical observations presented in Section~\ref{sec:NumericalResults} suggest the existence of revivals. 
Namely, at the rational time $\udsty t=\fracp{L^2p}{4\pii\:q}$ for positive coprime integers $p$ and $q$, the solution of the initial-boundary value problem is a linear combination of $4\:q$ copies of the initial datum, subject to certain shifts and reflections that are independent of the initial datum.


\subsection{{Shifts and Reflections}}

In order to analytically establish the existence of revivals we first need to understand the effect of the (bounded) Fourier transform on shifts (translations) and reflections of functions. 
Recall that application of the Fourier transform \eq{Ftr}
to a shift of a function produces an exponential factor multiplying the Fourier transform of the original function:
\BE \label{eqn:FourierTransformShifts}
  \mathcal{F}\bbk{\phi(\cdot-s)}(\kappa) = e^{\i \kappa s} \mathcal{F}[\phi](\kappa).
\EE
In the periodic case, one formulates a similar result by periodically extending the function $\phi $ that is initially defined on the interval $[0,L]$.
In the present situation, one needs to introduce a more general pseudoperiodic extension of the initial data and its reflection.

Explicitly, let $\phi(x)$ be a function defined for $0 \leq x \leq L$. 
Let $\phi(L-x)$ denote its reflection through the midpoint of the interval.
Recalling the definition \eq{gamma} of the quantity $\gamma $, which, by the following formulae, represents the phase shift applied to successive intervals in the full line scaled-periodic full line extensions of $\phi(x)$ and its reflection $\phi(L-x)$:
\begin{subequations} \label{eqn:DefnShifts}
\begin{align}
	\phi_0^\sharp(x;\gamma) ={}& \gamma^{m} \phi(x-mL), \quad\mbox{where } m\in\ZZ \mbox{ is such that } mL \leq x < (m+1)L, \\
	\phi_0^\flat (x;\gamma) ={}& \gamma^{m} \phi(mL-x), \quad\mbox{where } m\in\ZZ \mbox{ is such that } (m-1)L \leq x < mL.
\end{align}
We further define the shifts of these extensions
\begin{align}
	\phi_s^\sharp(x;\gamma) ={}& \phi_0^\sharp(x+s;\gamma), \qquad \qquad 
	\phi_s^\flat (x;\gamma) = \phi_0^\flat (x-s;\gamma).
\end{align}
\end{subequations}
Thus, the sharped functions represent shifts of the pseudoperiodic extension of $\phi $ while the flatted functions represent reflections thereof.

We next calculate their bounded Fourier transforms on the interval $[0,L]$.
Given $s\in\RR$ and $n\in\ZZ$ such that $nL \leq s < (n+1)L$,
\begin{align*}
	\widehat{\phi_s^\sharp}(\kap;\gamma) ={}& \int_0^L e^{-\i \kappa\:x}\, \phi_s^\sharp(x;\gamma) \ud x = \int_0^L e^{-\i \kappa\:x}\, \phi_0^\sharp(x+s;\gamma) \ud x \\
  	={}& \int_{0}^{(n+1)L-s} e^{-\i \kappa\:x}\, \gamma^{n} \phi(x+s-nL) \ud x + \int_{(n+1)L-s}^{L} e^{-\i \kappa\:x}\, \gamma^{n+1} \phi(x+s-(n+1)L) \ud x \\
  	={}& \gamma^{n} \int_{s-nL}^L e^{-\i \kap(y-s+nL)}\, \phi(y) \ud y + \gamma^{n+1} \int_0^{s-nL} e^{-\i \kap(y-s+(n+1)L)}\, \phi(y) \ud y.\\
  \widehat{\phi_s^\flat}(\kap;\gamma) ={}& \int_0^L e^{-\i \kappa\:x} \,\phi_s^\flat(x;\gamma) \ud x = \int_0^L e^{-\i \kappa\:x} \,\phi_0^\flat(x-s;\gamma) \ud x \\
  	={}& \int_{0}^{s-nL} e^{-\i \kappa\:x}\, \gamma^{-n} \phi(-nL-x+s) \ud x + \int_{s-nL}^{L} e^{-\i \kappa\:x}\, \gamma^{-(n-1)} \phi(-(n-1)L-x+s) \ud x \\
  	={}& \gamma^{-n} \int_{0}^{s-nL} e^{\i \kap(y+nL-s)} \,\phi(y) \ud y +\gamma^{-(n-1)} \int_{s-nL}^{L} e^{\i \kap(y+(n-1)L-s)}\, \phi(y) \ud y.
\end{align*}
Similarly, for $nL \leq s < (n+1)L$,
\begin{align*}
	\widehat{\phi_{s}^\sharp}(\kap;\gamma ) ={}& e^{\i \kappa s} \left[
		\gamma^{n+1}e^{-\i L(n+1)\kappa} \int_0^{s-nL} e^{-\i \kappa y} \phi(y) \ud y +
		\gamma^ne^{-\i L n\kappa } \int_{s-nL}^L e^{-\i \kappa y} \,\phi(y) \ud y \right], \\
	\widehat{\phi_{-s}^\flat} (\kap;\gamma) ={}& e^{ i\kappa s} \left[
		\gamma^{n+1}e^{-\i L(n+1)\kappa} \int_0^{(n+1)L-s} e^{\i \kappa y}\, \phi(y) \ud y +
		\gamma^{n+2}e^{-\i L(n+2) \kappa} \int_{(n+1)L-s}^L e^{\i \kappa y}\, \phi(y) \ud y \right],\\
	\widehat{\phi_{s}^\flat} (-\kap;\gamma^{-1} ) ={}& e^{\i \kappa s} \left[
		\gamma^n e^{-\i L n \kappa  } \int_0^{s-nL} e^{-\i \kappa y}\, \phi(y) \ud y +
		\gamma^{n-1}e^{-\i L(n-1)\kappa} \int_{s-nL}^L e^{-\i \kappa y}\, \phi(y) \ud y \right], \\
	\widehat{\phi_{-s}^\sharp}(-\kap;\gamma^{-1}) ={}& e^{\i \kappa s} \left[
		\gamma^ne^{-\i L n\kappa  } \int_0^{(n+1)L-s} e^{\i \kappa y} \,\phi(y) \ud y +
		\gamma^{n+1}e^{-\i L(n+1)\kappa)} \int_{(n+1)L-s}^L e^{\i \kappa y} \,\phi(y) \ud y \right]. 
	\end{align*}
Evaluating some of the above at the zeros of the discriminant $\kap=\pm\kappa_j$, we have
\Eq{eqn:sharp_flat_q0}
$$\eeq{	\widehat{\phi_{s}^\sharp}(\kappa_j;\gamma) = e^{\i \kappa_j s} \,\widehat{\phi}(\kappa_j), \\
	\widehat{\phi_{-s}^\flat} (\kappa_j;\gamma) = e^{ \i \kappa_j s} \,\widehat{\phi}(-\kappa_j),\\
	\widehat{\phi_{s}^\flat} (-\kappa_j;\gamma^{-1} ) = e^{\i \kappa_j s}\, \widehat{\phi}(\kappa_j), \\
	\widehat{\phi_{-s}^\sharp}(-\kappa_j;\gamma^{-1}) = e^{\i \kappa_j s} \,\widehat{\phi}(-\kappa_j) , \\
	\widehat{\phi_{s}^\sharp}(-\kappa_j;\gamma) = e^{-\i \kappa_j s} \left[
		\gamma^{2(n+1)} \int_0^{s-nL} e^{\i \kappa_j y} \,\phi(y) \ud y +
		\gamma^{2n} \int_{s-nL}^L e^{\i \kappa_j y} \,\phi(y) \ud y \right], \\
	\widehat{\phi_{-s}^\flat} (-\kappa_j;\gamma) = e^{-\i \kappa_j s} \left[
		\gamma^{2(n+1)}\!\! \int_0^{(n+1)L-s} \!\!\! e^{-\i \kappa_j y}\, \phi(y) \ud y +
		\gamma^{2(n+2)}\!\! \int_{(n+1)L-s}^L \!\!\!e^{-\i \kappa_j y} \,\phi(y) \ud y \right],\\
	\widehat{\phi_{s}^\flat} (\kappa_j;\gamma^{-1}) = e^{-\i \kappa_j s} \left[
		\gamma^{2n} \int_0^{s-nL} e^{\i \kappa_j y}\, \phi(y) \ud y +
		\gamma^{2(n-1)} \int_{s-nL}^L e^{\i \kappa_j y} \,\phi(y) \ud y \right], \\
	\widehat{\phi_{-s}^\sharp}(\kappa_j;\gamma^{-1}) = e^{-\i \kappa_j s} \left[
		\gamma^{2n} \int_0^{(n+1)L-s} e^{-\i \kappa_j y}\, \phi(y) \ud y +
		\gamma^{2(n+1)} \int_{(n+1)L-s}^L e^{-\i \kappa_j y}\, \phi(y) \ud y \right].
}$$
Evaluating our series solution~\eqref{eqn:seriessoln} at $t=0$, we find
\begin{equation}
	u_0(x) = \frac{1}{L} \sum_{j\in\ZZ} \left[\frac{ \left((\beta_0+\beta_1)\gamma-2\right)e^{\i \kappa_j x} + (\beta_1-\beta_0)\gamma e^{-\i \kappa_j x}}{(\beta_0+\beta_1)(\gamma-\gamma^{-1})} \right]\left[\,\widehat{u_0}(\kappa_j)+ \delta \,\widehat{u_0}(-\kappa_j) \,\right].
\end{equation}
This generalised Fourier series representation holds not only for the initial datum $u_0(x)$, but for any piecewise smooth function defined on $[0,L]$. 
Hence, it is valid for any $F(\cdot;s)$ that is a linear combination of the sharp and flat shifted functions defined in~\eqref{eqn:sharp_flat_q0}, and hence
\begin{equation}\label{eqn:Fq0.gen}
	F(x;s)= \frac{1}{L} \sum_{j\in\ZZ} \left[\frac{ \left((\beta_0+\beta_1)\gamma-2\right)e^{\i \kappa_j x} + (\beta_1-\beta_0)\gamma e^{-\i \kappa_j x}}{(\beta_0+\beta_1)(\gamma-\gamma^{-1})} \right]\left(\widehat{F}(\kappa_j;s)+ \delta \,\widehat{F}(-\kappa_j;s) \right).
\end{equation}
In particular, we set
\BE \label{eqn:Asymptotics.defn.F.gen}
	F(x;s)
	= c_1\phi_{ s}^\sharp(x;\gamma)+c_2\phi_{ s}^\flat (x;\gamma^{-1})+c_3\psi_{-s}^\sharp(x;\gamma^{-1})+c_4\psi_{-s}^\flat (x;\gamma ),
\EE
where
$$
 \req{ c_1=c_4=\frac{1}{1-\delta^2\gamma^2},\\c_2=c_3=\frac{-\delta\gamma^2}{1-\delta^2\gamma^2},}
$$
for piecewise smooth $\phi,\psi:[0,L]\to\CC$ to be specified later.
Using~\eqref{eqn:sharp_flat_q0}, we see that the parenthetical term on the right of~\eqref{eqn:Fq0.gen} can be written in terms of their bounded Fourier transforms: 
$$  \widehat{F}(\kappa_j;s)+ \delta \,\widehat{F}(-\kappa_j;s) = e^{\i\kappa_js} \left[ \widehat{\phi}(\kappa_j) +\delta \,\widehat{\psi}(-\kappa_j)\right].$$

Finally, the identity \eqref{eqn:Fq0.gen} reduces to equation~\eqref{eqn:Analytic.F.looks.like.q.gen} in our statement of Theorem~\ref{thm:ShiftRep.uGenvGen}.
The similarity between the formulae on the right sides of equations~\eqref{eqn:seriessoln} and~\eqref{eqn:Analytic.F.looks.like.q.gen} is clear, despite their very different origins.
The former appears in an equation for the solution of the initial-boundary value problem evaluated at time $t$, and the latter is a representation of the compound function $F$ defined as a linear combination of the $\sharp$ shifts and $\flat$ reflections applied to arbitrary piecewise smooth functions $\phi$ and $\psi$. 
The formulae differ only in that $\widehat{u_0}(\pm\kappa_j)$ has been replaced by $\widehat{\phi}(\kappa_j)$ and $\widehat{\psi}(-\kappa_j)$, and the initial exponential factor is different.


\subsection{The case $q=1$}

We begin by studying the simplest case of ``integral time,'' in which the denominator in \eq{eqn:TRelationUV} is $q=1$.
Here, in accordance with the general observations, we expect to obtain a formula for $u(t,x)$ as a linear combination of precisely four shifted and reflected copies of $u_0(x)$.

\begin{eg}\label{ex:uEven.v1}
  Given that $q=1$, assume that $p$ is a positive even integer. We aim to find $s\in\RR$, $c\in\CC$, and appropriate functions $\phi(x)$, $\psi(x)$ such that
\Eq{eic}
$$e^{\i c}F(x;s)=u\left(\frac{L^2p}{4\pii},x\right).$$
  Let $\phi(x)=\psi(x)=u_0(x)$. 
  Then, by equations~\eqref{eqn:seriessoln} and~\eqref{eqn:Analytic.F.looks.like.q.gen} at $t=\fracp{L^2p}{4\pii}$, finding $c$ and $s$ is equivalent to solving the equations
  $$
    \exp(\i c+\i \kappa_js) = \exp\left( \frac{-\i \kappa_j^2\:p\:L^2}{4\pii} \right) \qquad \hbox{for all $j\in\ZZ$.}
  $$
  Taking logarithms and applying equation~\eqref{eqn:Lambdaj.gen}, this requires
  \BE \label{eqn:cs.ExponentCriterion}
    c+s\kappa_0 + s\frac{2j\pi}{L} + 2\:n\:\pi = - p\:j^2\pi - p\:j\:L\:\kappa_0 - \frac{p\:L^2\kappa_0^2}{4\pii},
  \EE
  for all $j\in\ZZ$ and some $n\in\ZZ$.
  Equating powers of $j$ and using $n$ to balance the terms of order $j^2$, we find
  \begin{equation}\label{eqn:scn.solution.uEven.v1}
    s = -\,\frac{p\:L^2\:\kappa_0}{2\pii}\in\RR, \qquad c = \frac{p\:L^2\:\kappa_0^2}{4\pii}\in\CC, \qquad n=-\,\frac{p\:j^2}{2}\in\ZZ.
  \end{equation}
\end{eg}
In Example~\ref{ex:uEven.v1}, we exploited the assumption $\fracz{p}{q} = p$ is an even integer to absorb the quadratic involving $\kappa_j^2$ into $2\pii n$.
Note that this is the only option since all other terms must be linear in $j$, while $n$ is free to be any integer, including an integer-valued function of $j$.
On the other hand, if $p = \fracz{p}{q}$ is an odd integer, the resulting $n$ will not be an integer-valued function of $j$.
Our next example discusses how to navigate this difficulty.

\begin{eg}\label{ex:uOdd.v1}
Now assume that $q=1$ and $p$ is a positive odd integer. 
As above, we aim to find $s\in\RR$, $c\in\CC$, and appropriate functions $\phi(x)$, $\psi(x)$ such that \eq{eic} holds.
Let us set 
\BE
	\phi(x) =
	\begin{cases}
		\ \gamma^{-\frac{1}{2}} u_0\left(x+\frac{1}{2}\:L\right) , &0 \leq x \leq \frac{1}{2}\:L, \\
		\ \gamma^{\frac{1}{2}} u_0\left(x-\frac{1}{2}\:L\right), & \frac{1}{2}\:L < x \leq L,
	\end{cases}
	\qquad
	\psi(x) =
	\begin{cases}
		\ \gamma^{\frac{1}{2}} u_0\left(x+\frac{1}{2}\:L\right), & 0 \leq x \leq \frac{1}{2}\:L, \\
		\ \gamma^{-\frac{1}{2}} u_0\left(x-\frac{1}{2}\:L\right), & \frac{1}{2}\:L < x \leq L.
	\end{cases}
\EE
Let $F(x;s)$ then be defined by equation~\eqref{eqn:Analytic.F.looks.like.q.gen}.
Observe that
\begin{align*}
	\widehat{\phi}(\kappa_j) ={}& \int_0^L e^{-\i \kappa_jx}\phi(x)\ud x = \int_0^\frac{L}{2} e^{-\i \kappa_jx}\gamma^{-\frac{1}{2}}u_0\left(x+\frac{L}{2}\right)\ud x+ \int_\frac{L}{2}^L e^{-\i \kappa_jx}\gamma^{\frac{1}{2}}u_0\left(x-\frac{L}{2}\right)\ud x \\
	={}& \int_\frac{L}{2}^L e^{-\i \kappa_jy}e^{\frac{\i\kappa_0L}{2}+ij\pi L}\gamma^{-\frac{1}{2}}u_0\left(y\right)\ud y+ \int_0^\frac{L}{2} e^{-\i \kappa_j y}e^{-\frac{\i\kappa_0L}{2}-\i j\pi L}\gamma^{\frac{1}{2}}u_0\left(y\right)\ud y \\
	={}&(-1)^j \int_\frac{L}{2}^L e^{-\i \kappa_jy}u_0\left(y\right)\ud y+ (-1)^j\int_0^\frac{L}{2} e^{-\i \kappa_j y}u_0\left(y\right)\ud y  = (-1)^j \widehat{u_0}(\kappa_j).
\end{align*}
A similar calculation reveals 
$$\widehat{\psi}(-\kappa_j)=(-1)^j \,\widehat{u_0}(-\kappa_j).$$
Further, since $p$ is odd it follows that
\begin{equation}\label{eqn:-1fact}
(-1)^j=(-1)^{j^2}=(-1)^{p\:j^2}=e^{-\i \pi p\:j^2}.
\end{equation}
Combining this fact with~\eqref{eqn:Analytic.F.looks.like.q.gen} produces
\begin{equation} \label{eqn:F.uOdd.v1}
\begin{split}
  e^{\i c}F(x;s)=\frac{1}{L} \sum_{j\in\ZZ}e^{\i\kappa_js-\i \pi pj^2}&\left[\frac{ \left((\beta_0+\beta_1)\gamma-2\right)e^{\i \kappa_j x} + (\beta_1-\beta_0)\gamma e^{-\i \kappa_j x}}{(\beta_0+\beta_1)(\gamma-\gamma^{-1})} \right]\,\left[\,\widehat{u_0}(\kappa_j) +\delta \,\widehat{u_0}(-\kappa_j)\,\right].
  \end{split}
\end{equation}
Now, equating~\eqref{eqn:F.uOdd.v1} and~\eqref{eqn:seriessoln} at time $t=\fracp{p\:L^2}{4\pii}$ yields
\Eq{expeq}
$$
  \exp(\i c+\i\kappa_js-\i \pi p j^2) = \exp\left( \frac{-\i p\:L^2\kappa_j^2}{4\pii} \right).
$$
Equating powers of $j$, we find that
\begin{equation}\label{eqn:sc.solution.uOdd.v1}
  s = \frac{-p\:L^2\kappa_0}{2\pii}\in\RR, \qquad c = \frac{p\:L^2\kappa_0^2}{4\pii}\in\CC,
\end{equation}
satisfies \eq{expeq} for all $j\in\ZZ$. 
\end{eg}

\subsection{General rational times}

With these two examples in hand we are now ready to prove Theorem~\ref{thm:ShiftRep.uGenvGen} for a general rational time.
That is, at time given by \eq{eqn:TRelationUV} with denominator $q \geq 1$.
\begin{proof}[Proof of Theorem~\ref{thm:ShiftRep.uGenvGen}]
  By the definition of $\phi$, for all $j\in\ZZ$,
	\BES
		\widehat{\phi}(\kappa_j) = \sum_{\ell=1}^q \int_{L\left(1-\frac{\ell}{q}\right)}^{L\left(1-\frac{\ell-1}{q}\right)} e^{-\i \kappa_jx}\phi(x)\ud x,
	\EES
	so
  \begin{align*}
		e^{\i \frac{\pi p}{q}j^2}\widehat{\phi}(\kappa_j) ={}& \sum_{\ell=1}^q \int_{L\left(1-\frac{\ell}{q}\right)}^{L\left(1-\frac{\ell-1}{q}\right)} e^{-\i \left(\kappa_jx-\pi pj^2/q\right)} \phi(x)\ud x \\
		={}& \frac{1}{q} \sum_{\ell=1}^q \sum_{m=0}^{\ell-1} \gamma^{ -\frac{m}{q}} \sum_{n=0}^{q-1} \alpha^{-2nm-pn^2} \int_{L\left(1-\frac{\ell}{q}\right)}^{L\left(1-\frac{\ell-1}{q}\right)} e^{-\i \left(\kappa_jx-\pi pj^2/q\right)} u_0\left(x+\frac{Lm}{q} \right)\ud x \\
		&+ \frac{1}{q} \sum_{\ell=1}^q \sum_{m=\ell}^{q-1} \gamma^{1-\frac{m}{q}} \sum_{n=0}^{q-1} \alpha^{-2nm-pn^2} \int_{L\left(1-\frac{\ell}{q}\right)}^{L\left(1-\frac{\ell-1}{q}\right)} e^{-\i \left(\kappa_jx-\pi pj^2/q\right)} u_0\left(x+\frac{Lm}{q}-L\right)\ud x.
	\end{align*}
	We make the changes of variables
		$$
		y = x+L \Pa{\frac{m}{q} + \varepsilon },
	$$
	where $\varepsilon =0$ in the first integral and $\varepsilon =-1$ in the second, to obtain
	\begin{align*}
		e^{\i \pi pj^2/q}\widehat{\phi}(\kappa_j) =\frac{1}{q} \sum_{\ell=1}^q {}& \sum_{m=0}^{\ell-1} \sum_{n=0}^{q-1} \alpha^{2(j-n)m+p(j^2-n^2)} \int_{L\left(1+\frac{m-\ell}{q}\right)}^{L\left(1+\frac{m-\ell+1}{q}\right)} e^{-\i \kappa_jy}u_0(y)\ud y \\
		&+ \frac{1}{q} \sum_{\ell=1}^q \sum_{m=\ell}^{q-1} \sum_{n=0}^{q-1} \alpha^{2(j-n)m+p(j^2-n^2)} \int_{L\left(\frac{m-\ell}{q}\right)}^{L\left(\frac{m-\ell+1}{q}\right)} e^{-\i \kappa_jy}u_0(y)\ud y.
	\end{align*}
	We reorder the sums over $\ell,m$ so that the outer sum is over $m$:
	\begin{align*}
		e^{\i \frac{\pi p}{q}j^2}\widehat{\phi}(\kappa_j)
			= \frac{1}{q} \sum_{m=0}^{q-1} {}&\sum_{n=0}^{q-1} \alpha^{2(j-n)m+p(j^2-n^2)} \sum_{\ell=m+1}^q \int_{L\left(1+\frac{m-\ell}{q}\right)}^{L\left(1+\frac{m-\ell+1}{q}\right)} e^{-\i \kappa_jy}u_0(y)\ud y \\
			&+ \frac{1}{q} \sum_{m=0}^{q-1} \sum_{n=0}^{q-1} \alpha^{2(j-n)m+p(j^2-n^2)} \sum_{\ell=1 }^m \int_{L\left(\frac{m-\ell}{q}\right)}^{L\left(\frac{m-\ell+1}{q}\right)} e^{-\i \kappa_jy}u_0(y)\ud y, \\
	\intertext{where, in the second line, we have introduced the additional term $m=0$, justified by the corresponding sum over $\ell$ from $1$ to $m$ being empty. Then,}
	e^{\i \pi pj^2/q}\widehat{\phi}(\kappa_j)	={}& \frac{1}{q} \sum_{m=0}^{q-1} \sum_{n=0}^{q-1} \alpha^{2(j-n)m+p(j^2-n^2)} \left\{ \int_{\frac{Lm}{q}}^L + \int_{0}^{\frac{Lm}{q}} \right\} e^{-\i \kappa_jy}u_0(y)\ud y \\
		={}& \frac{1}{q} \,\widehat{u_0}(\kappa_j) \sum_{n=0}^{q-1} \sum_{m=0}^{q-1} \alpha^{2(j-n)m+p(j^2-n^2)} \\
		={}& \frac{1}{q} \,\widehat{u_0}(\kappa_j) \sum_{m=0}^{q-1} \alpha^0  + \frac{1}{q}\, \widehat{u_0}(\kappa_j) \sum_{\substack{n=0,\\n\neq j}}^{q-1} \alpha^{p(j^2-n^2)} \sum_{m=0}^{q-1} \alpha^{2(j-n)m} = \widehat{u_0}(\kappa_j).
	\end{align*}
	Similarly, 
	$$e^{\i \frac{\pi p}{q}j^2}\widehat{\psi}(-\kappa_j) = \widehat{u_0}(-\kappa_j) \roq{for all} j\in\ZZ.$$
Thus,
	\BES
		F(x;s) = \frac{1}{L} \sum_{j\in\ZZ}e^{\i (\kappa_js-\i \pi pj^2/q)}\left[\frac{ \left((\beta_0+\beta_1)\gamma-2\right)e^{\i \kappa_j x} + (\beta_1-\beta_0)\gamma e^{-\i \kappa_j x}}{(\beta_0+\beta_1)(\gamma-\gamma^{-1})} \right] \left[ \,\widehat{u_0}(\kappa_j)+\delta \, \widehat{u_0}(-\kappa_j)\,\right].
	\EES
Checking that the exponential factors match is a simple direct calculation:
$$\qxeq{\exp\left(\i\frac{p\:L^2\kappa_0^2}{4\pi q}-\i \kappa_j\frac{p\:L^2\kappa_0}{2\pi q}-\i \frac{\pi p}{q}j^2\right) = \exp\left( \frac{-\i p\:L^2\kappa_j^2}{4\pi q} \right),\\ j\in\ZZ,} $$
{which thus establishes Theorem~\ref{thm:ShiftRep.uGenvGen}.} \end{proof}

Specialising the formulas in Theorem~\ref{thm:ShiftRep.uGenvGen}, which provides a representation for the solution to the initial-boundary value problem \eq{eqn:ls_pp} at rational times \eq{rattime} in terms of certain shifts and reflections of the initial datum, to the two cases of odd and even denominator $q$, we deduce the following slightly simplified forms for the key functions \eqref{eqn:phipsi}. 
In the cases $p$ even, $q$ general, and $p$ odd, $q$ odd, we can, for each of the integers $n$ and $m$, write the term $\alpha^{2nm-pn^2}$ using a $q$\textsuperscript{th} root of unity.
Thus, equation~\eqref{eqn:phipsi} holds and there is no branching behaviour.
However, when $p$ and $q$ are both odd, we are nearer in spirit to Example~\ref{ex:uOdd.v1}. In this case, we first define 
\Eq{linebar}
$$\qeq{\overline{u_0}(x)=\left\{\begin{array}{lll} \gamma^{-\frac{1}{2}}u_0\left(x+\frac{1}{2}\:L\right),&0\leq x\leq \frac{1}{2}\:L,\\
  	\gamma^{\frac{1}{2}}u_0\left(x-\frac{1}{2}\:L\right),&\frac{1}{2}\:L< x\leq L,
  	\end{array}\right.
\\	\widetilde{u_0}(x)=\left\{\begin{array}{lll} \gamma^{\frac{1}{2}}u_0\left(x+\frac{1}{2}\:L\right),\hskip-5pt&0\leq x\leq \frac{1}{2}\:L,\\
  	\gamma^{-\frac{1}{2}}u_0\left(x-\frac{1}{2}\:L\right),&\frac{1}{2}\:L< x\leq L.
  	\end{array}\right.
}$$
and then set    
\begin{subequations}\label{phipsiodd}
\begin{equation}
\begin{split}
\phi(x)
			=\frac 1q\sum_{m=0}^{\ell-1} \gamma^{-m/q} {}&\overline{u_0}\left(x+\frac{Lm}{q} \right) \sum_{n=0}^{q-1} \alpha^{-2nm+p(q-1)n^2}\\
			&+ \frac 1q\sum_{m=\ell}^{q-1} \gamma^{1-m/q} \overline{u_0}\left(x+\frac{Lm}{q}-L\right) \sum_{n=0}^{q-1} \alpha^{-2nm+p(q-1)n^2},
			\end{split}
\end{equation}
\begin{equation}
\begin{split}
		\hskip-5pt\psi(x)
			=\frac 1q\sum_{m=0}^{\ell-1} \gamma^{m/q} {}&\widetilde{u_0}\left(x+\frac{Lm}{q} \right) \sum_{n=0}^{q-1} \alpha^{ 2nm+p(q-1)n^2}\\
			&+ \frac 1q\sum_{m=\ell}^{q-1} \gamma^{m/q-1} \widetilde{u_0}\left(x+\frac{Lm}{q}-L\right) \sum_{n=0}^{q-1} \alpha^{ 2nm+p(q-1)n^2}.
			\end{split}
			\end{equation}
			\end{subequations}
In both cases, once $\phi,\psi$ are defined appropriately, i.e.,~\eqref{eqn:phipsi} or \eq{phipsiodd}, we recover the solution at the indicated rational time by use of formula~\eqref{eqn:F=q.withsc} with $F(x;s)$ defined by equation~\eqref{eqn:Analytic.F.looks.like.q.gen}.


\section{The Talbot effect for general linear boundary conditions} \label{sec:genBCs}
In the preceding sections, we concentrated on the effect of the pseudoperiodic boundary conditions~\eqref{eqn:ls_pp}. 
In what follows we will seek to generalise our analysis to generic linear boundary conditions:
\begin{equation}\label{eqn:ls}
\eeq{  {\i u_t+u_{xx}=0, \qquad }  u(0,x)=u_0(x), \\
  \beta_{11}u_x(t,L)+\beta_{12}u(t,L)+\beta_{13}u_x(t,0)+\beta_{14}u(t,0)=f_1(t),\\
   \beta_{22}u(t,L)+\beta_{23}u_x(t,0)+\beta_{24}u(t,0)=f_2(t), 
}\qquad (t,x) \in[0,T]\times [0,L],
\end{equation}
where $\beta_{ij}$ are (possibly complex) constants.

Finding the solution via the UTM proceeds in exactly the same way as in Section~\ref{sec:DeriveUTMRep} except now the \emph{discriminant} of the problem takes the form
\begin{equation}\label{eqn:Delta_genBC}
\begin{split}
\Delta(\kappa )={}2\i\left[\,(\beta_{11}\beta_{22}-\beta_{14}\beta_{23}+\beta_{13}\beta_{24})\kappa\right.&+(\beta_{13}\beta_{22}-\beta_{12}\beta_{23}+\beta_{11}\beta_{24})\kappa\cos(\kappa L)\\
&\left.+(\beta_{12}\beta_{24}-\beta_{14}\beta_{22}+\beta_{11}\beta_{23}\kappa^2)\sin(\kappa L) \,\right].
\end{split}
\end{equation}
We define 
\Eq{ZDgen}
$$\ZD = \{\kappa\in\CC:\Delta(\kappa)=0\} = \{\kappa_j,-\kappa_j:j\in\ZZ\},$$
and decompose the discriminant locus into
\Eq{ZDpm_genlin}
$$\ZD = \ZDp \cup \ZDm \cup Z_{\Delta}^{(0,+)}\cup Z_{\Delta}^{(0,-)} $$
where
\begin{align*}
\ZDp=&\{\kappa\in\ZD:\Re(\kappa)>0\},\qquad Z_\Delta^{(0,+)} =\{\kappa\in\ZD:\Re(\kappa)=0, \ \ \Im(\kappa)>0\},\\
\ZDm=&\{\kappa\in\ZD:\Re(\kappa)<0\},\qquad Z_\Delta^{(0,-)} =\{\kappa\in\ZD:\Re(\kappa)=0, \ \ \Im(\kappa)<0\}.\\
\end{align*}

Since the discriminant equation is transcendental, we cannot find analytic expressions for $\kappa_j$ as was done in~\eqref{eqn:Lambdaj.gen}.

For the problem to be well-posed, there can only be finitely many $ \i \kappa_j^2 $ with $\kappa _j \in \ZD$ which have real part $< 0$; if present, these produce a finite number of exponentially growing modes that eventually dominate the solution, and hence the initial value problem is unstable, but not ill-posed, which requires infinitely many such modes.
For an example of exponential growth see Figure~\ref{fig:gen_complexeval_r}.
On the other hand, if all $ \i \kappa_j^2 $ have real part $> 0$, then the solutions decay to zero in an unexpected and unusual manner; see Figure~\ref{fig:genbc_energyleak_r} and the ensuing discussion.
The observed behavior suggests that a form revival may still arise in which the solution at a rational time is a finite linear combination of appropriately scaled translates and reflections of the initial datum, so that the solution decays to zero as time increases, but, in contrast to dissipative systems like the heat equation, does not smooth out. 
This new phenomenon clearly warrants further investigation and rigorous justification.



We note the following asymptotic relations for $\kappa_j$.
If $\beta_{11}\beta_{23}\neq0$, then
\begin{subequations}\label{eqns:asym_lambda}
\begin{equation}\label{eqn:asym_lambda_nonzero}
\kappa_j\sim \frac{j\:\pi}{L},\qquad \hbox{as \quad $j \to \infty $},
\end{equation}
whereas, if $\beta_{11}\beta_{23}=0$, then
\begin{equation}\label{eqn:alt_asym_zero}
\kappa_j\sim \frac{1}{L}\left(2\:j\:\pi\pm\arccos\left(\frac{\beta_{14}\beta_{23}-\beta_{11}\beta_{22}-\beta_{13}\beta_{24}}{\beta_{13}\beta_{22}-\beta_{12}\beta_{23}+\beta_{11}\beta_{24}}\right) \right),\qquad \hbox{as \quad $j \to \infty $}.
\end{equation}
\end{subequations}

Finding a UTM solution representation proceeds in the same manner as above. We set
\begin{equation}
\begin{split}
	\zeta^+(t,\kappa ) ={}& \left( 2\i\kappa\beta_{11}\beta_{22}+e^{\i\kappa L}\left(\beta_{12}\beta_{24}-\beta_{14}\beta_{22}+\i\kap(\beta_{13}\beta_{22}-\beta_{12}\beta_{23}+\beta_{11}\beta_{24})+\kappa^2\beta_{11}\beta_{23}\right)\right) \widehat{u_0}(\kappa)\\
	&+ \left(\beta_{14}\beta_{22}-\beta_{12}\beta_{24}+\i\kap(\beta_{12}\beta_{23}-\beta_{13}\beta_{22}+\beta_{11}\beta_{24})+\kappa^2\beta_{11}\beta_{23}\right) e^{-\i\kappa L} \widehat{u_0}(-\kappa)\\
	&+2\kappa\left[\left(-e^{-\i\kappa L}\beta_{22}+\i\kappa\beta_{23}-\beta_{24}\right)f_1(t)
	+\left(e^{-\i\kappa L}(\beta_{12}-\i\kappa\beta_{11})+\beta_{14}-\i\kappa\beta_{13}\right)f_2(t)\right].
\end{split}
\end{equation}
The resulting solution formula has the form
\begin{align}
u(t,x)={}&\frac{1}{2\pii} \sum_{\mu \in \ZD}\int_{S(\mu,r)} e^{\i\kappa\:x-\i\kappa ^2t}\>\frac{\zeta^+(\kap,t)}{\Delta(\kappa)}\ud\kappa = \i \sum_{j\in\ZZ} e^{\i \kappa_j x-\i\kappa_j^2t} \>\frac{\zeta^+(\kappa_j,t)}{\Delta'(\kappa_j)},
\end{align}
where the last equality holds if and only if all zeros of $\Delta$ are simple, which, for simplicity of presentation we will assume.
Let us now specialize to general homogeneous boundary conditions, i.e., take $f_0(t) \equiv f_1(t) \equiv 0$ in~\eqref{eqn:ls}.
The boundary value problem can still be solved by a generalised eigenfunction series of the form~\eqref{eqn:series}, except now the eigenfunctions $X_j(x)$ satisfy
\Eq{genefneq}
$$\eeq{X_j''(x)+\kappa_j^2X_j(x)=0,\\
\beta_{11}X_j'(L)+\beta_{12}X_j(L)+\beta_{13}X_j'(0)+\beta_{14}X_j(0)=0,\\
\beta_{22}X_j(L)+\beta_{23}X_j'(0)+\beta_{24}X_j(0)=0.
}$$
For any values of $\beta_{11},\beta_{23}$, and $\kappa_j \in \ZDp\cup Z_\Delta^{(0,+)}$, the corresponding eigenfunctions take the form
\begin{align}\label{genefn}
X_j(x)={}&e^{\i\kappa_j x}+b_1(\kappa_j) e^{-\i\kappa_jx},
\end{align}
where
$$b_1(\kappa_j)=-\frac{e^{\i\kappa_jL}(\i\kappa_j\beta_{11}+\beta_{12})+\i\kappa_j\beta_{13}+\beta_{14}}{e^{-\i\kappa_jL}(\beta_{12}-\i\kappa_j\beta_{11})+\beta_{14}-\i\kappa_j\beta_{13}} = -\frac{e^{\i\kappa_jL}\beta_{22}+\i\kappa_j\beta_{23}+\beta_{24}}{e^{-\i\kappa_jL}\beta_{22}+\beta_{24}-\i\kappa_j\beta_{23}}.
$$
As before, we form a biorthogonal system by pairing the eigenfunctions $X_j$  with the eigenfunctions
\begin{align}
Y_j(x)={}&e^{\i\overline{\kappa_j} x}+\overline{b_2(\kappa_j)}e^{-\i\overline{\kappa_j}x},
\end{align}
 of the adjoint problem, where
\begin{align*}
\overline{b_2(\kappa_j)}={}&\frac{e^{-\i\kappa_jL}(-\i\kappa_j\beta_{11}\beta_{24}-\beta_{14}\beta_{22}+\beta_{12}\beta_{24})-\i\kappa_j\beta_{11}\beta_{22}}{e^{\i\kappa_jL}(-\i\kappa_j\beta_{11}\beta_{24}+\beta_{14}\beta_{22}-\beta_{12}\beta_{24})-\i\kappa_j \beta_{11}\beta_{22}}\\
={}&\frac{e^{-\i\kappa_jL}(-\i\kappa_j\beta_{11}\beta_{23}+\beta_{12}\beta_{23}-\beta_{13}\beta_{22})-\beta_{11}\beta_{22}}{e^{\i\kappa_jL}(-\i\kappa_j\beta_{11}\beta_{23}-\beta_{12}\beta_{23}+\beta_{13}\beta_{22})+ \beta_{11}\beta_{22}}.
\end{align*}
That is,
\Eq{XjYkgen}
$$\langle X_j,Y_k\rangle=\left\{\begin{array}{lll}0,&&j\neq k,\\ \tau _j,&&j=k.\end{array}\right.,$$
where
\Eq{tauj}
$$\tau _j=1+b_1(\kappa_j)\,\overline{b_2(\kappa_j)}+\displaystyle \frac{\sin(\kappa L_j)}{\kappa_jL}\left(b_1(\kappa_j)\,e^{-\i \kappa L_j}+\overline{b_2(\kappa_j)}\,e^{\i \kappa L_j}\right).$$


Evaluating the eigenfunction series~\eqref{eqn:series} at $t=0$ and taking the inner product of both sides with $Y_k$ we find
$$c_j=\frac{1}{\tau _jL}\bbk{\widehat{u_0}(\kappa_j)+\overline{b_2(\kappa_j)}\,\widehat{u_0}(-\kappa_j)}.$$
Hence,
\begin{equation}\label{eqn:seressoln_genBC}
u(t,x)=\frac{1}{L}\sum_{j\in\ZZ}\frac{e^{-\i\kappa_j^2t}\left[e^{\i\kappa_jx}+b_1(\kappa_j)\,e^{-\i\kappa_jx}\right]}{\tau _j}\left[\,\widehat{u_0}(\kappa_j)+\overline{b_2(\kappa_j)}\widehat{u_0}(-\kappa_j)\,\right].
\end{equation}



Although the form of the solution for general boundary conditions~\eqref{eqn:seressoln_genBC} is similar to that associated with pseudoperiodic boundary conditions~\eqref{eqn:seriessoln}, the analysis we complete in Section~\ref{sec:analysis} no longer applies when the coefficients in~\eqref{eqn:Asymptotics.defn.F.gen} are $j$ dependent. 

Let us now present a selection of boundary conditions illustrating different phenomena we have so far observed. 
A complete, rigorous classification and proof of the various possibilities remains to be done.

\subsection{Robin boundary conditions}
We begin with the simple case of Dirichlet boundary data at $x=0$ and Robin data at $x=L$:
\Eq{RD1}
$$-2\,u_x(t,1)+u(t,1)=0 \qquad  u(t,0)=0,$$
so that
$$\qeq{\beta_{11}=-2,\\\beta_{12}=1,\\\beta_{13}=\beta_{14}=\beta_{22}=\beta_{23}=0,\\\beta_{24}=1.}$$
In this case, all the discriminant roots $\kappa_j$ are purely real, as illustrated Figure~\ref{fig:evals_real}.
\begin{figure}[h!]
\centering
    \includegraphics[width=.5\linewidth]{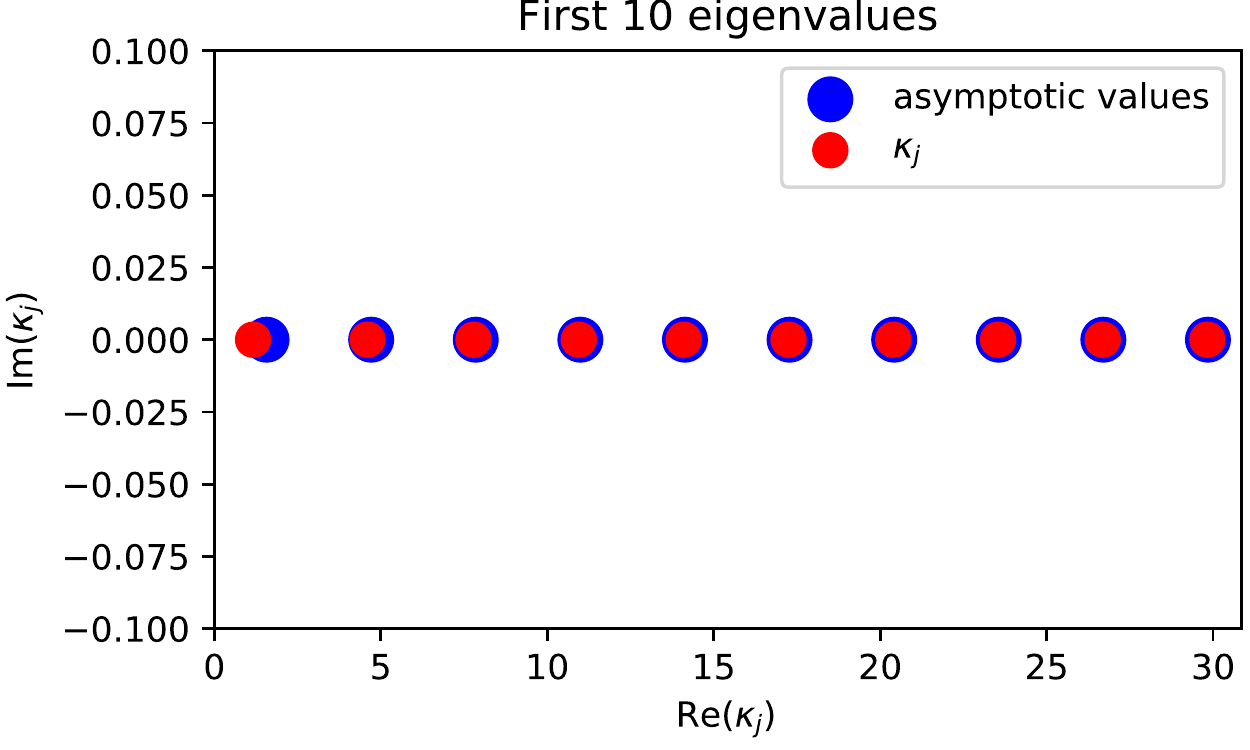}
    \caption{The first ten $\kappa_j$ found numerically in red and the asymptotic values they approach~\eqref{eqn:alt_asym_zero} in blue for $\beta_{11}=-2$, $\beta_{12}=1$, $\beta_{13}=0$, $\beta_{14}=0$, $\beta_{22}=0$, $\beta_{23}=0$, $\beta_{24}=1$, and $L=1$.}\label{fig:evals_real}
\end{figure}
Note that in the case of Robin boundary data at both $x=0$ and $x=L$, i.e., when $\beta_{13}=\beta_{14}=\beta_{22}=0$, if the remaining coefficients are all real then the problem is self-adjoint.
In Figures~\ref{fig:RobinDirichlet_real_r} and~\ref{fig:RobinDirichlet_real_irr} we can see that although the solution is no longer simply a number of shifted copies of the initial condition we do see something similar to revivals.
\begin{figure}[h!]
  \centering
  \begin{minipage}[b]{.32\linewidth}
    \centering
    \includegraphics[width=\linewidth]{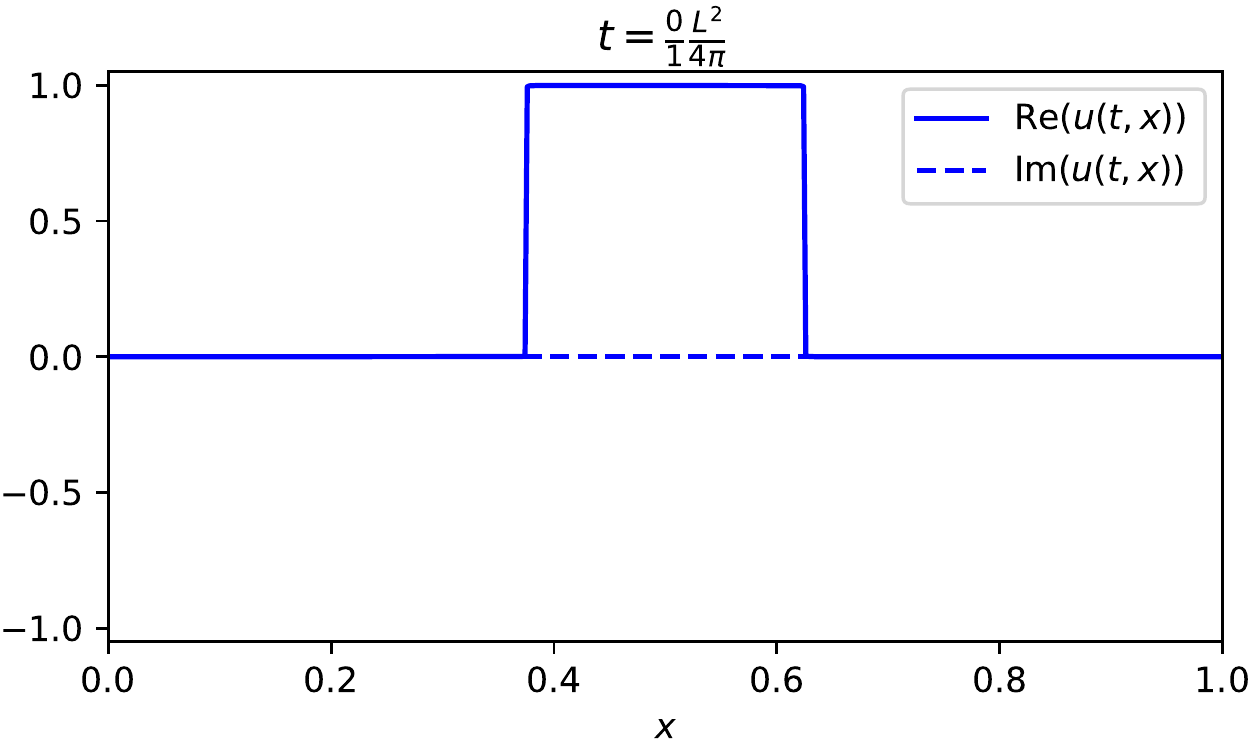}
    \subcaption{$t=0$}
  \end{minipage}
  \hfill
  \begin{minipage}[b]{.32\linewidth}
    \centering
    \includegraphics[width=\linewidth]{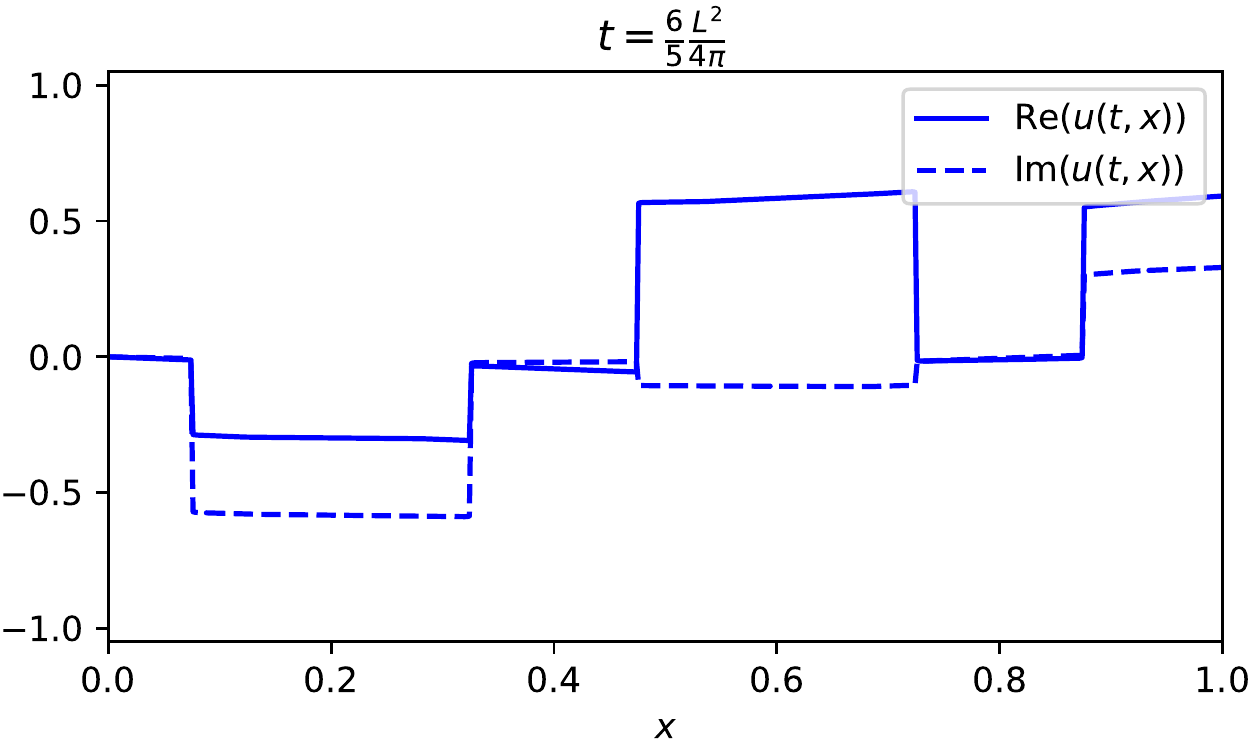}
    \subcaption{$\udsty t=\frac{6}{5}\frac{L^2}{4\pii}\approx0.09$}
  \end{minipage}
  \hfill
  \begin{minipage}[b]{.32\linewidth}
    \centering
    \includegraphics[width=\linewidth]{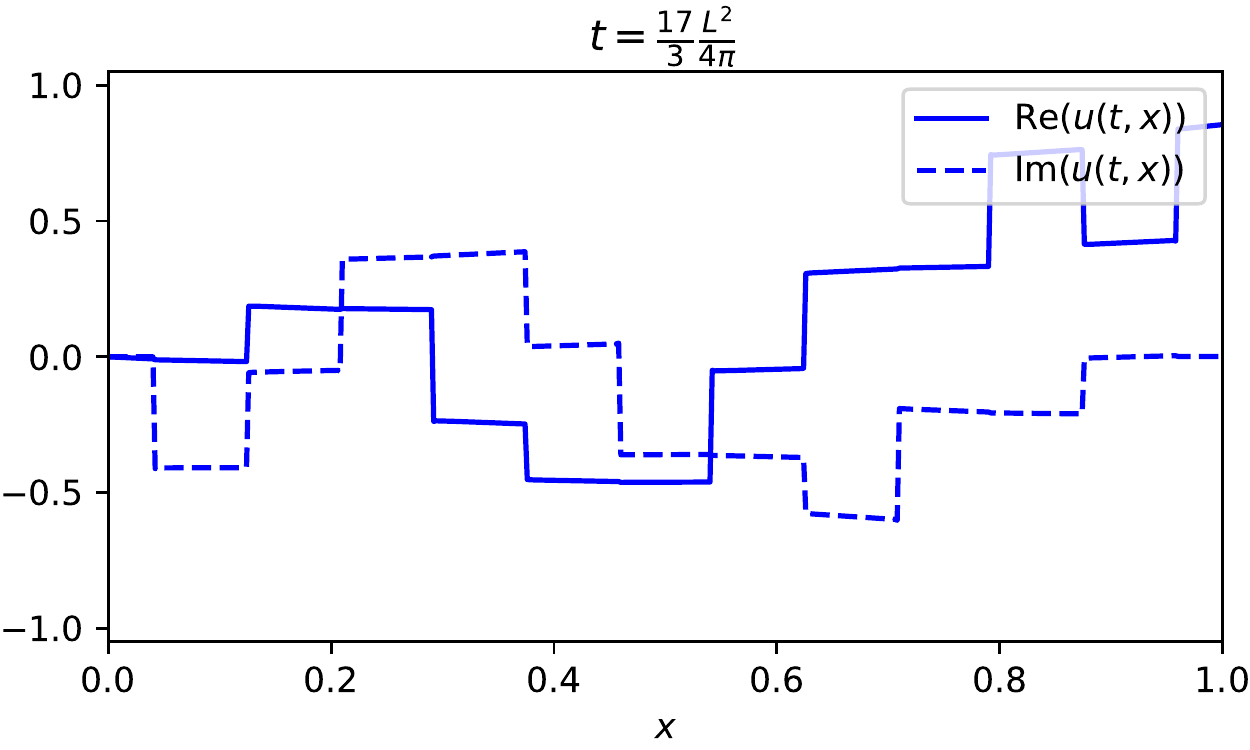}
    \subcaption{$\udsty t=\frac{17}{3}\frac{L^2}{4\pii}\approx0.45$}
  \end{minipage}
  \\
  \vspace{2ex}
  \begin{minipage}[b]{.32\linewidth}
    \centering
    \includegraphics[width=\linewidth]{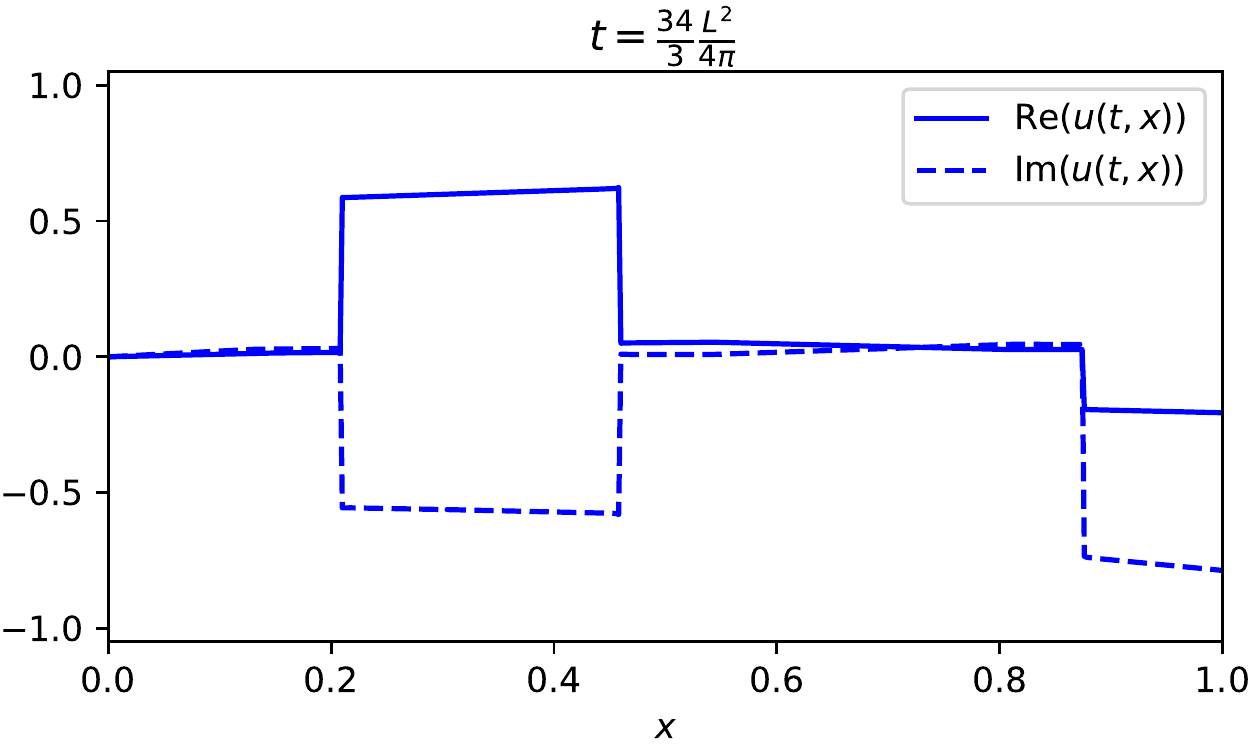}
    \subcaption{$\udsty t=\frac{34}{3}\frac{L^2}{4\pii}\approx0.9$}
      \end{minipage}
  \hfill
  \begin{minipage}[b]{.32\linewidth}
    \centering
    \includegraphics[width=\linewidth]{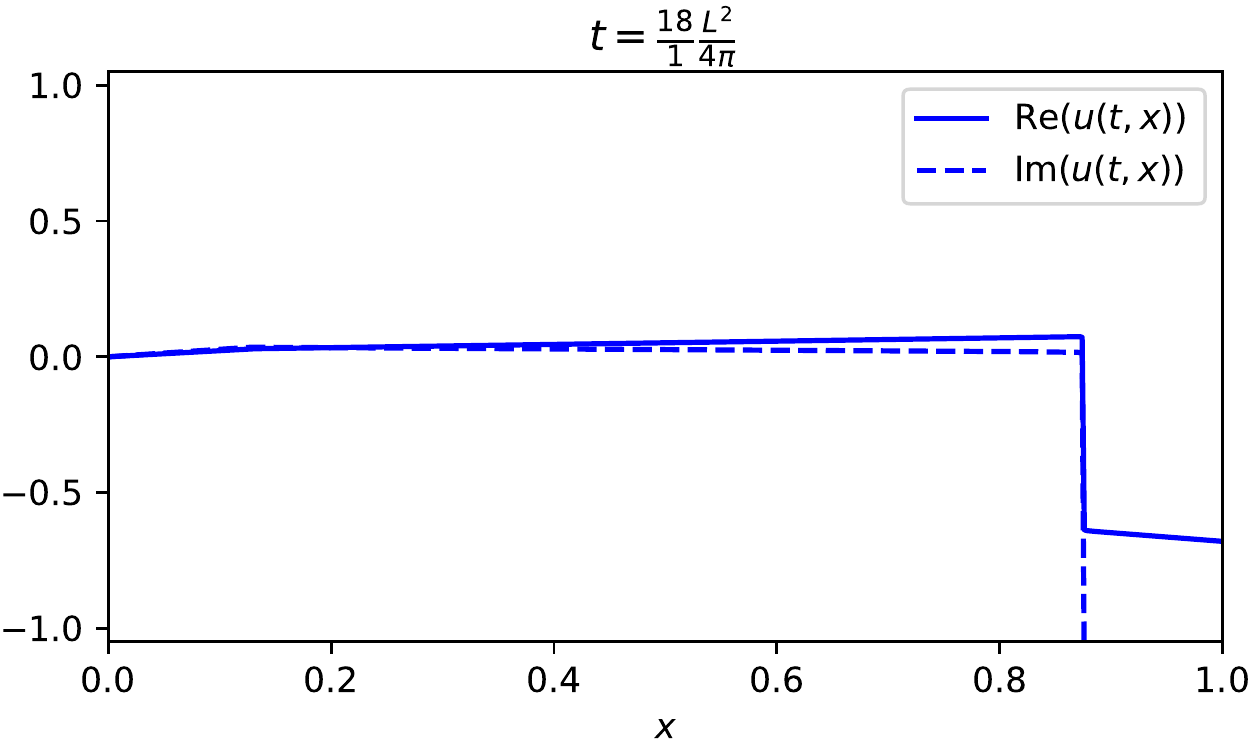}
    \subcaption{$\udsty t=\frac{18}{1}\frac{L^2}{4\pii}\approx1.44$}
  \end{minipage}
  \hfill
  \begin{minipage}[b]{.32\linewidth}
    \centering
    \includegraphics[width=\linewidth]{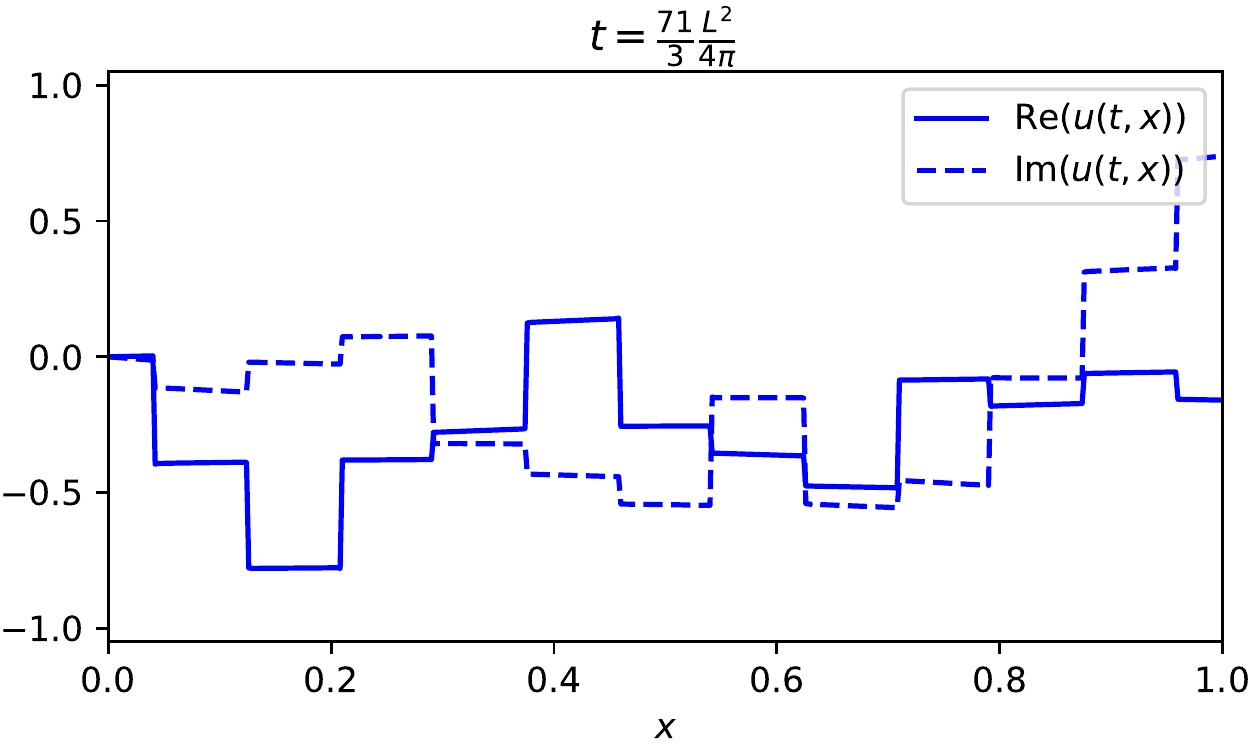}
    \subcaption{$\udsty t=\frac{71}{3}\frac{L^2}{4\pii}\approx1.89$}
  \end{minipage}
  \caption{
    The solution of the linear Schr\"{o}dinger equation with linear boundary conditions $\beta_{11}=-2$, $\beta_{12}=1$, $\beta_{13}=0$, $\beta_{14}=0$, $\beta_{22}=0$, $\beta_{23}=0$, and $\beta_{24}=1$ on $[0,1]$ and box initial datum evaluated at ``rational" times which are commensurate with $L^2/(4\pi)$.
  }
  \label{fig:RobinDirichlet_real_r}
\end{figure}

\begin{figure}[h!]
  \centering
  \begin{minipage}[b]{.32\linewidth}
    \centering
    \includegraphics[width=\linewidth]{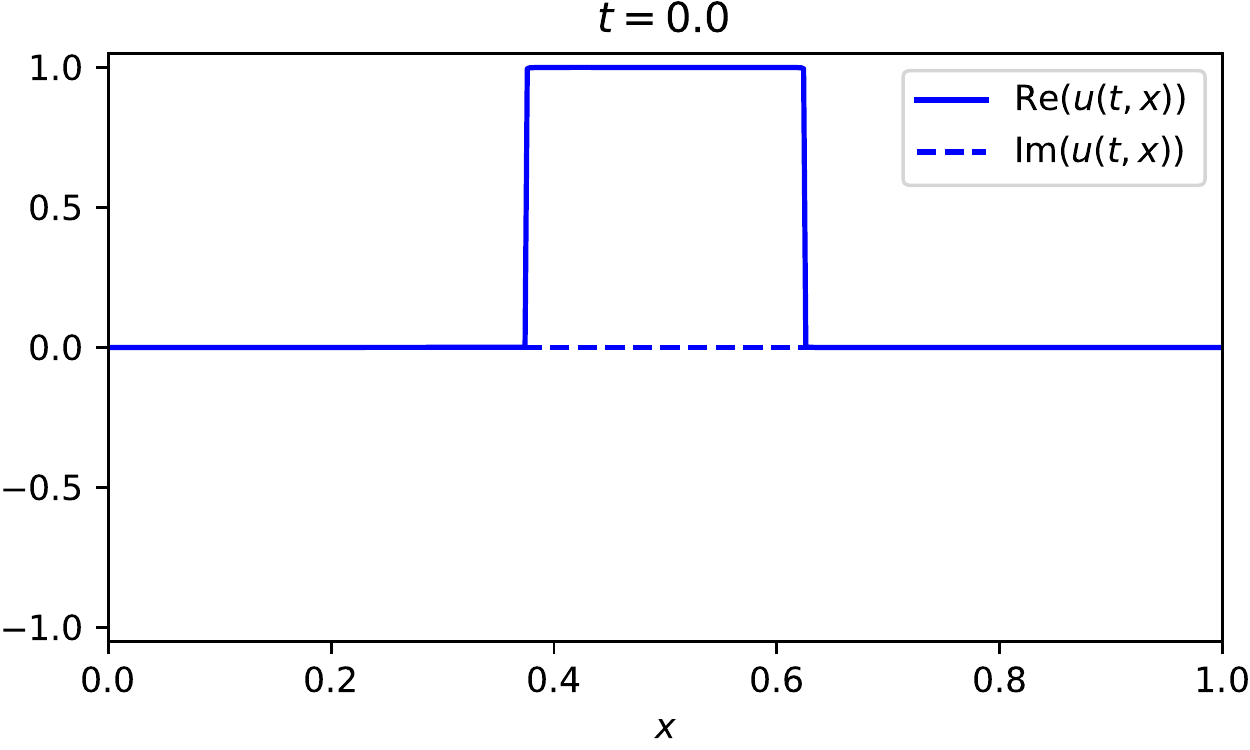}
    \subcaption{$t=0$}
  \end{minipage}
  \hfill
  \begin{minipage}[b]{.32\linewidth}
    \centering
    \includegraphics[width=\linewidth]{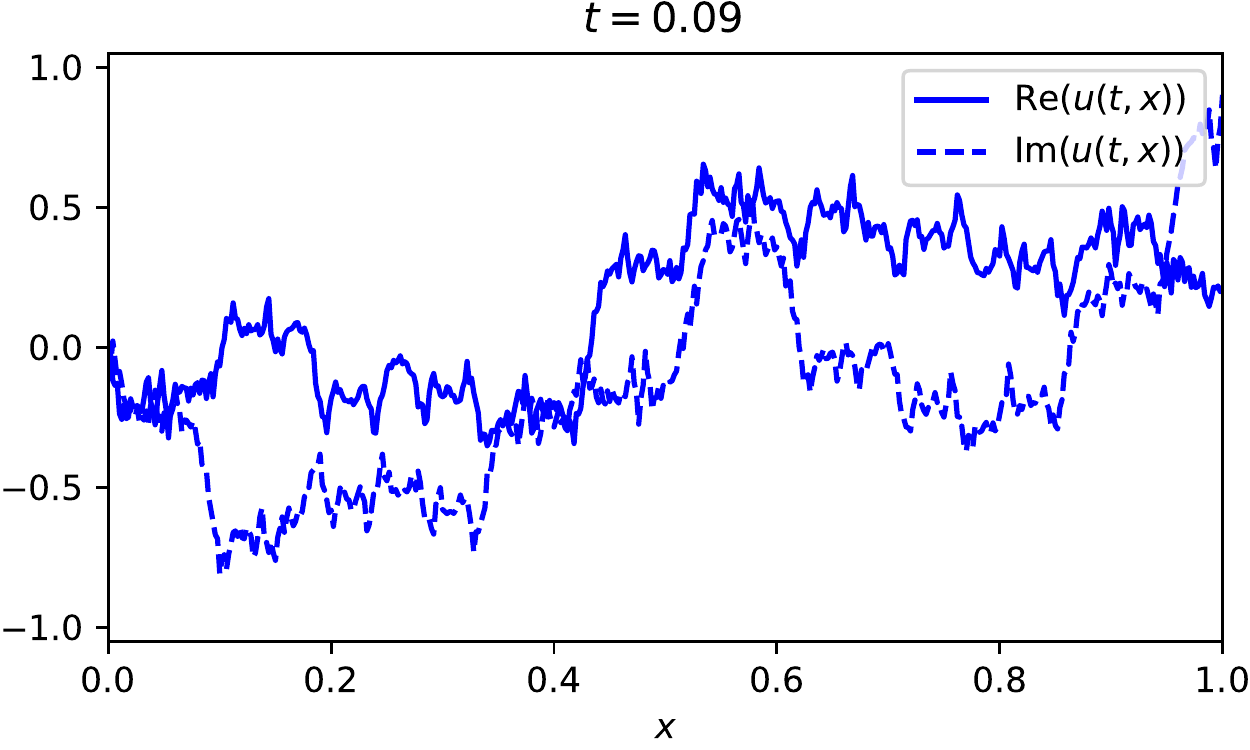}
    \subcaption{$\udsty t=0.09$}
  \end{minipage}
  \hfill
  \begin{minipage}[b]{.32\linewidth}
    \centering
    \includegraphics[width=\linewidth]{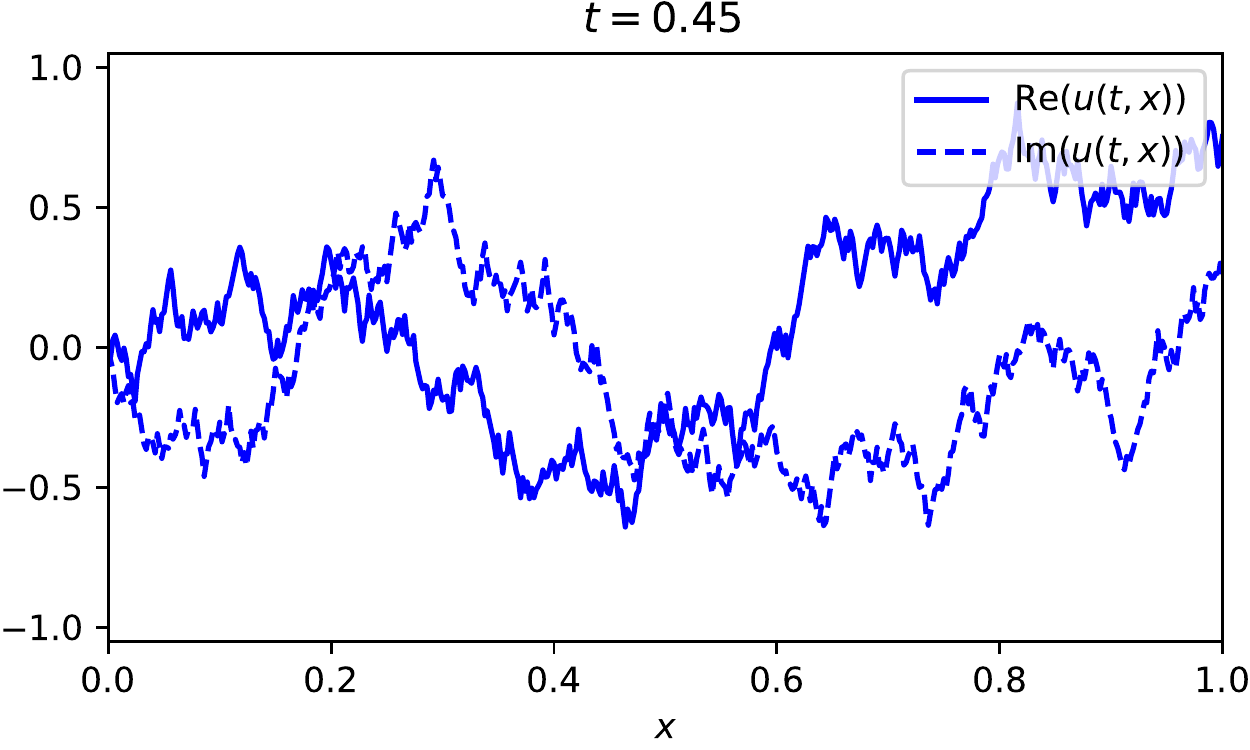}
    \subcaption{$\udsty t=0.45$}
  \end{minipage}
  \\
  \vspace{2ex}
  \begin{minipage}[b]{.32\linewidth}
    \centering
    \includegraphics[width=\linewidth]{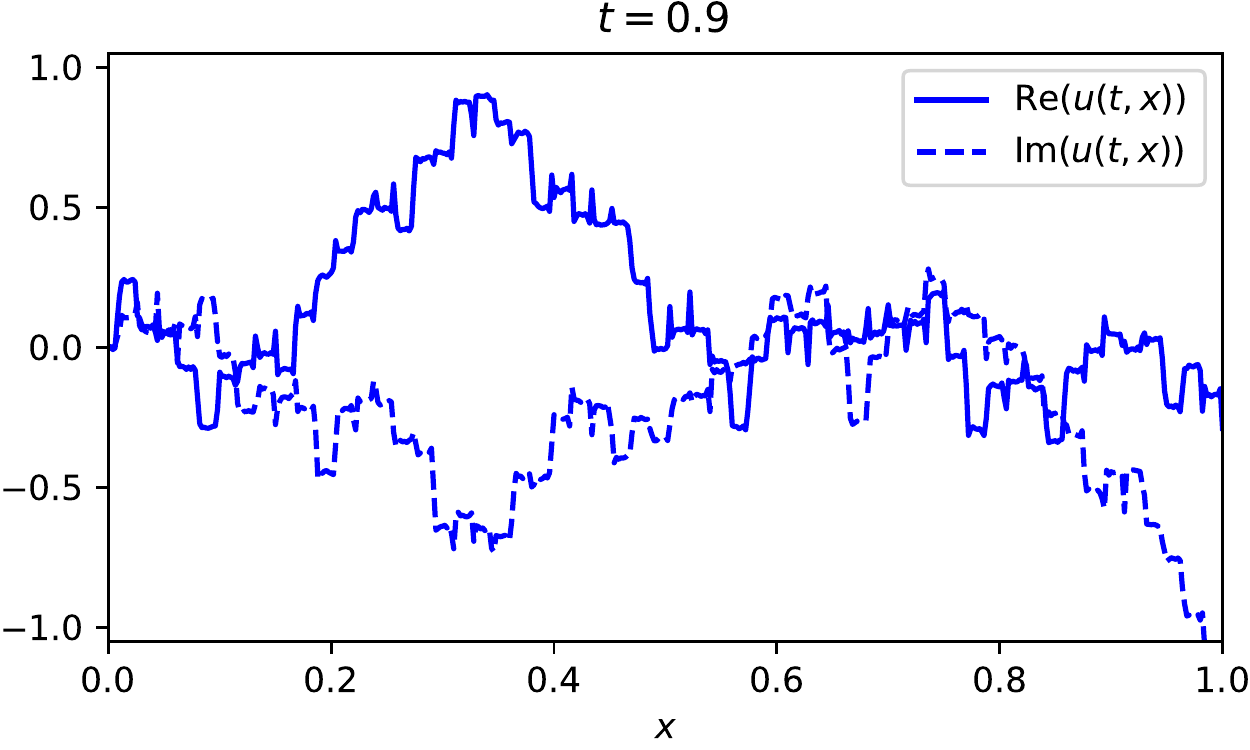}
    \subcaption{$\udsty t=0.9$}
      \end{minipage}
  \hfill
  \begin{minipage}[b]{.32\linewidth}
    \centering
    \includegraphics[width=\linewidth]{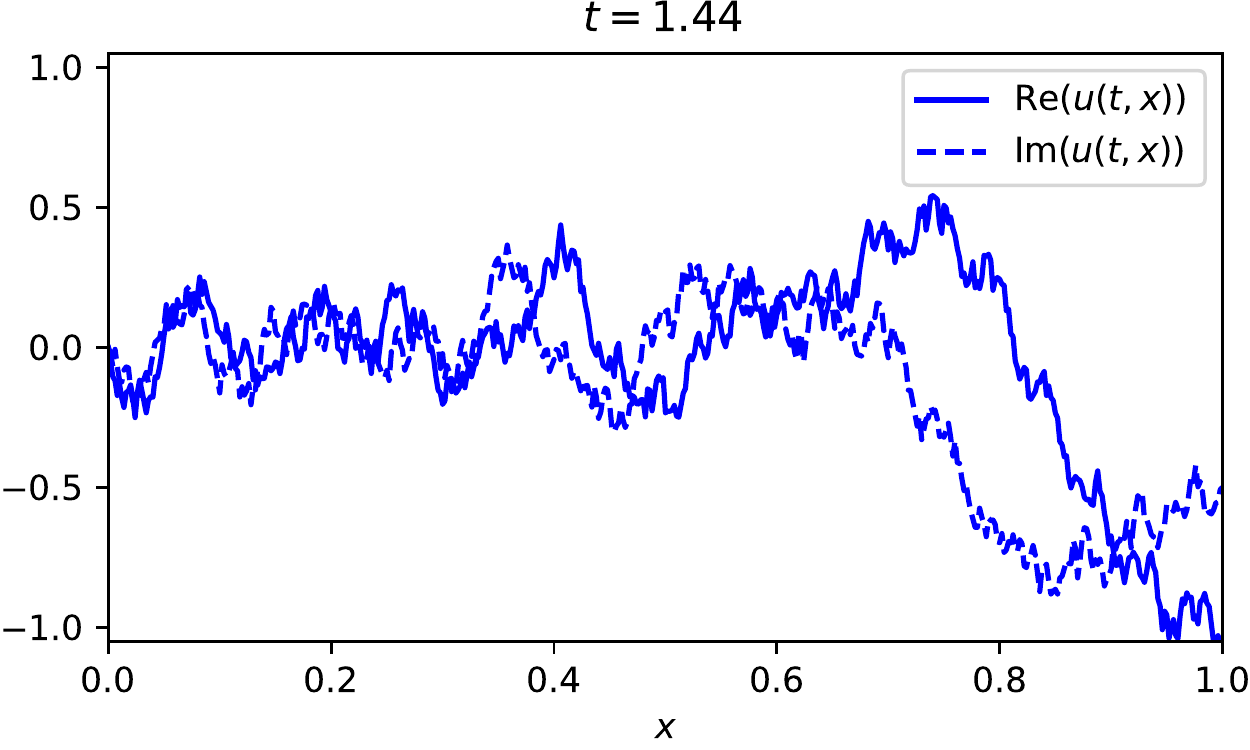}
    \subcaption{$\udsty t=1.44$}
  \end{minipage}
  \hfill
  \begin{minipage}[b]{.32\linewidth}
    \centering
    \includegraphics[width=\linewidth]{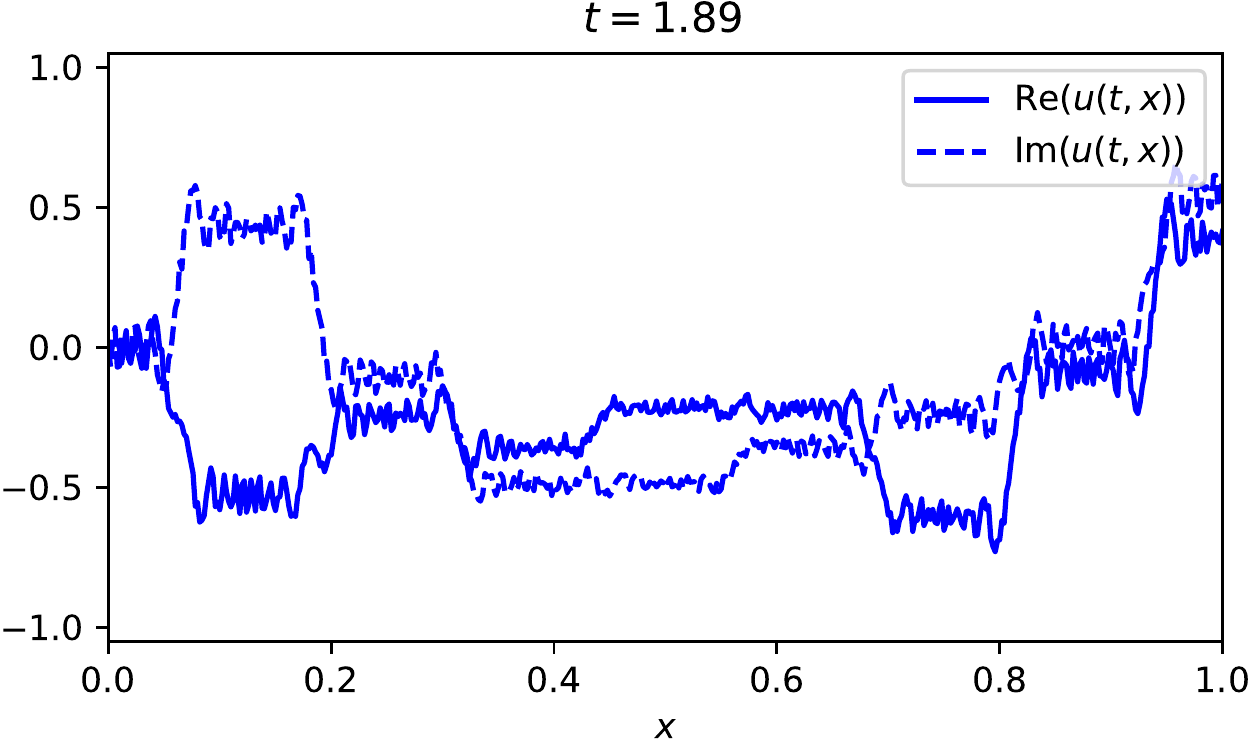}
    \subcaption{$\udsty t=1.89$}
  \end{minipage}
  \caption{
    The solution of the linear Schr\"{o}dinger equation with linear boundary conditions $\beta_{11}=-2$, $\beta_{12}=1$, $\beta_{13}=0$, $\beta_{14}=0$, $\beta_{22}=0$, $\beta_{23}=0$, and $\beta_{24}=1$ on $[0,1]$ and box initial datum evaluated at ``irrational" times which are not commensurate with $L^2/(4\pi)$.
  }
  \label{fig:RobinDirichlet_real_irr}
\end{figure}

A similar set of Robin plus Dirichlet boundary conditions 
\Eq{RD2}
$$-.7\,u_x(t,1)+u(t,1)=0 \qquad  u(t,0)=0,$$
corresponding to
$$\qeq{\beta_{11}=-.7,\\\beta_{12}=1,\\\beta_{13}=\beta_{14}=\beta_{22}=\beta_{23}=0,\\\beta_{24}=1,}$$
produces one purely imaginary $\kappa_j$, as illustrated in Figure~\ref{fig:evals_imag}. Here the solution is qualitatively similar as seen in Figure~\ref{fig:RobinDirichlet_imag_r}.
\begin{figure}[h!]
\centering
    \includegraphics[width=.5\linewidth]{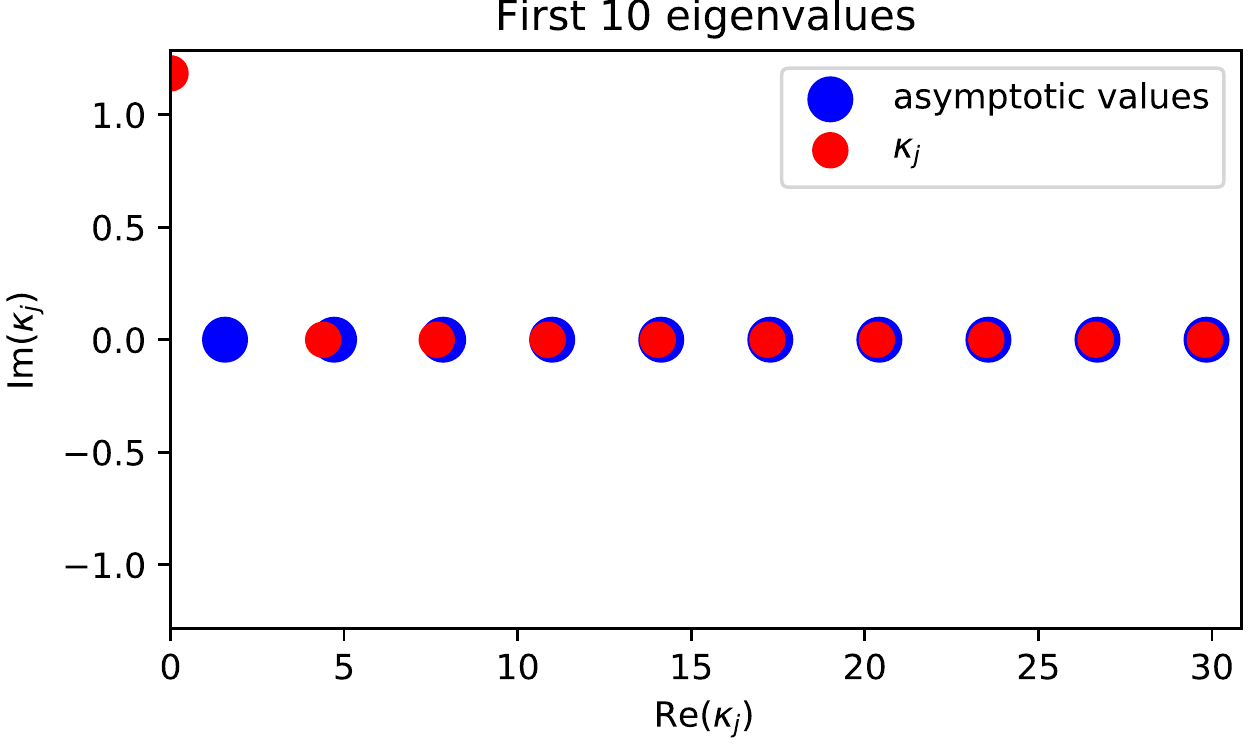}
    \caption{The first ten $\kappa_j$ found numerically in red and the asymptotic values they approach~\eqref{eqn:alt_asym_zero} in blue for $\beta_{11}=-.7$, \ $\beta_{12}=1$, \ $\beta_{13}=0$, \ $\beta_{14}=0$, \ $\beta_{22}=0$, \ $\beta_{23}=0$,  \ $\beta_{24}=1$, and $L=1$.}\label{fig:evals_imag}
\end{figure}

\begin{figure}[h!]
  \centering
  \begin{minipage}[b]{.32\linewidth}
    \centering
    \includegraphics[width=\linewidth]{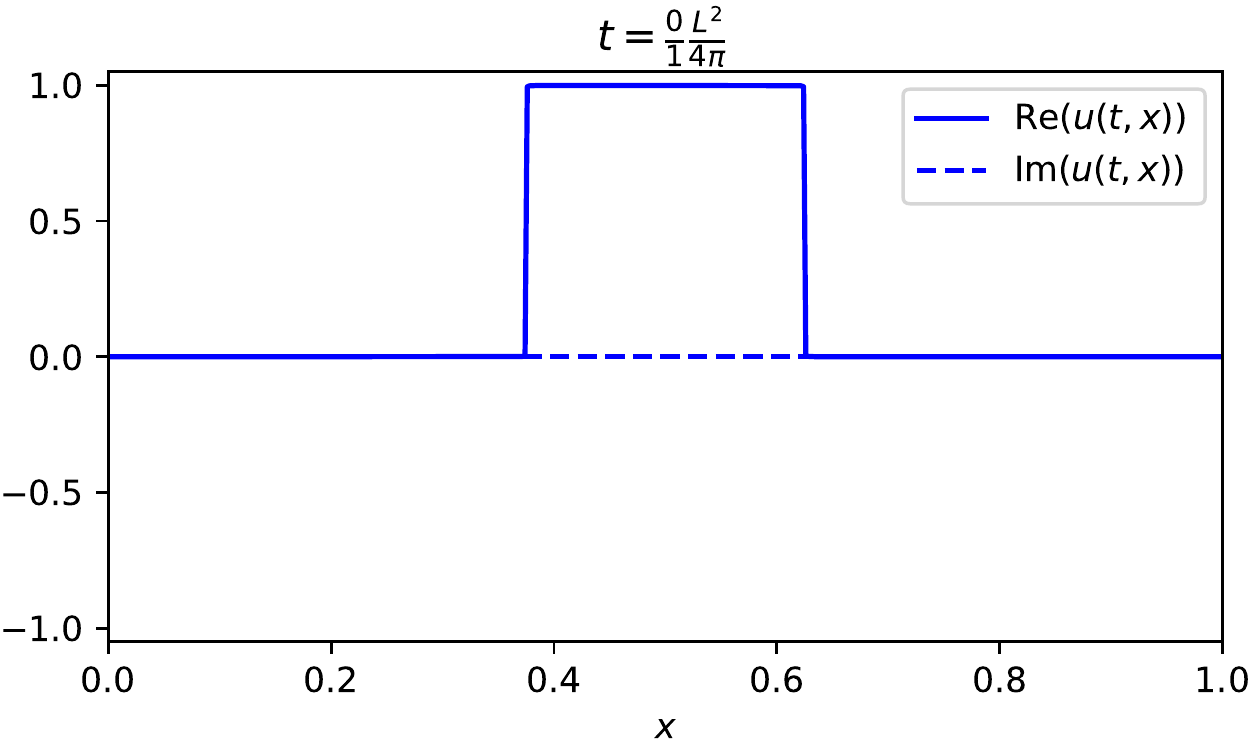}
    \subcaption{$t=0$}
  \end{minipage}
  \hfill
  \begin{minipage}[b]{.32\linewidth}
    \centering
    \includegraphics[width=\linewidth]{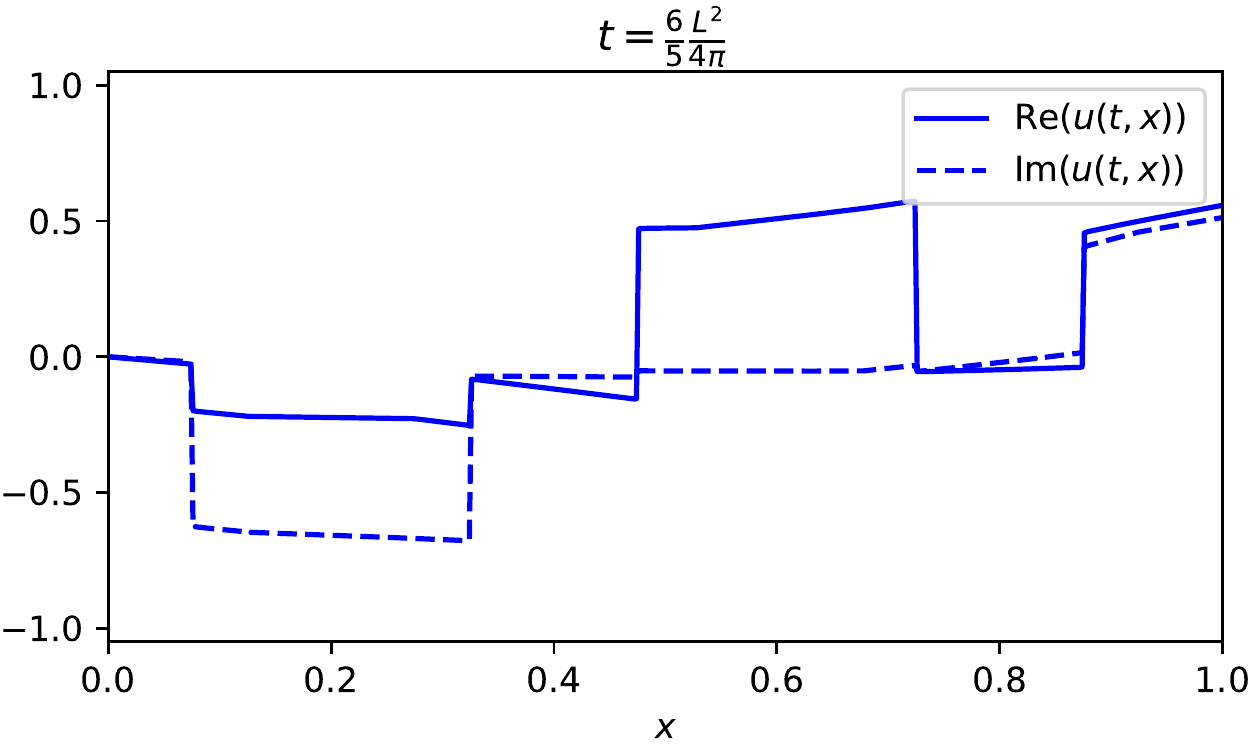}
    \subcaption{$\udsty t=\frac{6}{5}\frac{L^2}{4\pii}\approx0.09$}
  \end{minipage}
  \hfill
  \begin{minipage}[b]{.32\linewidth}
    \centering
    \includegraphics[width=\linewidth]{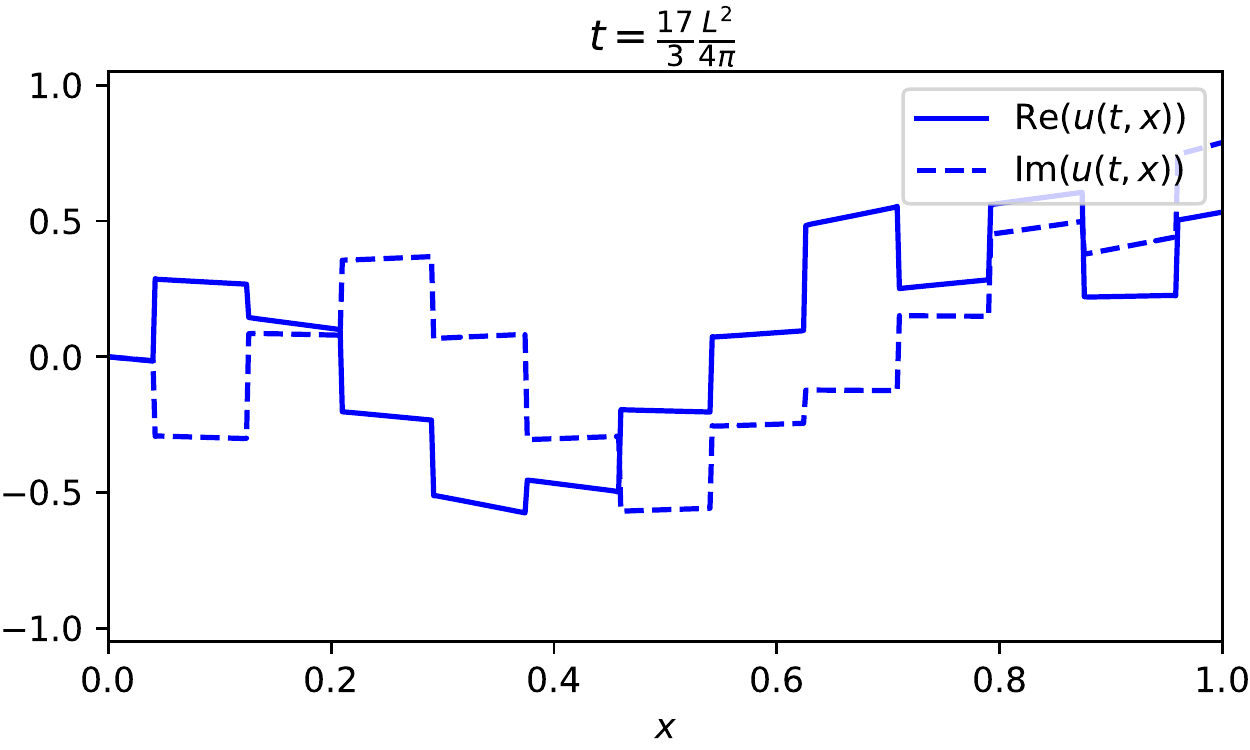}
    \subcaption{$\udsty t=\frac{17}{3}\frac{L^2}{4\pii}\approx0.45$}
  \end{minipage}
  \\
  \vspace{2ex}
  \begin{minipage}[b]{.32\linewidth}
    \centering
    \includegraphics[width=\linewidth]{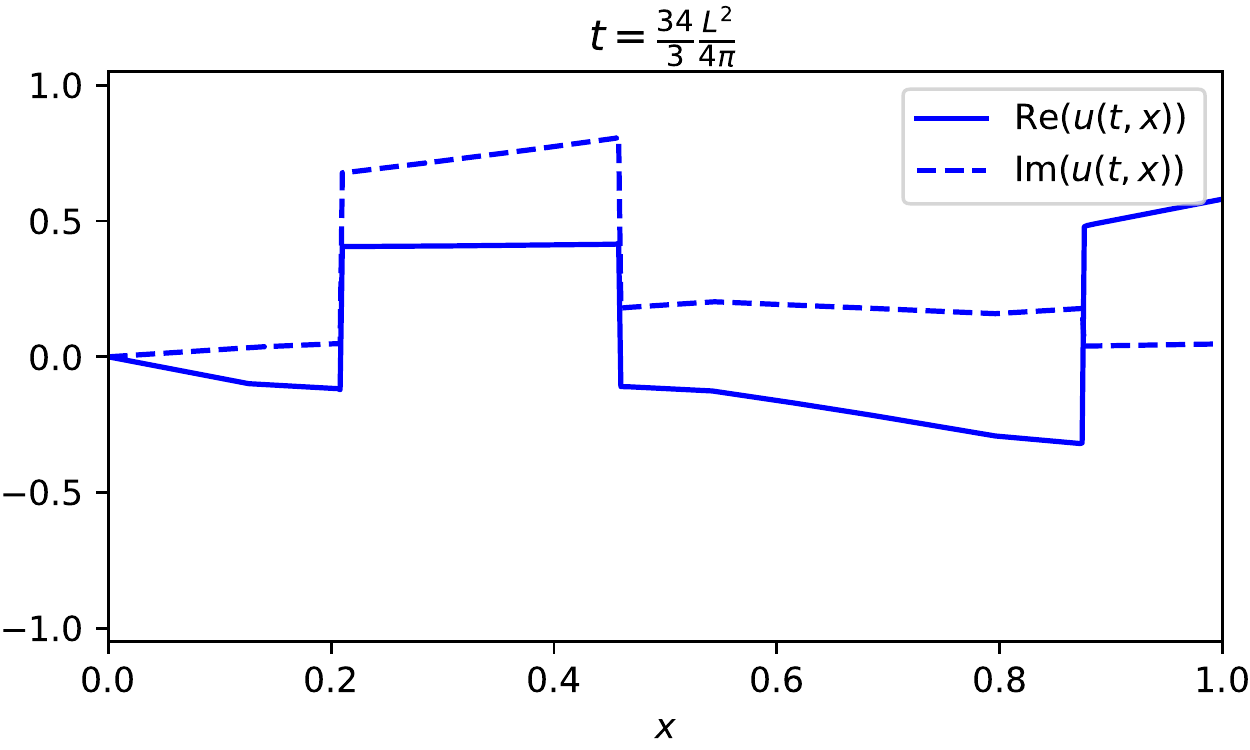}
    \subcaption{$\udsty t=\frac{34}{3}\frac{L^2}{4\pii}\approx0.9$}
      \end{minipage}
  \hfill
  \begin{minipage}[b]{.32\linewidth}
    \centering
    \includegraphics[width=\linewidth]{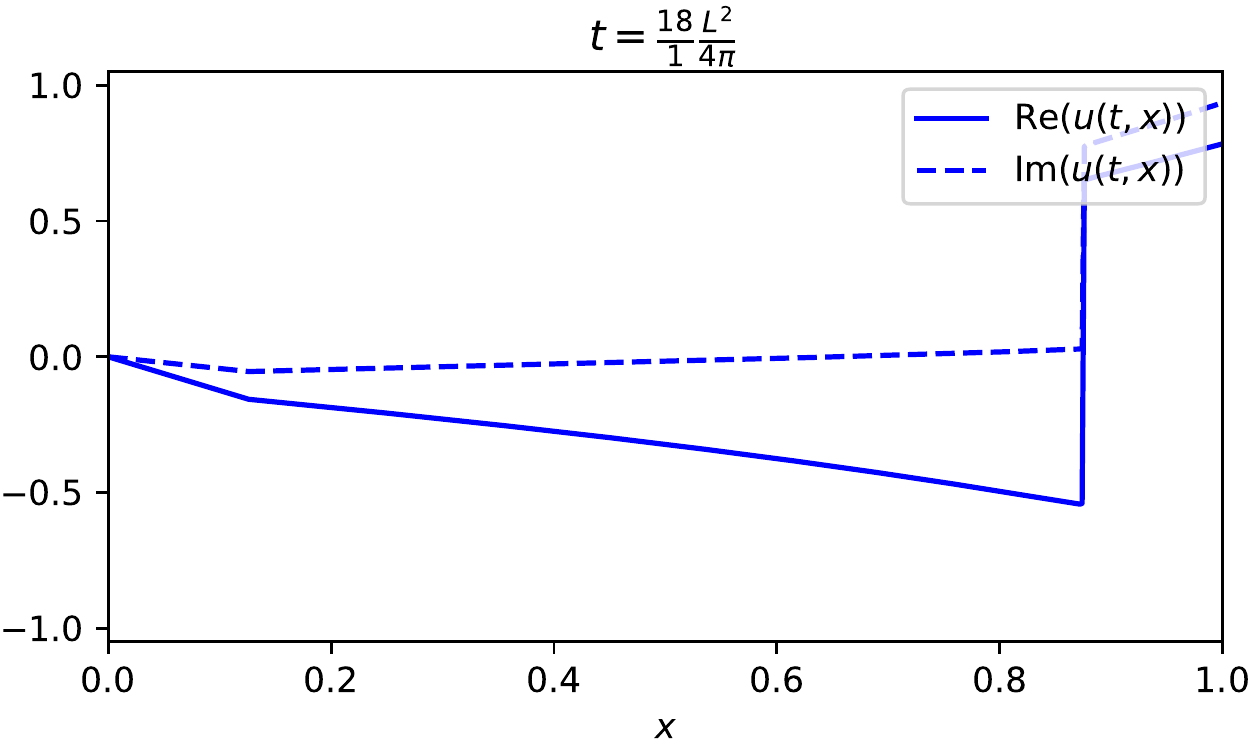}
    \subcaption{$\udsty t=\frac{18}{1}\frac{L^2}{4\pii}\approx1.44$}
  \end{minipage}
  \hfill
  \begin{minipage}[b]{.32\linewidth}
    \centering
    \includegraphics[width=\linewidth]{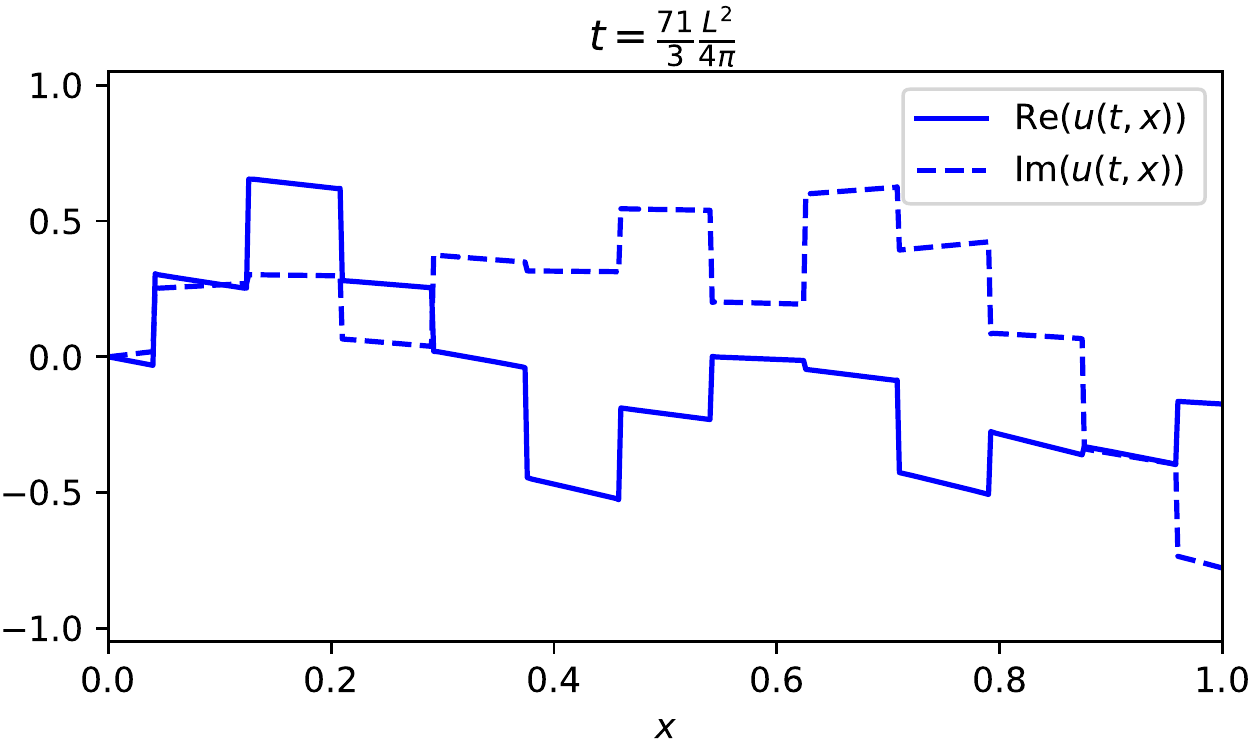}
    \subcaption{$\udsty t=\frac{71}{3}\frac{L^2}{4\pii}\approx1.89$}
  \end{minipage}
  \caption{
    The solution of the linear Schr\"{o}dinger equation with linear boundary conditions $\beta_{11}=-.7$, \ $\beta_{12}=1$, \ $\beta_{13}=0$, \ $\beta_{14}=0$, \ $\beta_{22}=0$, \ $\beta_{23}=0$, \ and \ $\beta_{24}=1$ on $[0,1]$ and box initial datum evaluated at ``rational" times which are commensurate with $L^2/(4\pi)$.
  }
  \label{fig:RobinDirichlet_imag_r}
\end{figure}

\subsection{Unstable boundary conditions}
In other settings when the $\kappa_j$ are farther from their asymptotic values and the problem is no longer self-adjoint, we can see energy growth and the Talbot effect does not seem to appear. An example is the boundary conditions
\Eq{UBC}
$$10\,u_x(t,1)-13\,u(t,1)+2\,u_x(t,0)-.1\,u(t,0)=0, \qquad 
19\,u(t,1)+u_x(t,0)-.1\,u(t,0)=0,$$
corresponding to 
$$\qeq{\beta_{11}=10,\\\beta_{12}=-13,\\\beta_{13}=2,\\\beta_{14}=-.1,\\\beta_{22}=19,\\\beta_{23}=1,\\\beta_{24}=.1.}$$
In this example there are two purely imaginary, and twelve complex $\kappa_j$; the latter come in three complex conjugate pairs and their opposites, $-\kappa_j$, as illustrated in Figure~\ref{fig:evals_gen_complex}. 
\begin{figure}[h!]
\centering
    \includegraphics[width=.5\linewidth]{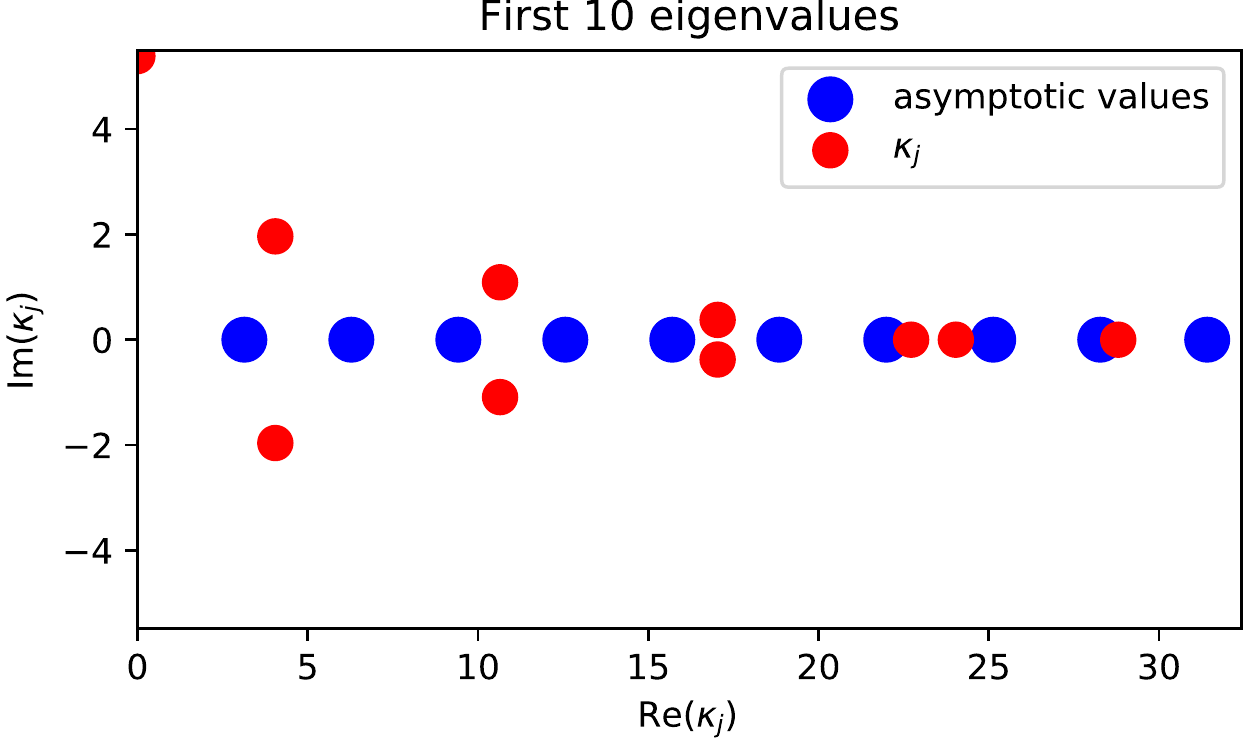}
    \caption{The first ten $\kappa_j$ found numerically in red and the asymptotic values they approach~\eqref{eqn:asym_lambda_nonzero} in blue for $\beta_{11}=10$, $\beta_{12}=-13$, $\beta_{13}=2$, $\beta_{14}=-.1$, $\beta_{22}=19$, $\beta_{23}=1$, $\beta_{24}=.1$, and $L=1$.}\label{fig:evals_gen_complex}
\end{figure}
The corresponding solution is plotted in Figure~\ref{fig:gen_complexeval_r}. Note the rapidly increasing ranges for the $y$-axis as $t$ increases, as necessitated by the dominance of the unstable modes.
\begin{figure}[h!]
  \centering
  \begin{minipage}[b]{.32\linewidth}
    \centering
    \includegraphics[width=\linewidth]{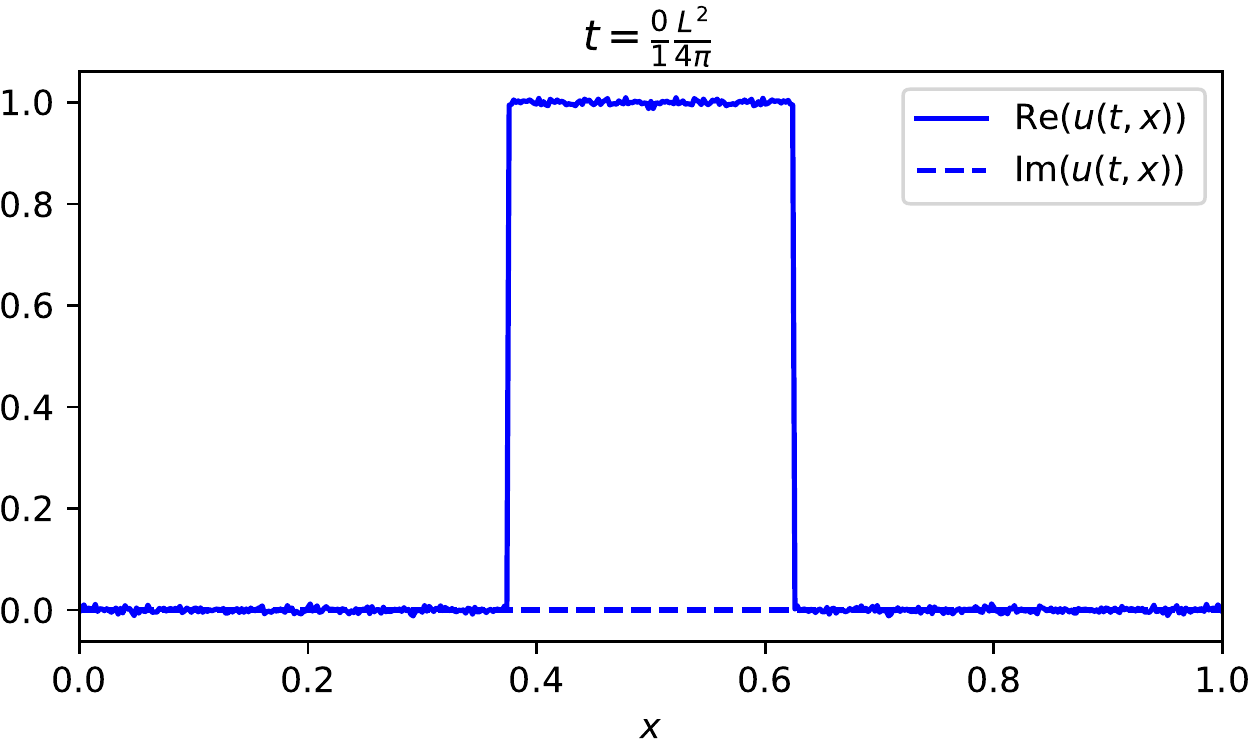}
    \subcaption{$t=0$\textcolor{white}{$\frac{0}{4\pii}$}}
  \end{minipage}
  \hfill
  \begin{minipage}[b]{.32\linewidth}
    \centering
    \includegraphics[width=\linewidth]{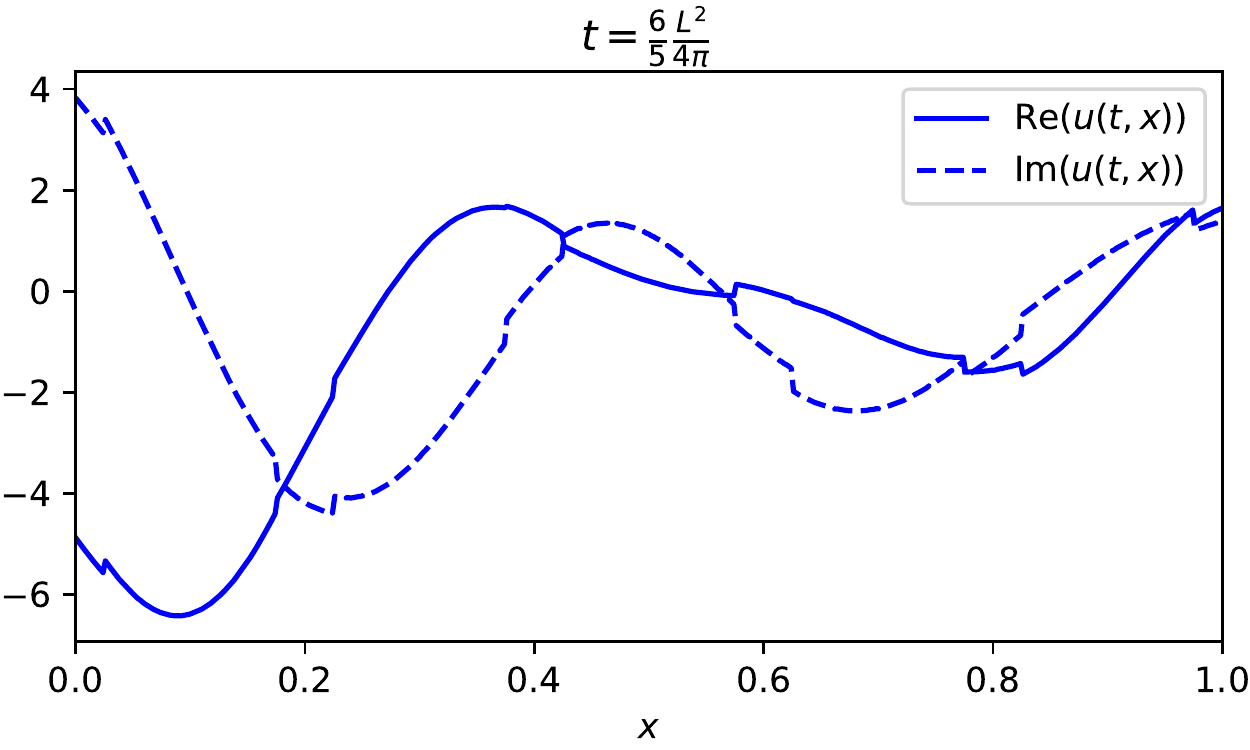}
    \subcaption{$\udsty t=\frac{6}{5}\frac{L^2}{4\pii}\approx0.09$}
  \end{minipage}
  \hfill
  \begin{minipage}[b]{.32\linewidth}
    \centering
    \includegraphics[width=\linewidth]{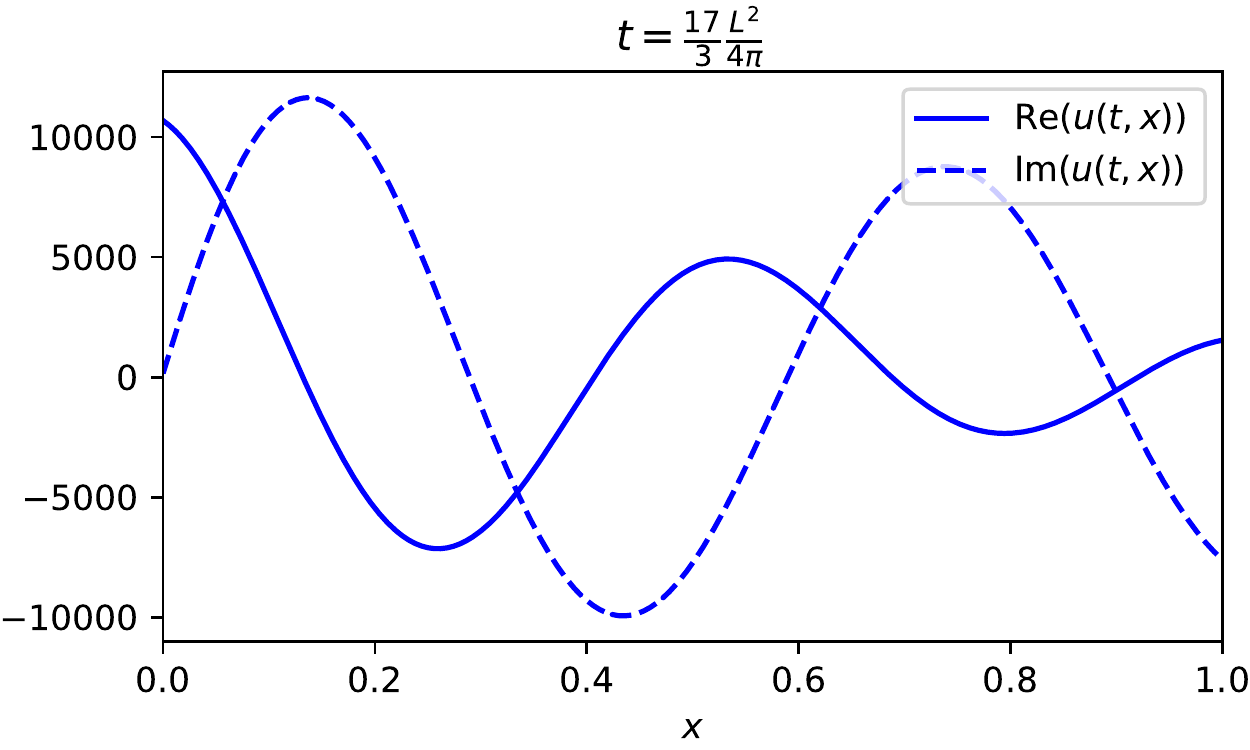}
    \subcaption{$\udsty t=\frac{17}{3}\frac{L^2}{4\pii}\approx0.45$}
  \end{minipage}
  \\
  \vspace{2ex}
  \begin{minipage}[b]{.32\linewidth}
    \centering
    \includegraphics[width=\linewidth]{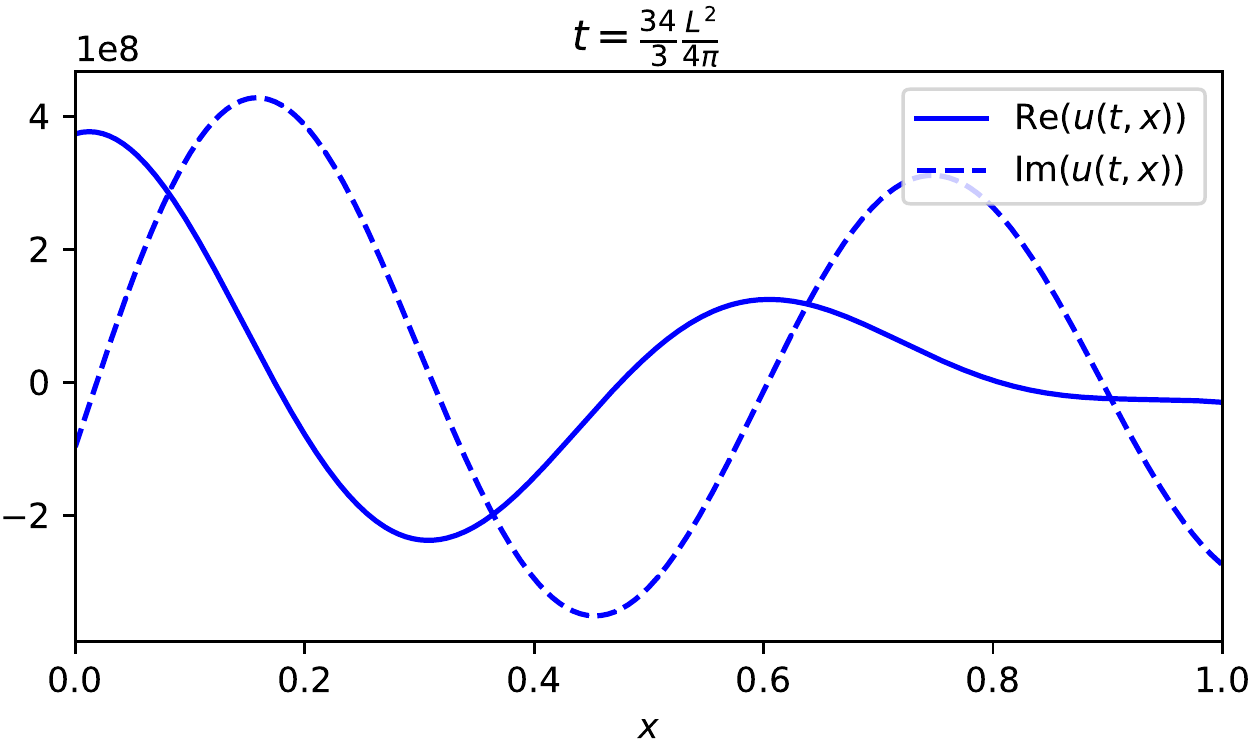}
    \subcaption{$\udsty t=\frac{34}{3}\frac{L^2}{4\pii}\approx0.9$}
      \end{minipage}
  \hfill
  \begin{minipage}[b]{.32\linewidth}
    \centering
    \includegraphics[width=\linewidth]{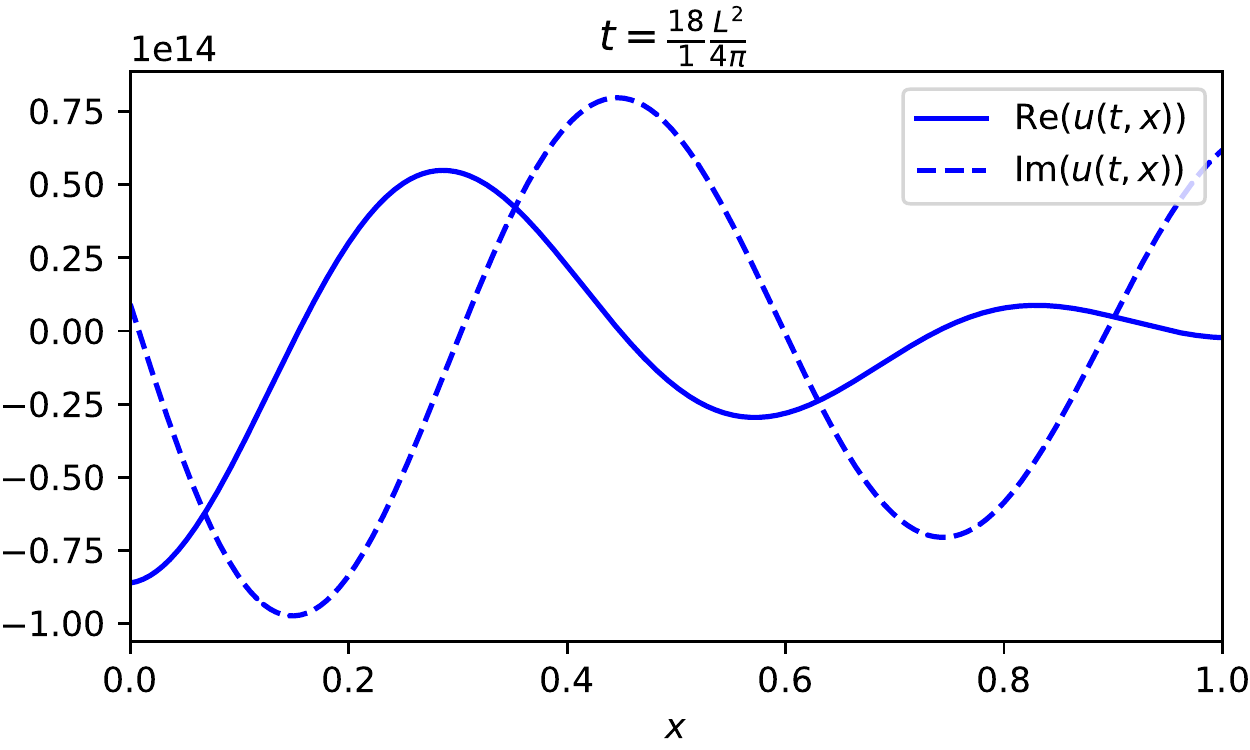}
    \subcaption{$\udsty t=\frac{18}{1}\frac{L^2}{4\pii}\approx1.44$}
  \end{minipage}
  \hfill
  \begin{minipage}[b]{.32\linewidth}
    \centering
    \includegraphics[width=\linewidth]{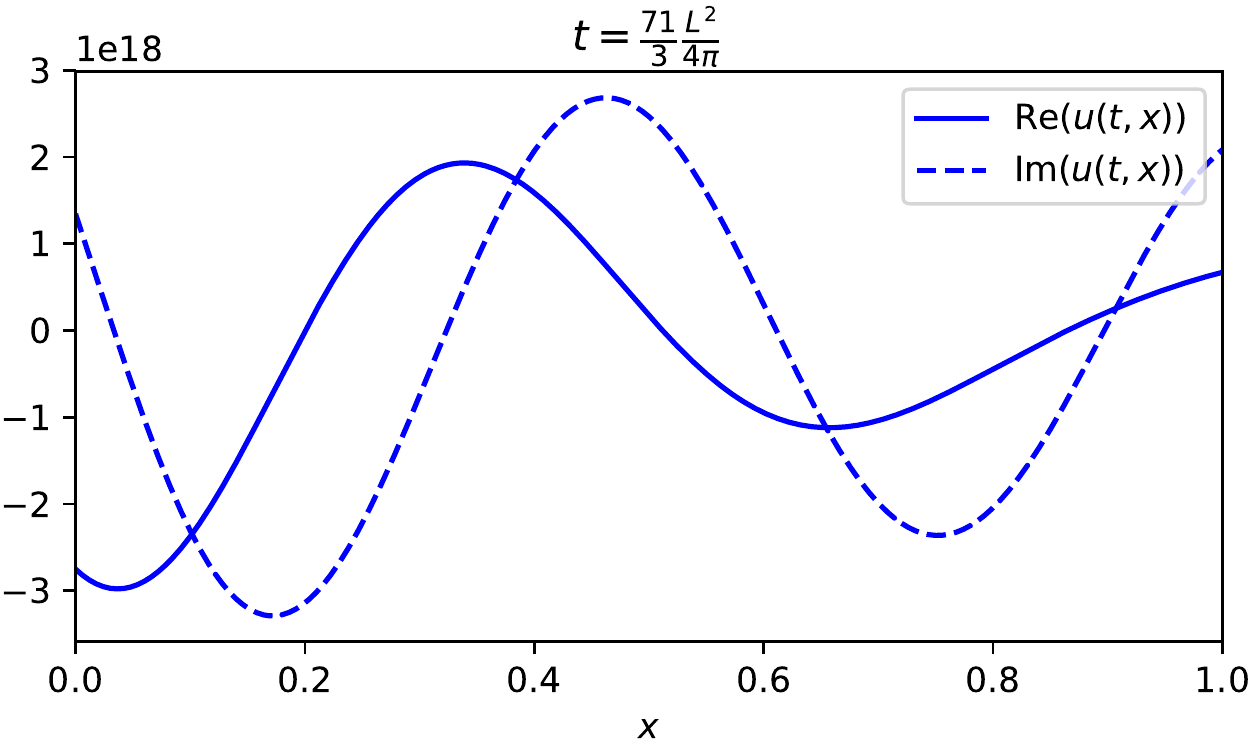}
    \subcaption{$\udsty t=\frac{71}{3}\frac{L^2}{4\pii}\approx1.89$}
  \end{minipage}
  \caption{
    The solution of the linear Schr\"{o}dinger equation with linear boundary conditions $\beta_{11}=10$, $\beta_{12}=-13$, $\beta_{13}=2$, $\beta_{14}=-.1$, $\beta_{22}=19$, $\beta_{23}=1$, and $\beta_{24}=.1$ on $[0,1]$ and box initial datum evaluated at ``rational" times which are commensurate with $L^2/(4\pi)$.
  }
  \label{fig:gen_complexeval_r}
\end{figure}

\subsection{Self-adjoint boundary conditions}


Next, let us look a the following self-adjoint boundary conditions
\Eq{SABC}
$$\qeq{\tfrac15\,u(t,L)+u(t,0)= 0,\\
  5\,u_x(t,1)+\tfrac12\,u(t,1)+u_x(t,0)=0.}$$
corresponding to
$$\qeq{\beta_{11}=5,\\\beta_{12}=1/2,\\\beta_{13}=1,\\\beta_{14}=0,\\\beta_{22}=1/5,\\\beta_{23}=0,\\\beta_{24}=1.}$$
In this case all $\kappa_j$ are real and they approach~\eqref{eqns:asym_lambda} very quickly.
The plots in Figure~\ref{fig:genBC_selfadjoint_r} show behavior that appears to exhibit revivals.
\begin{figure}[h!]
  \centering
  \begin{minipage}[b]{.32\linewidth}
    \centering
    \includegraphics[width=\linewidth]{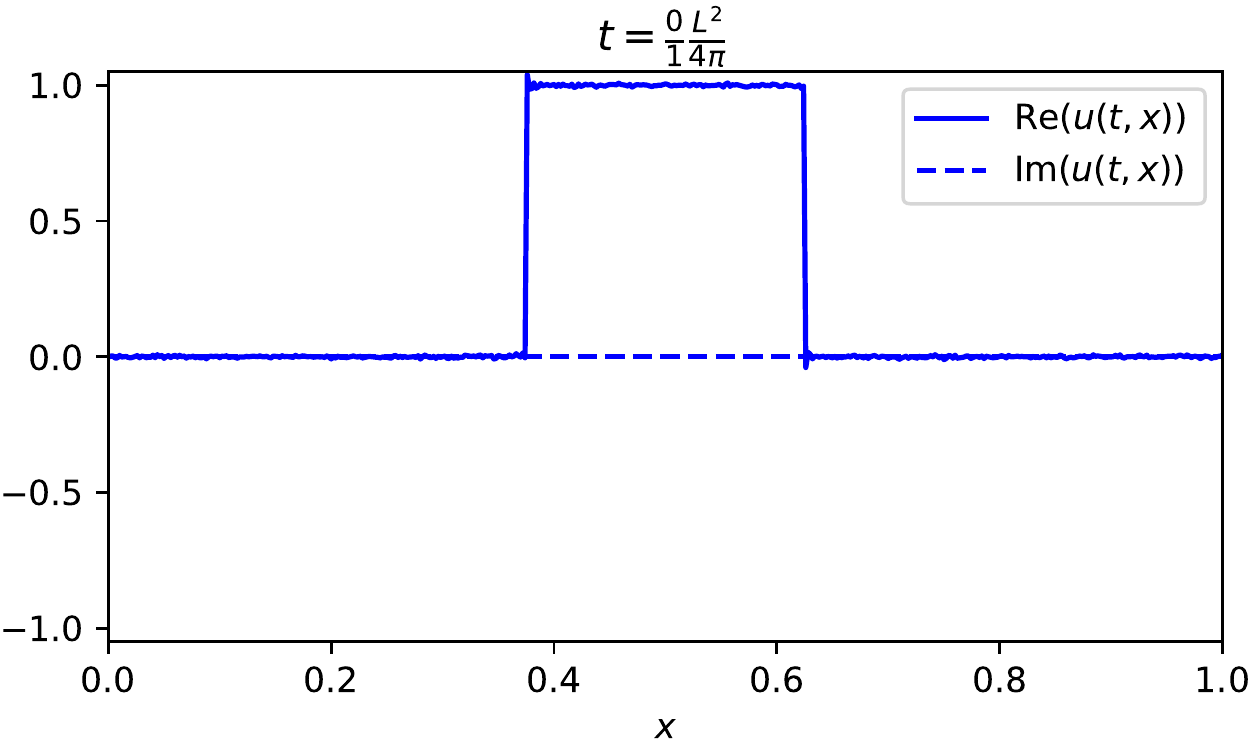}
    \subcaption{$t=0$ \textcolor{white}{$L^2/4\pii$}}
  \end{minipage}
  \hfill
  \begin{minipage}[b]{.32\linewidth}
    \centering
    \includegraphics[width=\linewidth]{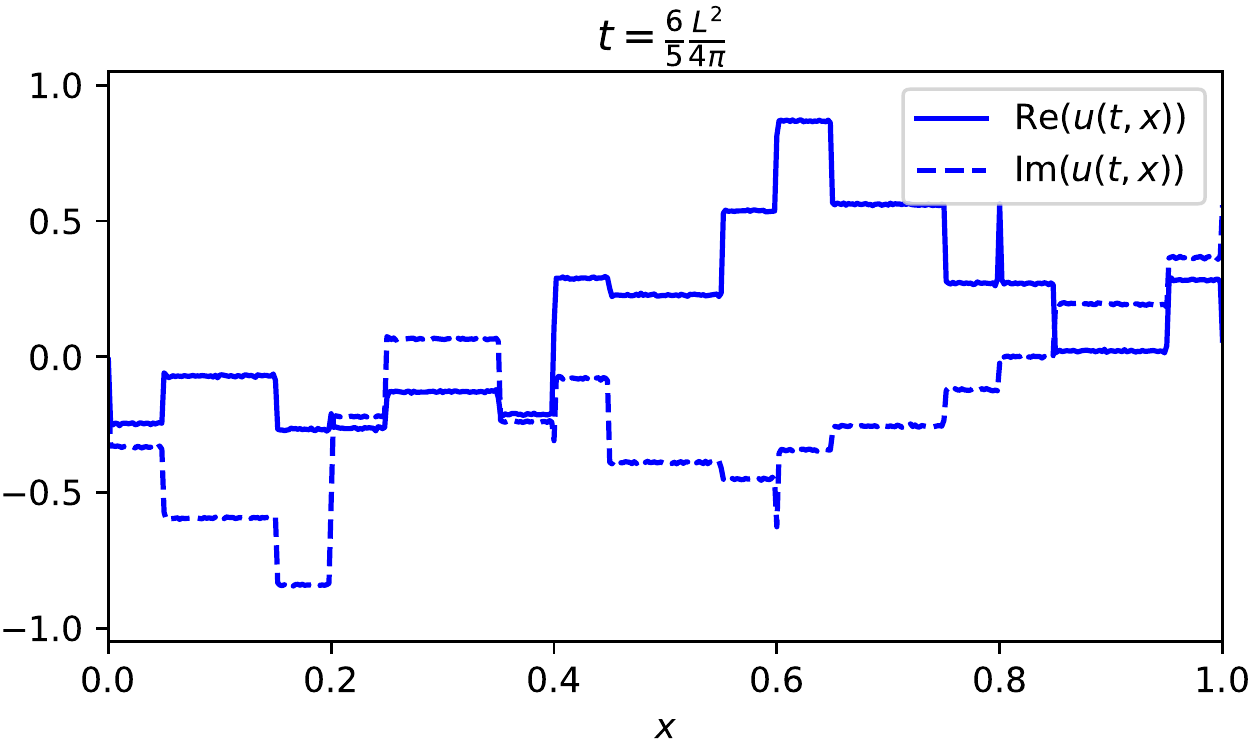}
    \subcaption{$\udsty t=\frac{6}{5}\frac{L^2}{4\pii}\approx0.09$}
  \end{minipage}
  \hfill
  \begin{minipage}[b]{.32\linewidth}
    \centering
    \includegraphics[width=\linewidth]{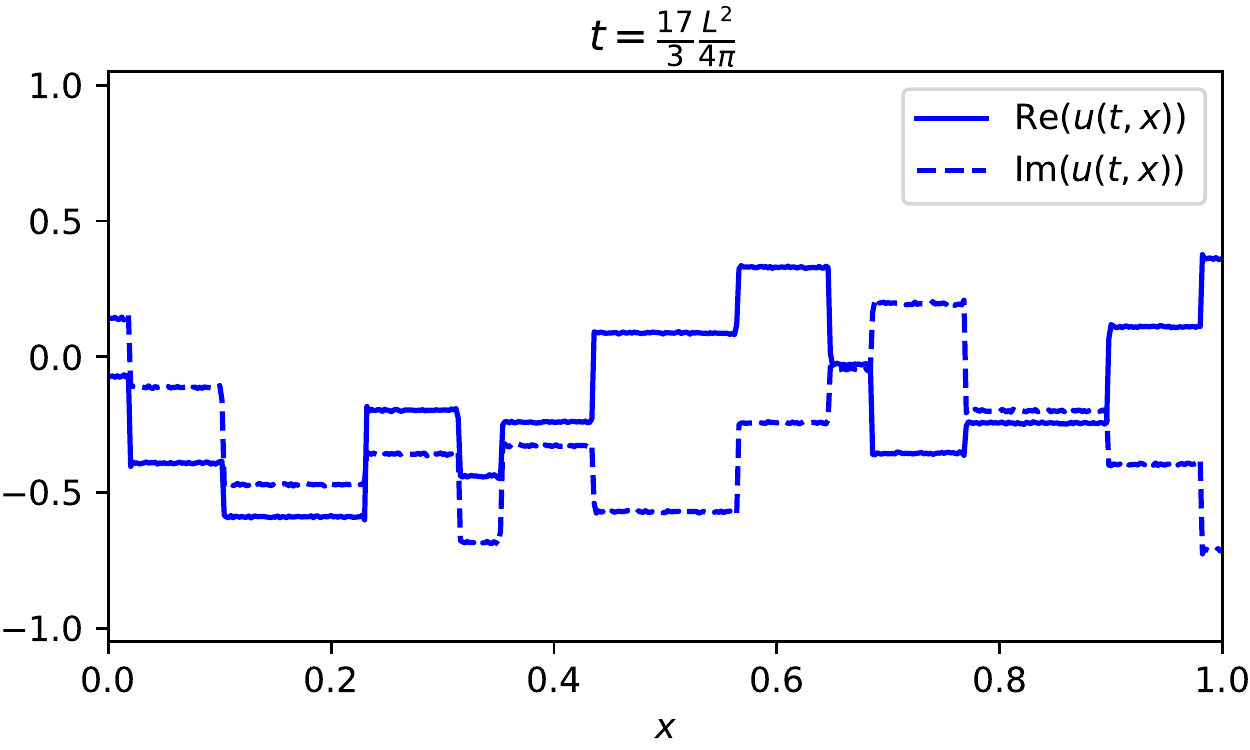}
    \subcaption{$\udsty t=\frac{17}{3}\frac{L^2}{4\pii}\approx0.45$}
  \end{minipage}
  \\
  \vspace{2ex}
  \begin{minipage}[b]{.32\linewidth}
    \centering
    \includegraphics[width=\linewidth]{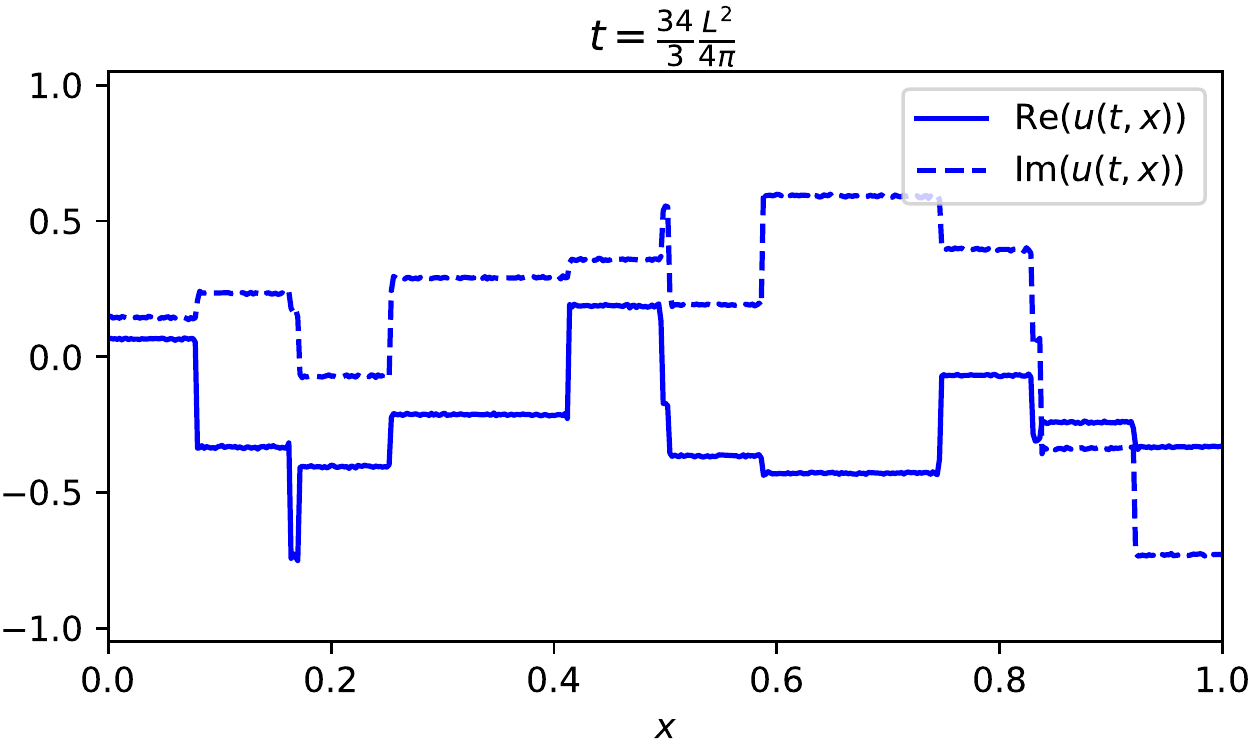}
    \subcaption{$\udsty t=\frac{34}{3}\frac{L^2}{4\pii}\approx0.9$}
      \end{minipage}
  \hfill
  \begin{minipage}[b]{.32\linewidth}
    \centering
    \includegraphics[width=\linewidth]{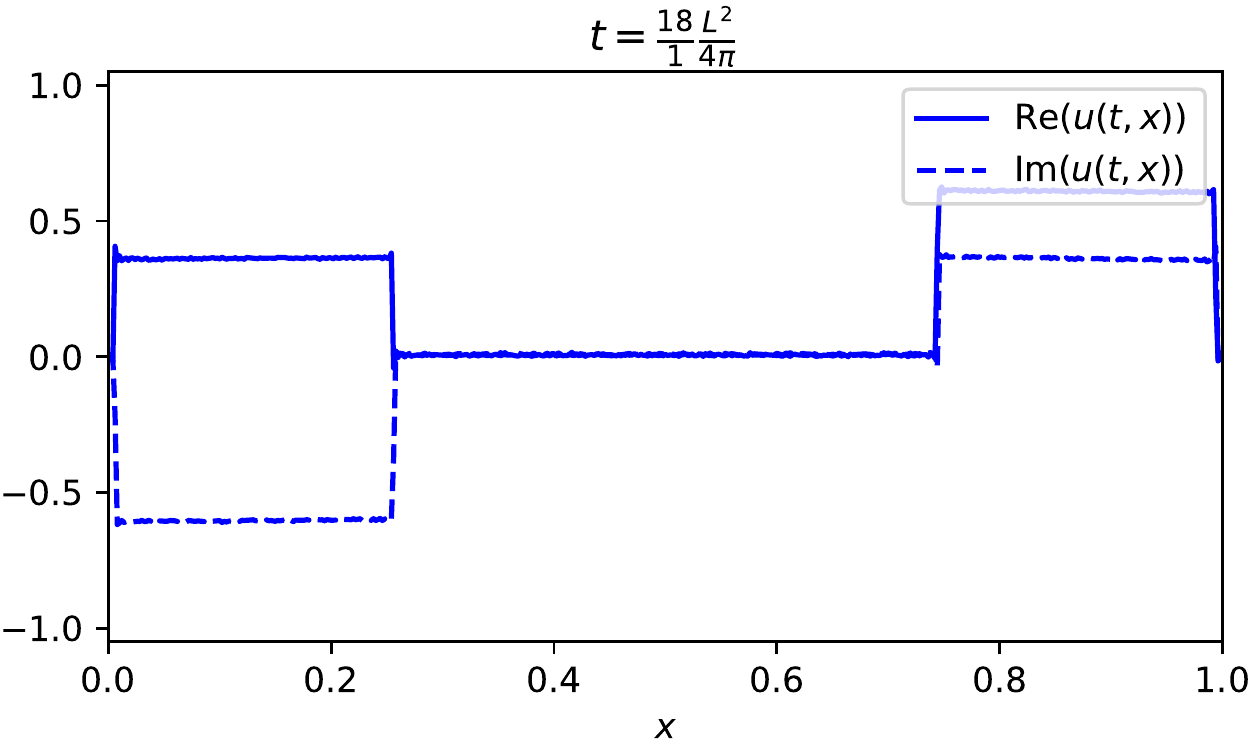}
    \subcaption{$\udsty t=\frac{18}{1}\frac{L^2}{4\pii}\approx1.44$}
  \end{minipage}
  \hfill
  \begin{minipage}[b]{.32\linewidth}
    \centering
    \includegraphics[width=\linewidth]{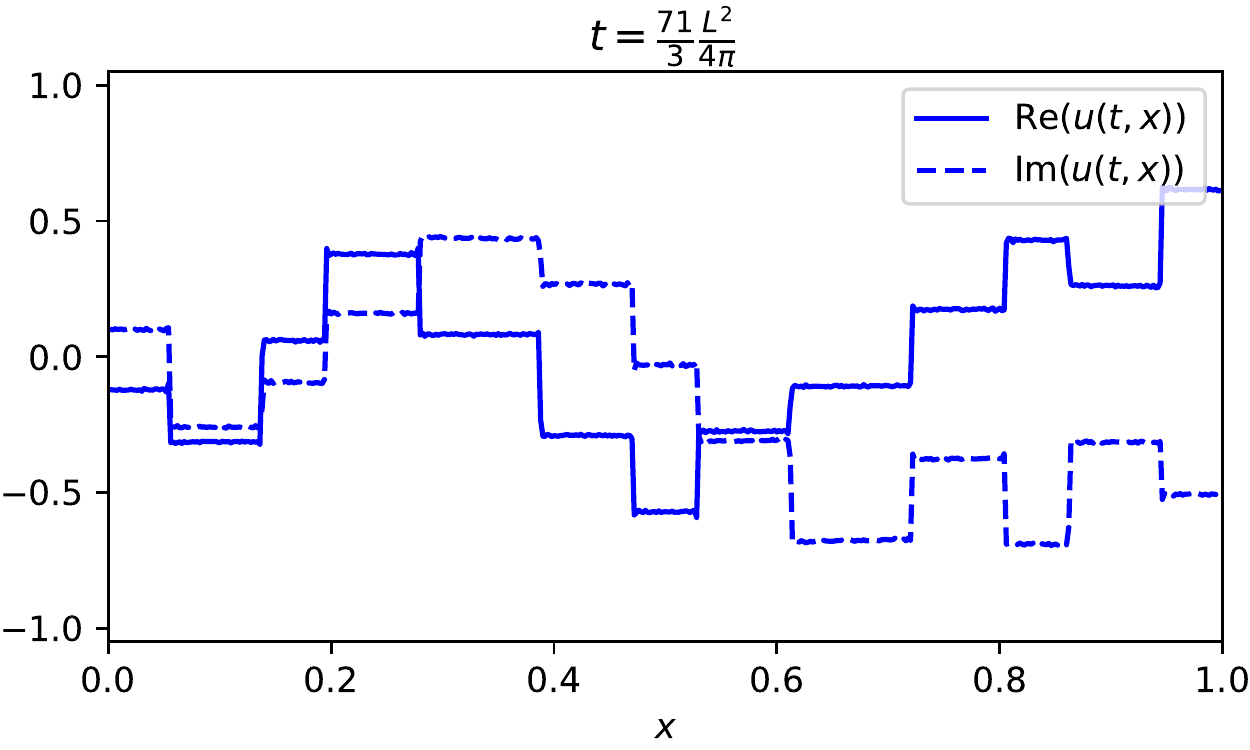}
    \subcaption{$\udsty t=\frac{71}{3}\frac{L^2}{4\pii}\approx1.89$}
  \end{minipage}
  \caption{
    The solution of the linear Schr\"{o}dinger equation with self-adjoint linear boundary conditions $\beta_{11}=5$, $\beta_{12}=1/2$, $\beta_{13}=1$, $\beta_{14}=0$, $\beta_{22}=1/5$, $\beta_{23}=0$, and $\beta_{24}=1$ on $[0,1]$ and box initial datum evaluated at ``rational" times which are commensurate with $L^2/(4\pi)$.
  }
  \label{fig:genBC_selfadjoint_r}
\end{figure}

\subsection{Energy decreasing boundary conditions}
Finally, we consider the case of boundary conditions for a non self-adjoint problem where energy leaks out.
Consider the case
$$-4\,u_x(t,1)+\i u(t,1)=0, \qquad 
u(t,0)=0,$$
whereby
$$\qeq{\beta_{11}=-4,\\\beta_{12}=\i,\\\beta_{13}=\beta_{14}=\beta_{22}=\beta_{23}=0,\\\beta_{24}=1.}$$
In this example all of the $\kappa_j$ satisfy $\Re(\kappa_j)\Im(\kappa_j)<0$ even as $\Im(\kappa_j)$ approaches $0$ for large $j$ as illustrated by Figure~\ref{fig:evals_energydecrease}.
\begin{figure}[h!]
\centering
    \includegraphics[width=.5\linewidth]{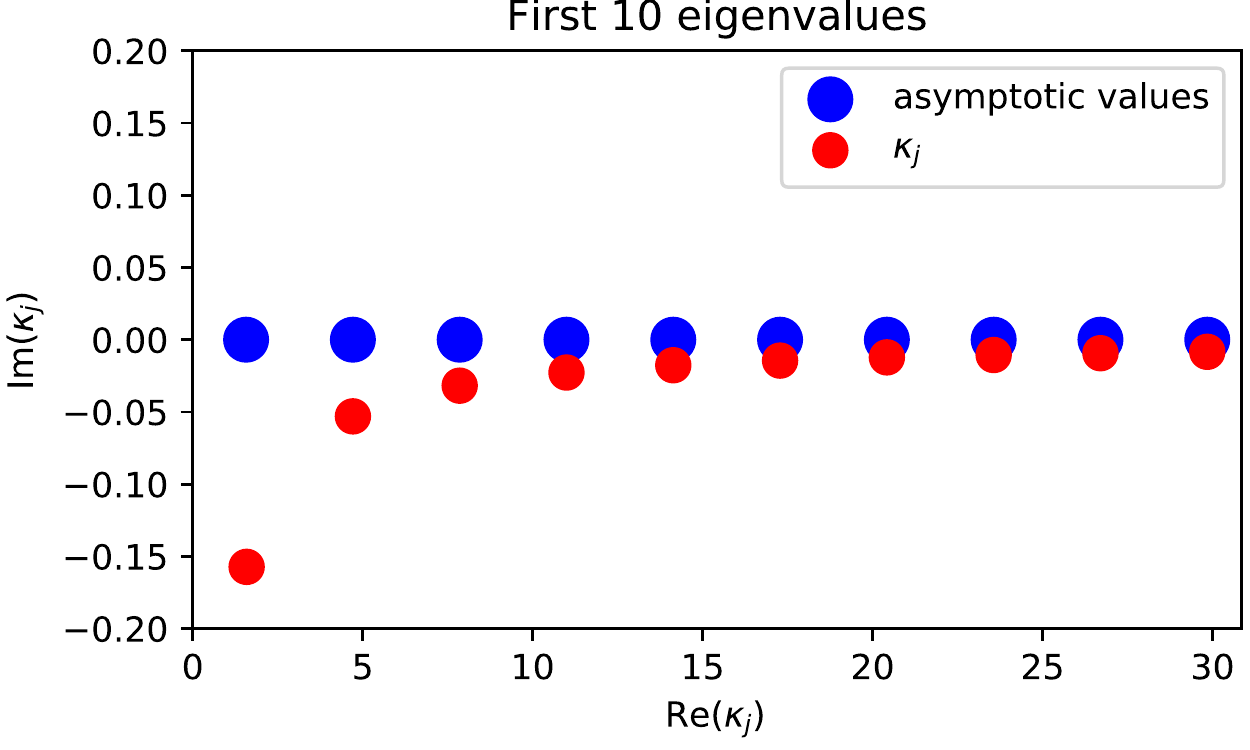}
    \caption{The first ten $\kappa_j$ found numerically in red and the asymptotic values they approach~\eqref{eqn:alt_asym_zero} in blue for $\beta_{11}=-4$, $\beta_{12}=\i$, $\beta_{13}=0$, $\beta_{14}=0$, $\beta_{22}=0$, $\beta_{23}=0$, $\beta_{24}=1$, and $L=1$.}\label{fig:evals_energydecrease}
\end{figure}

\begin{figure}[h!]
  \centering
  \begin{minipage}[b]{.32\linewidth}
    \centering
    \includegraphics[width=\linewidth]{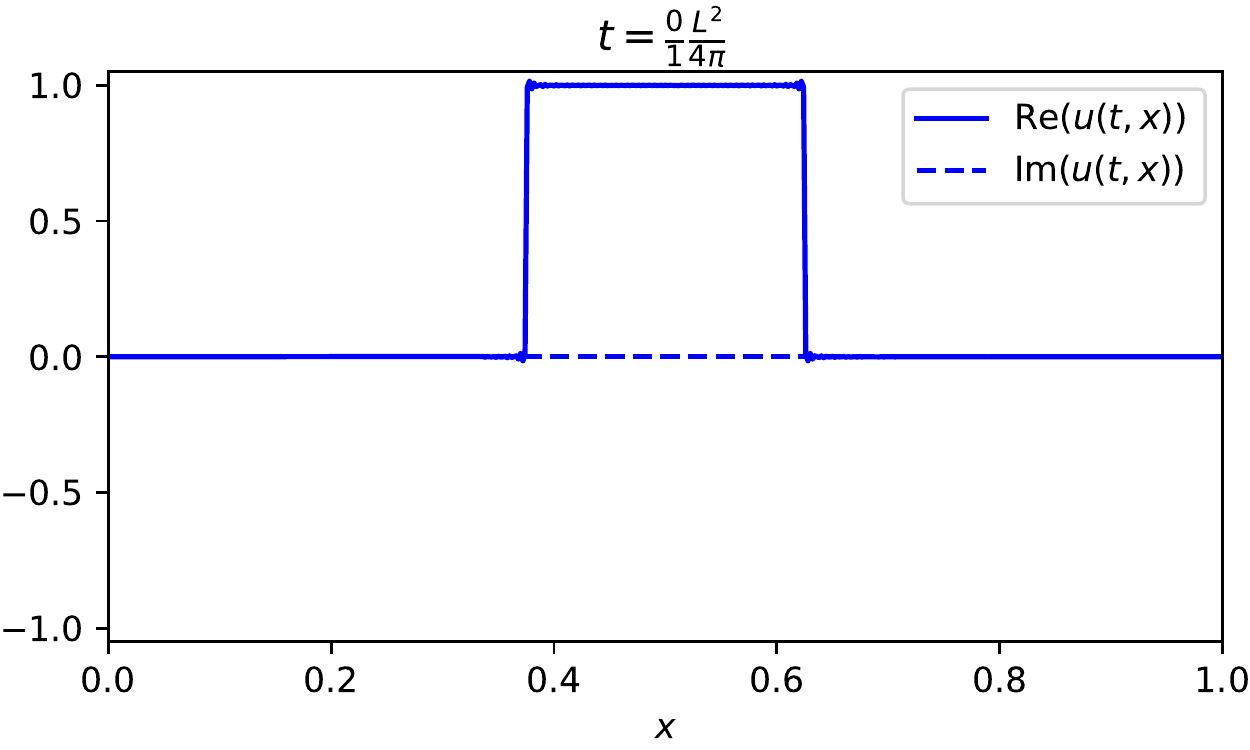}
    \subcaption{$t=0$\textcolor{white}{$\frac{0}{4\pii}$}}
  \end{minipage}
  \hfill
  \begin{minipage}[b]{.32\linewidth}
    \centering
    \includegraphics[width=\linewidth]{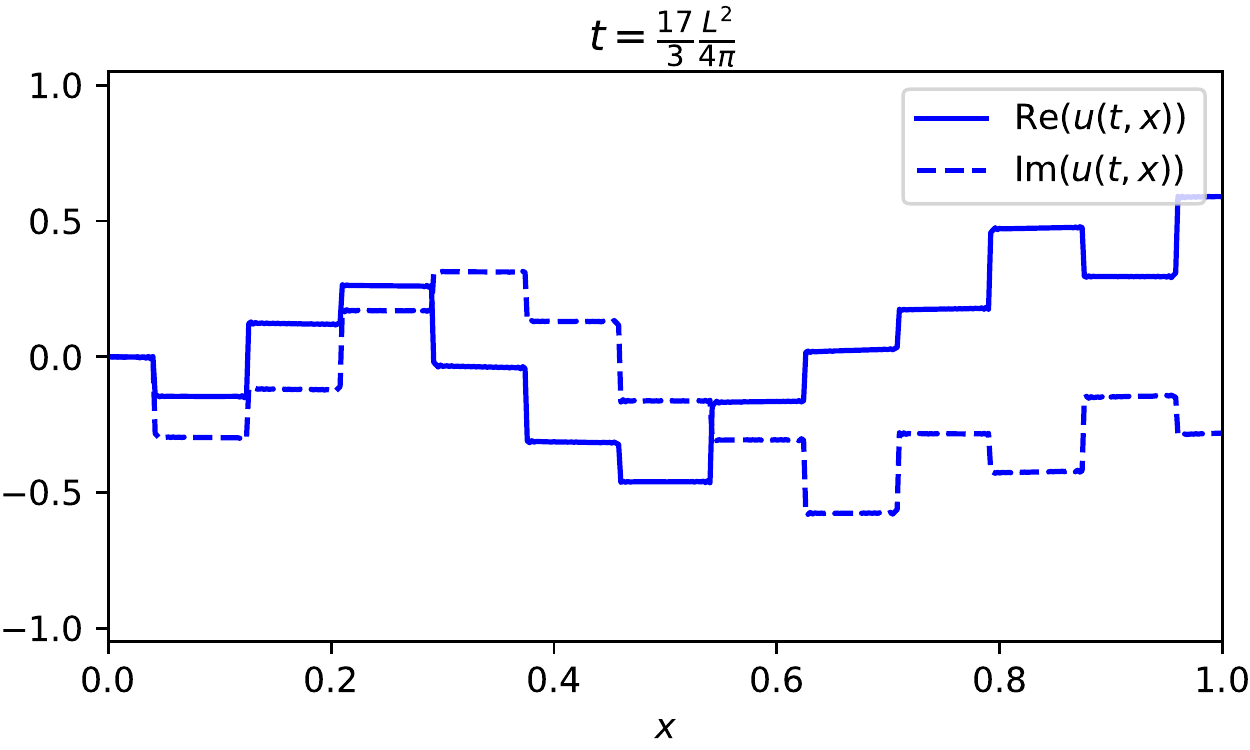}
    \subcaption{$\udsty t=\frac{17}{3}\frac{L^2}{4\pii}\approx0.45$}
  \end{minipage}
  \hfill
  \begin{minipage}[b]{.32\linewidth}
    \centering
    \includegraphics[width=\linewidth]{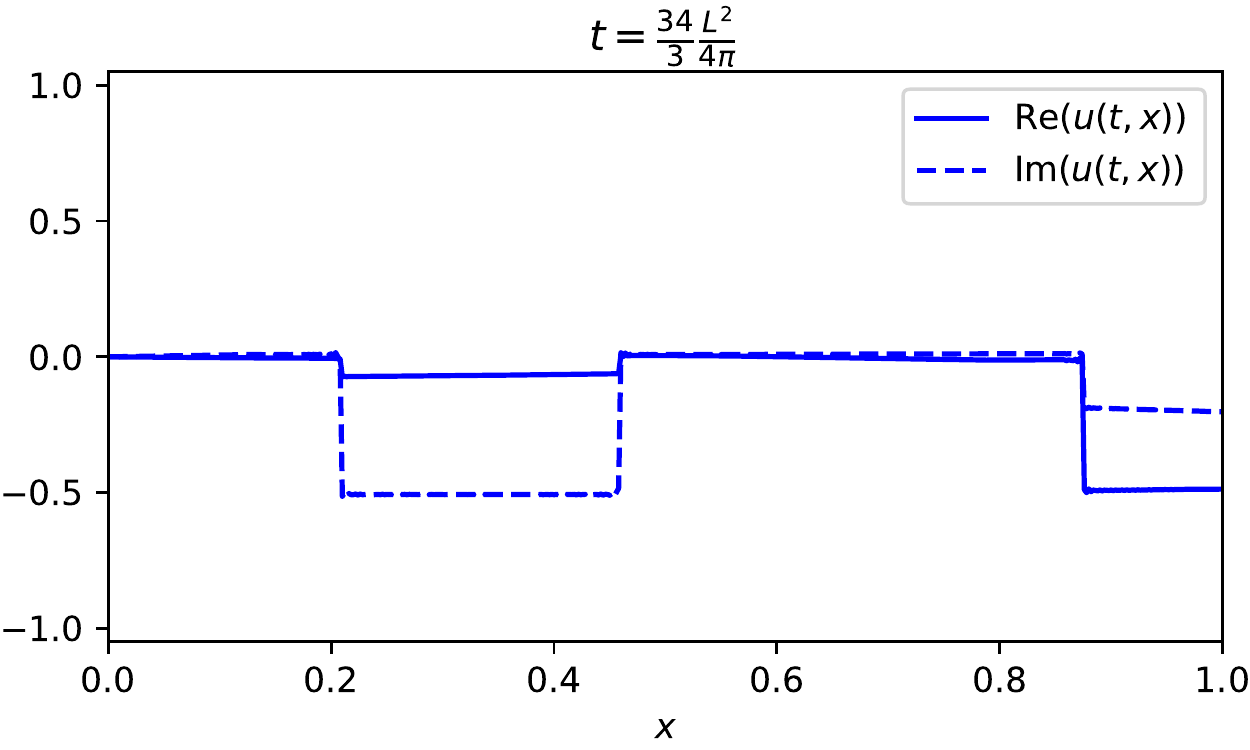}
    \subcaption{$\udsty t=\frac{34}{3}\frac{L^2}{4\pii}\approx0.9$}
      \end{minipage}
        \\
  \vspace{2ex}
  \begin{minipage}[b]{.32\linewidth}
    \centering
    \includegraphics[width=\linewidth]{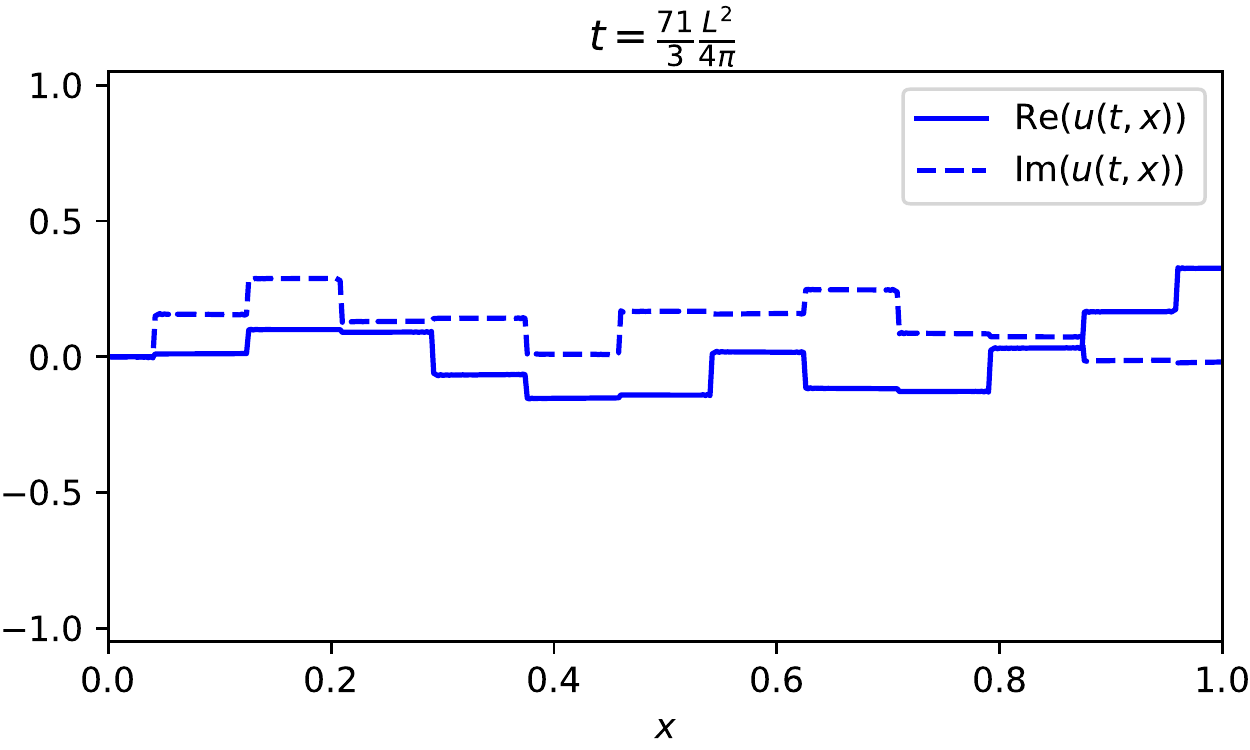}
    \subcaption{$\udsty t=\frac{71}{3}\frac{L^2}{4\pii}\approx1.89$}
  \end{minipage}
  \hfill
  \begin{minipage}[b]{.32\linewidth}
    \centering
    \includegraphics[width=\linewidth]{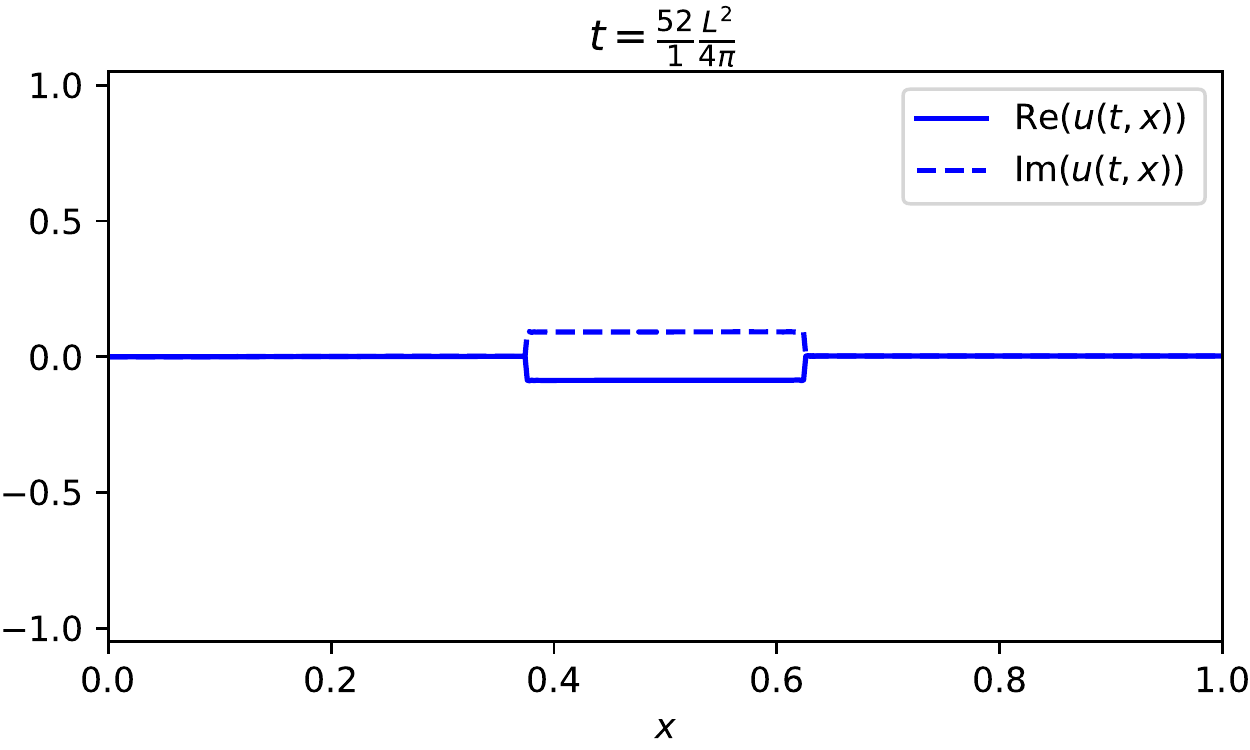}
    \subcaption{$\udsty t=\frac{52}{1}\frac{L^2}{4\pii}\approx4.14$}
  \end{minipage}
    \hfill
  \begin{minipage}[b]{.32\linewidth}
    \centering
    \includegraphics[width=\linewidth]{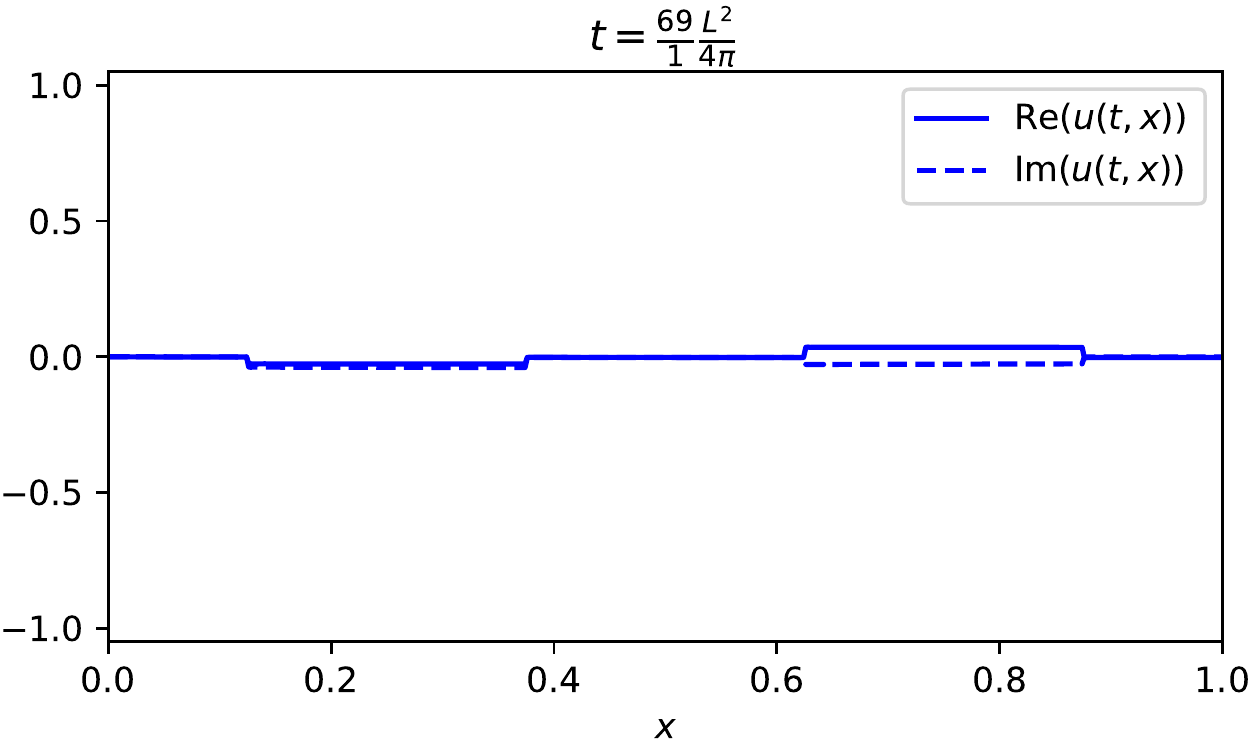}
    \subcaption{$\udsty t=\frac{69}{1}\frac{L^2}{4\pii}\approx5.49$}
  \end{minipage}
  \caption{
    The solution of the linear Schr\"{o}dinger equation with linear boundary conditions $\beta_{11}=-4$, $\beta_{12}=\i$, $\beta_{13}=0$, $\beta_{14}=0$, $\beta_{22}=0$, $\beta_{23}=0$, and $\beta_{24}=1$ on $[0,1]$ and box initial datum evaluated at ``rational" times which are commensurate with $L^2/(4\pi)$.
  }
  \label{fig:genbc_energyleak_r}
\end{figure}
It is rather surprising that although the solution decays to zero as $t \to \infty $, unlike the dissipative heat equation, it does not smooth out. 
Further numerical experiments suggest that the solution remains piecewise linear when the initial data is also piecewise linear. 
We have thus uncovered a new phenomenon that could be named ``dissipative(-dispersive) revival''.

In contrast, suppose that we tried to apply the framework to discern a Talbot type effect for the heat equation with periodic boundary data.
A classical smoothness argument is sufficient to show that the heat equation cannot exhibit dispersive quantisation; moreover, it is not hard to see that there cannot be a Talbot type effect even for smooth initial data.

Given the preceding plots in this section and further numerical experimentation, it seems that revivals are present whenever the problem is self-adjoint, and dissipative revivals appear when the boundary conditions cause the energy to decrease.
At this early stage, we cannot prove this and, indeed, would need to develop a new method to study~\eqref{eqn:seressoln_genBC} since the ideas presented in Section~\ref{sec:analysis} no longer hold.
Another possible characterization of revivals would be the ``nearness" of the $\kappa_j$ to their asymptotic values given in~\eqref{eqns:asym_lambda}.
This is an interesting problem for future study that we plan to consider.

\section{Concluding remarks}


We have investigated how the Talbot effect associated with the linear Schr\"odinger equation, originally detected in the context of periodic boundary conditions, arises in more general situations through the appearance of certain linear combinations of shifts and reflections of the initial datum at rational times.
To achieve this, we used two solution methods: generalised Fourier series and the Unified Transform Method. 
For the cases investigated in detail here, the generalised Fourier  series is cleaner and leads to the main results, but will not generalise as far as the UTM representation. 
Moreover, if the UTM approach produces complex integrals that can be  deformed down to the real line, then one is guaranteed that the eigenfunctions form a complete set. 
While this is not so crucial for second order problems of the type analyzed here, this will be when generalising to higher order equations.

We have shown how the Talbot effect manifests itself in the case of pseudoperiodic boundary conditions. 
More generally, it appears that self-adjointness or energy preservation is required for the effect to continue in its current form; boundary conditions that are not of this type but nevertheless well posed and with no purely imaginary eigenvalues produce solutions that eventually decay to zero due to the leakage of energy through the endpoints of the interval,  but do not smooth out but rather exhibit a form of ``dissipative revival.''
However, the rigorous justification of this observation as well as understanding the ``dissipative Talbot effect" remains an intriguing open question.

\section{Acknowledgements}
The authors wish to thank Beatrice Pelloni for her useful comments and her role in bringing this problem to our attention.


\bibliographystyle{plain}
\bibliography{./FullBib,./dbrefs}

\begin{thebibliography}{10}

\bibitem{AblowitzFokas}
M.J. Ablowitz and A.S. Fokas.
\newblock {\em Complex variables: Introduction and Applications}.
\newblock Cambridge Texts in Applied Mathematics. Cambridge University Press,
  Cambridge, second edition, 2003.

\bibitem{Berry}
M.V. Berry.
\newblock Quantum fractals in boxes.
\newblock {\em J. Phys. A}, 29(20):6617--6629, 1996.

\bibitem{BerryKlein}
M.V. Berry and S.~Klein.
\newblock Integer, fractional and fractal {T}albot effects.
\newblock {\em J. Modern Opt.}, 43(10):2139--2164, 1996.

\bibitem{BerryMarzoliSchleich}
M.V. Berry, I.~Marzoli, and Schleich W.
\newblock Quantum carpets, carpets of light.
\newblock {\em Physics Wolrd}, 14(6):39--44, 2001.

\bibitem{ChenOlver}
G.~Chen and P.J. Olver.
\newblock Dispersion of discontinuous periodic waves.
\newblock {\em Proc. R. Soc. Lond. Ser. A Math. Phys. Eng. Sci.},
  469(2149):21pp, 2013.

\bibitem{DeconinckTrogdonVasan}
B.~Deconinck, T.~Trogdon, and V.~Vasan.
\newblock The method of {F}okas for solving linear partial differential
  equations.
\newblock {\em SIAM Rev.}, 56(1):159--186, 2014.

\bibitem{DrazinJohnson}
P.G. Drazin and R.S. Johnson.
\newblock {\em Solitons: an introduction}.
\newblock Cambridge Texts in Applied Mathematics. Cambridge University Press,
  Cambridge, 1989.

\bibitem{FokasBook}
A.S. Fokas.
\newblock {\em A {U}nified {A}pproach to {B}oundary {V}alue {P}roblems},
  volume~78 of {\em CBMS-NSF Regional Conference Series in Applied
  Mathematics}.
\newblock Society for Industrial and Applied Mathematics (SIAM), Philadelphia,
  PA, 2008.

\bibitem{FokasPelloni4}
A.S. Fokas and B.~Pelloni.
\newblock A transform method for linear evolution {PDE}s on a finite interval.
\newblock {\em IMA J. Appl. Math.}, 70(4):564--587, 2005.

\bibitem{HasegawaTappert1}
A.~Hasegawa and F.~Tappert.
\newblock Transmission of stationary nonlinear optical pulses in dispersive
  dielectric fibers. {I.} anomalous dispersion.
\newblock {\em Appl. Phys. Lett.}, 23(3):142--144, 1973.

\bibitem{HasegawaTappert2}
A.~Hasegawa and F.~Tappert.
\newblock Transmission of stationary nonlinear optical pulses in dispersive
  dielectric fibers. {II.} normal dispersion.
\newblock {\em Appl. Phys. Lett.}, 23(4):171--172, 1973.

\bibitem{KesiciPelloniPryerSmith}
E.~Kesici, B.~Pelloni, T.~Pryer, and D.A. Smith.
\newblock A numerical implementation of the unified {F}okas transform for
  evolution problems on a finite interval.
\newblock {\em arXiv preprint arXiv:1610.04509}, page~20, Oct 2016.

\bibitem{NIST}
F.W.J. Olver, D.W. Lozier, R.F. Boisvert, and C.W. Clark, editors.
\newblock {\em {NIST} {H}andbook of {M}athematical {F}unctions}.
\newblock U.S. Department of Commerce, National Institute of Standards and
  Technology, Cambridge University Press, Cambridge, Washington, DC, 2010.

\bibitem{Olver}
P.J. Olver.
\newblock Dispersive quantization.
\newblock {\em Amer. Math. Monthly}, 117(7):599--610, 2010.

\bibitem{OlverPDEs}
P.J. Olver.
\newblock {\em Introduction to {P}artial {D}ifferential {E}quations}.
\newblock Undergraduate Texts in Mathematics. Springer, 2014.

\bibitem{Schrodinger}
E.~Schr{\"o}dinger.
\newblock An undulatory theory of the mechanics of atoms and molecules.
\newblock {\em The Physical Review}, 28(6):1049--1070, December 1926.

\bibitem{SmithThesis}
D.A. Smith.
\newblock {\em Spectral theory of ordinary and partial linear differential
  operators on finite intervals}.
\newblock Ph{D}, University of Reading, 2011.

\bibitem{Smith2012}
D.A. Smith.
\newblock Well-posed two-point initial-boundary value problems with arbitrary
  boundary conditions.
\newblock {\em Math. Proc. Cambridge Philos. Soc.}, 152:473--496, 2012.

\bibitem{Talbot}
H.F. Talbot.
\newblock Facts related to optical science. no. {IV}.
\newblock {\em Philos. Mag.}, 9:401--407, 1836.

\bibitem{Thaller}
Bernd Thaller.
\newblock {\em Visual {Q}uantum {M}echanics}.
\newblock Springer-Verlag--TELOS, New York, 2000.

\bibitem{VrakkingVilleneuveStolow}
M.J.J. Vrakking, D.M. Villeneuve, and A.~Stolow.
\newblock Observation of fractional revivals of a molecular wave packet.
\newblock {\em Phys. Rev. A}, 54:R37--R40, July 1996.

\bibitem{Whitham}
G.~B. Whitham.
\newblock {\em Linear and {N}onlinear {W}aves}.
\newblock John Wiley {\&} Sons, New York, 1974.
\newblock Pure and Applied Mathematics.

\bibitem{YeazellStroud}
J.A. Yeazell and C.R. Stroud.
\newblock Observation of fractional revivals in the evolution of a {R}ydberg
  atomic wave packet.
\newblock {\em Phys. Rev. A}, 43:5153--5156, May 1991.

\bibitem{Zak}
V.E. Zakharov.
\newblock Stability of periodic waves of finite amplitude on the surface of a
  deep fluid.
\newblock {\em Journal of Applied Mechanics and Technical Physics},
  9(2):190--194, 1968.

\bibitem{Zak2}
V.E. Zakharov.
\newblock Collapse of {L}angmuir waves.
\newblock {\em Sov. Phys. JETP.}, 35:908--914, 1972.

\bibitem{ZvezdinPopkov}
A.K. Zvezdin and A.F. Popkov.
\newblock Contribution to the nonlinear theory of magnetostatic spin waves.
\newblock {\em Sov. Phys. JETP.}, 57(2):350--355, February 1983.

\end{thebibliography}

\end{document}